\renewcommand{\P}{\mathbb{P}}
\crefname{hypothesis}{Hypothesis}{Hypotheses}
\title{Community-Size Biases in Statistical Inference of Communities in Temporal Networks\thanks{Submitted to the editors DATE.
\funding{The first author was supported by the National Science Foundation Graduate Research Fellowship Program (under Grant No.\ DGE-2034835). }}}
\author{Theodore Y. Faust\thanks{Department of Mathematics, Southern Methodist University, Dallas, TX, USA
  (\email{tyfaust@smu.edu}).} \and Arash A. Amini\thanks{Department of Statistics and Data Science, University of California, Los Angeles, Los Angeles, CA, USA 
  (\email{aaamini@stat.ucla.edu}).}
\and Mason A. Porter\thanks{Department of Mathematics and Department of Sociology, University of California, Los Angeles, Los Angeles, CA, USA and Santa Fe Institute, Santa Fe, NM, USA
  (\email{mason@math.ucla.edu}).}}
\newcommand\geomb{\overline{\text{Geom}}}
\newcommand\geom{\text{Geom}}
\newcommand{\given}{\,|\,}
\DeclareMathOperator{\Unif}{Unif}
\DeclareMathOperator{\Dir}{Dir}
\begin{document}

\maketitle

\begin{abstract}
In the study of time-dependent (i.e., temporal) networks, researchers often examine the evolution of communities, which are sets of densely connected sets of nodes that are connected sparsely to other nodes.
  An increasingly prominent approach to studying community structure in temporal networks is statistical inference.
  In the present paper, we study the performance of a class of statistical-inference methods for community detection in temporal networks. We represent temporal networks as multilayer networks, with each layer encoding a time step, and we illustrate that 
    statistical-inference models that generate community assignments via either a uniform distribution on community assignments or discrete-time Markov processes are biased against generating communities with large or small numbers of nodes. 
  In particular, we demonstrate that statistical-inference methods that use such generative models tend to poorly identify community structure in networks with large or small communities. 
To rectify this issue, we introduce a novel statistical model 
that generates the community assignments of the nodes in given layer (i.e., at a given time) using all of the community assignments in the previous layer.
  We prove results that guarantee that our approach greatly mitigates the bias against large and small communities, so using our generative model is beneficial for studying community structure in networks with large or small communities.
   Our code is available at \url{https://github.com/tfaust0196/TemporalCommunityComparison}.
\end{abstract}

% REQUIRED
\begin{keywords}
  community structure, statistical inference, temporal networks, multilayer networks 
\end{keywords}

% REQUIRED
\begin{AMS}
  91D30, 05C82, 62F15
\end{AMS}

%%%%%

%%%%%

\section{Introduction}\label{intro}

{In network analysis, it is common to study ``communities" of nodes that are connected densely to each other but connected sparsely to other nodes~\cite{Porter09,Fortunato16,Newman18}. Investigations of community structure in networks have led to insights in the study of social networks \cite{Traud2012,Waskiewicz2012,Ozer16}, economic networks \cite{Brauksa13}, citation networks \cite{Ji16}, biological networks \cite{Chen13}, and many other applications. It is important in many situations --- including in the analysis of face-to-face contacts \cite{Fournet14}, transportation \cite{Morer20}, legislation cosponsorships \cite{lee2016, Neal20}, and other applications --- to consider relationships and/or interactions that change with time. 
One can represent such time-dependent data as a temporal network, with the entities and/or the ties between entities changing with time~\cite{Holme12,Holme15,Holme19}. A temporal network is a sequence of networks in which each network encodes the relationships between entities at one time point or time period.
A variety of approaches have been developed to algorithmically detect communities in temporal networks \cite{Rossetti18}. These approaches include statistical inference~\cite{Peixoto17,Sun10,Yang11}, optimization of various objective functions \cite{Mucha10, Bazzi16, Mazza23}, non-negative matrix and tensor factorization \cite{Dunlavy11,Gauvin14,Rossi13}, information-theoretic methods (such as those that minimize description length) \cite{Sun07,Rosvall10,Peixoto17}, local methods \cite{jeub2015,Interdonato17,Li20}, and others.

In the present paper, we study community detection in temporal networks using statistical inference~\cite{peixoto2023}. In an inferential approach, one uses a generative model to algorithmically detect a desired type of network structure. There are statistical-inference approaches to detect many types of mesoscale structures in networks \cite{Zhang15, Wang16, Polanco23}, and there are a particularly large number of such methods for {community structure \cite{peixoto2023,Fortunato16,gaffi2026}}. 
Investigations of community structure in temporal networks using statistical inference have led to insights into a wealth of applications, such as international migration \cite{danchev2018}, academic coauthorship networks \cite{Yang11}, face-to-face social contact networks in animals and high-school students \cite{Matias17}, and role detection in bicycle-sharing networks \cite{carlen2022}. The statistical-inference methods that we consider use a Bayesian approach to sample from a posterior distribution and obtain a network's community structure~\cite{peixoto2019}.

It is desirable to base the generative models that one employs to statistically infer community structure on realistic assumptions~\cite{peixoto2023}.
A generative model that relies on unrealistic assumptions can have a detrimental impact on the accurate detection of communities in networks, and both overfitting and underfitting occur in existing community-detection methods~\cite{ghasemian2020}. Unfortunately, it often is not straightforward to avoid unrealistic assumptions, and many generative models that appear to make reasonable choices include such assumptions. One example of a problematic assumption involves the probability distribution of the number of nodes in a community. 
It is common for generative models of temporal community structure to use a discrete-time Markov process to assign communities \cite{Yang11, Ghasemian16, Matias17, Bazzi20}. However, due to the choices in many such models, community-size distributions become increasingly localized over time. To demonstrate this behavior, we consider the Markov-process models of Yang et al.~\cite{Yang11} and Bazzi et al.~\cite{Bazzi20}. We show that the community-size distributions that arise from these models become increasingly localized over time.
Therefore, at later times of a temporal network, the probability of generating a small or large community is much smaller than the probability of generating a community of moderate size. Real-world temporal networks can have communities of disparate sizes \cite{jeub2015,Palla07, Matias17,ghasemian2020}, including ones that are small or large, so it is desirable that generative models of networks are able to successfully infer small and large communities. 

To mitigate the issue of generative models producing networks that are biased against small and large communities, we introduce a novel community-evolution approach that yields community-size distributions with substantially less localization than 
Markov-process models. Our approach generates the community assignments of a network at a given time using the community assignments of all nodes at the previous time, rather than updating each node separately at each time. 
For a multilayer representation of a temporal network~\cite{kivela2014}, when evolving the community assignments from one layer to the next (i.e., from one time to the next), we generate the number of nodes with community assignments that change between layers, instead of generating the community assignments of each node separately. 
This choice arises from the idea of ``exchangeability" \cite{Bernardo96}. In our context, it signifies that one should not distinguish between nodes with the same community assignment. We demonstrate that statistical-inference methods that use our generative model perform more accurately than the Yang et al.~\cite{Yang11} and Bazzi et al.~\cite{Bazzi20} Markov-process models in networks with small or large communities.}

Our paper proceeds as follows. In Section \ref{Notation}, we introduce our main notation. In Section \ref{sec:StatInf}, we introduce the Yang et al.~\cite{Yang11} and Bazzi et al.~\cite{Bazzi20} Markov-process methods and our statistical-inference approach in more detail. 
{In Section \ref{CommSizeDistLoc}, we demonstrate empirically that our approach leads to significantly less-localized community-size distributions than the Markov-process methods. We also prove a result about the behavior of the single-layer community-size distributions in our approach in the limit of infinitely many layers.} In Section \ref{ComparisonSI}, we demonstrate that our approach identifies communities more accurately in synthetic networks with small or large communities. In Section \ref{FinalConclusions}, we conclude and discuss future directions. In Appendix \ref{Expressions}, we derive necessary results that are required for the statistical-inference approaches we consider. Our code is available at \url{https://github.com/tfaust0196/TemporalCommunityComparison}.

%%%%

\section{Notation}\label{Notation}

In this section, we introduce notation for both single-layer (i.e., ``monolayer") networks and temporal networks, which we represent as multilayer networks in which each layer corresponds to one time step (which can represent a time point or period) \cite{kivela2014,aleta25}. To simplify notation, we {denote the set $\{1,\ldots,N\}$ by $[N]$.}

We first discuss our notation for monolayer networks. For simplicity, we consider unweighted and undirected networks without self-edges or multi-edges.
A monolayer network is a graph $G = (V,E)$, which consists of a set {$V = [N]$} of nodes (i.e., vertices) and a set $E \subseteq V \times V$ of edges.
We denote an undirected edge by $(i,j)$. We represent a monolayer network $G$ using an adjacency matrix $A \in \{0,1\}^{n \times n}$, where $A_{ij} = 1$ if nodes $i$ and $j$ are connected directly by an edge (i.e., they are adjacent) and $A_{ij} = 0$ otherwise.

We represent a temporal network as a multilayer network in which each layer encodes the adjacencies between nodes at its associated time step. We model a temporal network as a sequence of network layers (i.e., times) $\ell \in {[L]}$.
At each time $\ell \in {[L]}$, we consider all nodes $i \in {[n]}$. 
We refer to an instantiation of a node in a given layer as a node-layer $(i,\ell) \in {[n] \times [L]}$. We again use an adjacency representation, so we have a sequence $(A^{(1)},\ldots,A^{(L)})$, with $A^{(\ell)} \in \{0,1\}^{n \times n}$ for each layer $\ell$. 
We assume that the networks are unweighted and undirected, so $A^{(\ell)}_{ij} = 1$ if node-layers $(i,\ell)$ and $(j,\ell)$ are adjacent and $A^{(\ell)}_{ij} = 0$ if they are not adjacent. For notational convenience, we let $A$ denote the sequence $(A^{(1)},\ldots,A^{(L)})$. Technically, this is an abuse of notation, because we already used $A$ refer to a single adjacency matrix for a monolayer network, but we always clearly state whether we are considering a monolayer network or a temporal network. As general terminology, we refer to $A$ as an ``adjacency structure". For convenience (and despite the additional associated abuse of notation), we also sometimes refer to $A$ as a ``network".

We also need some other foundational notation and terminology. Let
\begin{equation}\label{DeltaDefinition}
	\Delta^{k - 1} := \{ \pi = (\pi_i) \in \mathbb R^k : \pi_i \ge 0 \text{ for all } i \text{ and } \sum_{i = 1}^k \pi_i = 1 \}
\end{equation}
 denote the ``probability simplex" in $\mathbb R^k$.
Additionally, a ``weak composition" of a non-negative integer $n$ into $k$ parts is an ordered $k$-tuple $(n_1,\dots,n_k)\in\mathbb{Z}_{\ge 0}^k$ such that $\sum_{r = 1}^k n_r = n$. 
Finally, we recall the indicator function $\mathbb{1}$, where $\mathbb{1}\{\cdot\} = 1$  if all of its arguments hold and $\mathbb{1}\{\cdot\} = 0$ otherwise.

%%%%

\section{Statistical-Inference Methods for Community Detection} \label{sec:StatInf}

In this section, we discuss statistical-inference methods for community detection. In Section \ref{StatInf}, we discuss how the methods that we consider use a generative model to infer community structure in networks. In Sections \ref{CommAssnSL} and \ref{CommAssnML}, we introduce the generative models that we will consider, with a particular focus on each method's probability distribution for community assignments.

%%%

\subsection{Statistical Inference} \label{StatInf}

In a statistical-inference method for community detection, one chooses a generative model for networks with community structure \cite{peixoto2023}. To do this, one creates a generates random adjacency structure $A$, which is an adjacency matrix for a monolayer network and is a sequence of adjacency matrices for a temporal network, according to some probability distribution $\P(A)$. We choose the probability distribution $\P(A)$ so that the randomly-generated networks have community structure. Let $k$ denote the number of communities in a network. Given $k$, a generative model first generates a vector $g \in [k]^n$ (if we seek a monolayer network) or a matrix $g \in [k]^{n \times L}$ (if we seek a temporal network) of community assignments. Let $g_i \in [k]$ denote the community assignment of node $i$ of a monolayer network, and let $g_{(i,\ell)} \in [k]$ denote the community assignment of node-layer $(i,\ell)$ of a temporal network. The vector 
\begin{equation} \label{gsubell}
	g_{(\ell)} = (g_{(1,\ell)},\ldots,g_{(n,\ell)})
\end{equation} 
encodes the community assignment of each node-layer in layer $\ell$ of a temporal network. 

Using the generative model, we generate an adjacency structure $A$ with an ``expected'' community structure $g$. We expect nodes (respectively, node-layers) with the same community assignment to be more likely to be adjacent to each other than to nodes (respectively, node-layers) with different community assignments. One commonly-employed generative model to generate a monolayer network $A$ given a community assignment $g$ is a \emph{stochastic block model} (SBM)~\cite{peixoto2023,peixoto2019,abbe2018}. In the simplest type of SBM, given a matrix $\psi \in [0,1]^{k \times k}$, we place an edge between nodes $i$ and $j$ independently with probability $\psi_{g_ig_j}$ for each pair $\{i,j\}$ of distinct nodes. 
If the diagonal elements $\psi_{rr}$ are larger than the off-diagonal elements $\psi_{rs}$ (with $r \ne s$), we expect that nodes with the same community assignments are connected more densely than nodes with different community assignments. In other words, we expect that $A$ has community structure that is specified by $g$. 

To make our above intuition precise, let $\P(A|g)$ denote the probability distribution that we obtain a network (i.e., an adjacency structure) $A$ given the community structure $g$.
Let $\P(g)$ denote the probability distribution of the community assignments. This distribution describes the probability of a given community structure $g$ independent of the observed adjacency structure $A$. The primary foci of the present paper are the assumptions about this distribution and the effect of these assumptions on the performance of statistical-inference methods for community detection. In particular, we focus on assumptions about the probability distribution of the number of nodes in a community (i.e., the size of a community). Such size assumptions also affect the behavior and performance of statistical-inference methods that infer a discrete group assignment for each node-layer. This includes the identification of both community structure and other mesoscale structures (e.g., core--periphery structure) in temporal networks. Faust and Porter \cite{Faust25} discussed the effects of similar size assumptions on the behavior of statistical-inference algorithms for core--periphery detection.

The generative model of a random network $A$ with community structure $g$ is
\begin{equation}\label{Bayesianform}
	\P(A,g) = \P(A|g)\P(g)\,.
\end{equation}
Because we fix the number $k$ of communities, one can think of $k$ as an input of the examined methods. We assume a fixed $k$ throughout the entire statistical-inference process, so we omit $k$ from the notation for each probability distribution. 

We consider statistical-inference methods that use a Bayesian approach to infer the community structure $g$ of a network $A$ (which is either a monolayer network or a temporal network). Given $\P(A|g)$ and $\P(g)$, the posterior distribution for the inferred community structure $g$ given the observed network $A$ is
\begin{equation}\label{BayesianggivenA}
	\P(g|A) = \frac{\P(A|g) \P(g)}{\P(A)} \, .
\end{equation}

There are a variety of approaches to sample from $\P(g|A)$ \cite{Yang11, Funke19}. 
In Section \ref{ComparisonSI}, we briefly discuss the specific sampling approaches that we use in our comparisons of various statistical-inference methods. 
However, our focus is the effect of the choice of generative model on the results of statistical inference, so we do not include a detailed discussion of the benefits and drawbacks of different sampling approaches.
For comparisons of such sampling approaches, see \cite{Park22} and \cite{Lye19} for a general comparison and see \cite{Funke19} for a discussion of the application of various methods of statistical inference of community structure.

There are also a variety of generative models $\P(A|g)$ that generate a network $A$ with expected community structure $g$. (See \cite{Funke19} for a review of such models.) However, because our primary concerns are (1) the de facto assumptions about the community assignments that arise from different choices of community-assignment probability distributions $\P(g)$ and (2) the effects of these assumptions on statistical-inference results, we also limit our discussion to the generative models $P(A|g)$ that we use in our comparisons. 

%%%%

\subsection{Community-Assignment Probability Distributions for Monolayer Networks} \label{CommAssnSL}

As we discussed in Section \ref{StatInf}, our primary goals are to examine (1) the choices of community-assignment probability distributions $\P(g)$ in various generative models of community structure in temporal networks and (2) the influence of these choices on the results of their corresponding statistical-inference approaches for community detection. We now discuss each of the choices of $\P(g)$ in detail. In Sections \ref{unifDistSL} and \ref{nodewiseSL}, we 
consider generative models of monolayer networks, as the examined generative models of temporal networks are extensions of these models. In our discussion, we focus on properties of $\P(g)$ that affect the performance of the corresponding statistical-inference methods. In Section \ref{CommSizeDistLoc}, we support these claims using numerical computations. In Appendix \ref{sec:monolayer-theory}, we give additional theoretical results about community-size localization in monolayer networks. In particular, in Theorem \ref{theorem-fin}, we prove a deep connection between the hierarchical Bayesian model in Section \ref{nodewiseSL} and the limiting behavior of the community-size distribution of such a model.

%%%%

\subsubsection{{Uniform Distribution on Community Assignments}}\label{unifDistSL}

Our first class of community-assignment probability distributions $\P(g)$ is uniform distributions on community assignments. This type of distribution entails that
\begin{equation}\label{unifSL}
	\P(g) = \frac{1}{k^n}
\end{equation}
 for all community assignments $g \in [k]^n$. Equivalently, such a model generates a community assignment $g$ by generating each node's community assignment $g_i$ independently at random from a uniform distribution on $[k]$.
 
It is common to use a uniform distribution on community assignments (see, e.g., \cite{Hastings06, Ghasemian16}). However, Polanco and Newman~\cite{Polanco23} noted that this distribution choice assumes implicitly that the distributions of the sizes of communities are highly localized (and specifically that the probability of generating a community structure with large or small communities is very small), which typically is an unrealistic assumption in practice.

%%%%

\subsubsection{{Independent Non-Uniform} Community Assignments} \label{nodewiseSL}

One can also obtain a single-layer community-assignment probability distribution using a hierarchical Bayesian approach. To do this, we
{independently sample non-uniform} community assignments \cite{Yang11}.
{In such an approach,
one independently samples the community assignments of each node from a distribution that is not uniform.}
Specifically, given a probability vector $\pi \in \Delta^{k - 1}$, one samples the community assignment $g_i$ of each node $i$ independently from the distribution $\pi$. We denote this procedure by
\begin{equation}
	g_i \given \pi \sim \pi \,.
\end{equation}
Equivalently, the probability distribution of $g$ is
\begin{equation}\label{nodewisecond}
	\P(g|\pi) = \prod_{r = 1}^k \pi_r^{{n_r}} \, ,
\end{equation}
where $n_r$ is the number of nodes $i$ with $g_i = r$. In other words, $n_r$ is the size of community $r$. 

To remove the dependence on $\pi$ in \eqref{nodewisecond}, we sample $\pi$ from the Dirichlet distribution with {parameter $\gamma = (\gamma_1, \dots, \gamma_k)$.} That is,
\begin{equation}\label{dirichlet}
	\pi \sim \text{Dir}(\gamma) \,.
\end{equation}
In the present paper,
we always choose {$\gamma = (1,\ldots,1)$}, which entails that we sample $\pi$ using the uniform distribution on the probability simplex $\Delta^{k - 1}$.
For this 
choice of $\gamma$, the resulting distribution on $g$ has the following equivalent interpretation: 
\begin{enumerate}
 \item[(1)] {Sample the sizes $n_1,\ldots,n_k$ of communities $1,\ldots,k$ uniformly at random from the set $\mathcal{C}_n^k$ of ordered pairs of $k$ non-negative integers that sum to $n$.}
 \item[(2)] {Sample the community assignment $g$ uniformly at random from the set of all community assignments with $n_r$ nodes in community $r$ for all $r \in [k]$.} 
\end{enumerate}
We summarize this equivalence in the Proposition~\ref{prop:uniform-compositions}, which we prove in Appendix~\ref{app:remaining:proofs}.

\begin{proposition}\label{prop:uniform-compositions}
	Let $\pi \sim \Dir(1,\ldots,1)$ and $g_1,\ldots,g_n\sim\pi$.
	Additionally, let $n_r = \sum_{i = 1}^n\mathbb{1}\{g_i = r\}$ be the size of community $r$ and let ${\bm n} = (n_1,\ldots,n_k)$.
	We then have that $\bm n$ is {uniform} on the set $\mathcal{C}_n^k$ of weak compositions of $n$ into $k$ parts. Namely, 
	\begin{equation*}
		\mathbb{P}(\bm n = c) = \frac{1}{\binom{n + k - 1}{k - 1}} \quad \text{for all }\, c \in\mathcal{C}_n^k \,.
	\end{equation*}
	Moreover, conditional on the community sizes $c_1,\ldots,c_r$, the community labels $g_1,\ldots, g_n$ are uniform over assignments with those counts.
	Specifically,
	\begin{equation*}
		\mathbb{P}(g \given \bm n = c) = \frac{c_1! \times \cdots \times c_k!}{n!} 
	\end{equation*}
	 for any $g$ such that
	 $c_r = \sum_{i = 1}^n\mathbb{1}\{g_i = r\}$
	and $\mathbb{P}(g \given \bm n = c) = 0$ otherwise.
	\end{proposition}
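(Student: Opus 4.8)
The plan is to integrate out the Dirichlet prior to obtain the marginal law of the assignment vector $g$, then sum over all assignments with a fixed vector of community sizes to obtain the law of $\bm n$, and finally divide to read off the conditional law of $g$ given $\bm n$. This is the classical Dirichlet--multinomial computation, specialized to the flat prior $\gamma = (1,\dots,1)$.

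First I would record that the $\Dir(1,\dots,1)$ density is constant on $\Delta^{k-1}$, with value $(k-1)!$ (the reciprocal of the Lebesgue volume of the simplex). Combining this with the conditional law \eqref{nodewisecond}, for any fixed $g \in [k]^n$ with counts $n_r = \sum_{i=1}^n \mathbb{1}\{g_i = r\}$ one computes
\[
	\P(g) = (k-1)! \int_{\Delta^{k-1}} \prod_{r=1}^k \pi_r^{n_r}\, d\pi = (k-1)!\,\frac{\prod_{r=1}^k n_r!}{(n+k-1)!}\,,
\]
the middle equality being the Dirichlet integral $\int_{\Delta^{k-1}} \prod_r \pi_r^{a_r - 1}\, d\pi = \prod_r \Gamma(a_r)/\Gamma\bigl(\sum_r a_r\bigr)$ evaluated at $a_r = n_r + 1$. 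The key structural observation is that $\P(g)$ depends on $g$ only through its vector $\bm n$ of community sizes.

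Second, for $c = (c_1,\dots,c_k) \in \mathcal{C}_n^k$ I would sum $\P(g)$ over the $\binom{n}{c_1,\dots,c_k} = n!/\prod_r c_r!$ assignments $g$ that realize $c$; the factorials cancel, giving $\P(\bm n = c) = n!\,(k-1)!/(n+k-1)! = 1/\binom{n+k-1}{k-1}$, which is free of $c$ and hence uniform on $\mathcal{C}_n^k$. Third, since $\P(g,\bm n = c) = \P(g)$ whenever $g$ realizes $c$ (and is $0$ otherwise), dividing the two displays gives $\P(g \given \bm n = c) = c_1!\cdots c_k!/n!$ for compatible $g$ and $0$ otherwise; this is precisely the uniform distribution on the $n!/\prod_r c_r!$ compatible assignments.

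I do not anticipate a genuine obstacle; the only care needed is bookkeeping of the normalizing constants (the factor $(k-1)!$ and the Gamma values) and the combinatorial simplification $n!\,(k-1)!/(n+k-1)! = 1/\binom{n+k-1}{k-1}$. A convenient consistency check is that $\sum_{c \in \mathcal{C}_n^k}\P(\bm n = c) = |\mathcal{C}_n^k|/\binom{n+k-1}{k-1} = 1$, since $|\mathcal{C}_n^k| = \binom{n+k-1}{k-1}$ by a stars-and-bars count. An alternative route avoids the integral altogether: recognize the sequential draw of $g_1,\dots,g_n$ as a P\'olya urn with unit initial counts in each of the $k$ colors and induct on $n$, but the direct integration is the most transparent.
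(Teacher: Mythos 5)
Your proposal is correct and follows essentially the same route as the paper: both marginalize the flat Dirichlet prior (the Dirichlet--multinomial computation), observe that all assignments with the same count vector are equiprobable, and read off the uniform conditional. The only cosmetic difference is that you derive the marginal $\P(g)$ via the Dirichlet integral and then sum over compatible assignments, whereas the paper quotes the Dirichlet--multinomial PMF for $\bm n$ directly and obtains the conditional by conditioning on $\pi$.
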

	
This latter
form of the procedure is
a popular approach for setting 
a prior on community assignments in monolayer-network models \cite{Riolo17}.

Although this class of community-assignment probability distributions is perhaps less intuitive than using uniform distributions on community assignments, we show in Section~\ref{CSSL}
that it greatly mitigates localization issues in community-size distributions. In Theorem \ref{theorem-fin}, we prove a deep connection between the present subsubsection's hierarchical Bayesian model and the limiting behavior of the community-size distribution of such a model.

%%%%

\subsection{Community-Assignment Probability Distributions for Temporal Networks} \label{CommAssnML}

We now build on our discussion of choices of community-assignment probability distributions $\P(g)$ in generative models of monolayer networks (see Section \ref{CommAssnSL}) to examine these choices in temporal networks.

In the present discussion of $\P(g)$, we omit closed-form expressions for some choices of $\P(g)$ because deriving them is cumbersome.
For those probability distributions, we instead provide a procedure to sample from $\P(g)$. In Appendix \ref{Expressions}, we include derivations of the closed-form expressions of $\P(g)$ that are required to sample from the posterior distributions $\P(g|A)$. In Section \ref{Methodology}, we discuss these sampling methods in detail. 

%%%%

\subsubsection{Uniform Distribution on Community Assignments}	
\label{unifDistML}

As with monolayer networks (see Section \ref{unifDistSL}), one can employ a uniform distribution on community assignments in the study of temporal networks. In contrast to the monolayer setting, because a uniform distribution avoids correlation between the community assignments in different layers (which is generally desirable), Researchers avoid this choice for inference applications.

Analogously to \eqref{unifSL}, we have
\begin{equation}\label{unifprobdist}
	\P(g) = \frac{1}{k^{nL}}
\end{equation}
for all community assignments $g \in [k]^{n \times L}$. As with monolayer networks, it is equivalent to generate a community assignment $g$ by generating the community assignment $g_{(i,\ell)}$ of each node-layer independently at random from a uniform distribution on $[k]$. 

As with monolayer networks, this community-assignment probability distribution implicitly makes the unrealistic assumption that the community-size distributions are highly localized.

%%%%

\subsubsection{Discrete-Time Markov-Process Models} \label{NodewiseEvolML}

Another class of community-assignment probability distributions use a discrete-time Markov process to generate community assignments \cite{Yang11, Ghasemian16, Matias17, Bazzi20}. 

One uses a nodewise approach (see Section \ref{nodewiseSL}) to generate the community assignments for the first layer of a temporal network. In particular, to sample the community assignment $g_{(i,1)}$, one follows the procedure
\begin{alignat}{3}
    \pi & &&\;\sim \text{Dir}(\gamma) \, , \notag \\
    g_{(i,1)} &\given \pi &&\;\sim\; \pi \, , \label{DirichletNodewise}
 \end{alignat}
where we recall that $\gamma = (1,\ldots,1)$. In the present paper, when we provide a sequence of samples from probability distributions (i.e., a ``sampling procedure''), such as in \eqref{DirichletNodewise}, we perform the sampling in sequence from top to bottom. 

For the second and subsequent layers, one generates the community assignments of each node-layer using a process that depends only on the community assignment of the same node in the previous layer. To do this, one inputs a ``transition kernel'' $K = (K^{(2)}, \ldots, K^{(L)})$, where each $K^{(\ell)}$ is a matrix whose rows sum to $1$, and ``laziness parameters"
 $\alpha_\ell \in [0,1]$.  The $\ell$th laziness parameter $\alpha_\ell$ is the probability that one copies the community assignment from that of the previous layer when generating the community assignment for a node in layer $\ell$. For notational convenience, we write $\alpha = (\alpha_2,\ldots,\alpha_L)$.

One then generates $g_{(i,\ell)}$ according to the conditional distribution
\begin{equation}\label{condsinglenode}
	\P( g_{(i,\ell)} = r \given g_{(i,\ell-1)} =s, \alpha, K) = \alpha_\ell \, \mathbb{1}\{r = s\} + (1-\alpha_\ell) K^{(\ell)}_{sr} \, ,
\end{equation}
where the indicator function $\mathbb{1}\{\cdot\} = 1$ 
if all of its arguments hold
and $\mathbb{1}\{\cdot\} = 0$ 
otherwise.
Sampling the community assignments $g_{i,\ell}$ according to the conditional distributions (\ref{condsinglenode}) is equivalent to using
\begin{equation*}
	\{g_{(i,\ell)}\}_{\ell=2}^L  \given \alpha,
	K \;\sim\; \text{Markov}\left(\left\{\alpha_{\ell} I + (1 - \alpha_\ell) K^{(\ell)}\right\} \right) \, ,
\end{equation*}	
where $\text{Markov}(\{K^{(\ell)}\})$ denotes a discrete-time Markov process with transition kernels $\{K^{(\ell)}\}$.
We thus can write
\begin{alignat}{3}
    \pi & &&\;\sim\; \text{Dir}(\gamma) \, , \notag \\
    g_{(i,1)} &\given \pi &&\;\sim\; \pi \,, \notag\\
    \{g_{(i,\ell)}\}_{\ell=2}^L & \given \alpha, K &&\;\sim\; \text{Markov}\left( \left\{\alpha_{\ell} I + (1-\alpha_\ell) K^{(\ell)}\right\} \right) \, .  \label{MarkovProcessModelNoPrior}
 \end{alignat}

To complete the specification of the model, we need to provide priors on the laziness parameters $\alpha$ and the transition kernel $K$. As we discussed in Section \ref{intro}, many common choices of priors on $\alpha$ and $K$ lead to community-size distributions that become increasingly localized over time. In the present paper, we consider two representative choices of priors for these parameters. 
 
The first choice of priors is from the method of Yang et al.~\cite{Yang11}. In this method, the laziness parameters $\alpha$ are $(0,0,\ldots,0)$. To choose the transition kernel $K$, one first assumes that $K^{(\ell)} = \tilde{K}$ for each layer $\ell \in \{2,\ldots,L\}$. One then imposes an independent Dirichlet prior with parameters $(\mu_{s_1},\ldots,\mu_{s_k})$ on each row of $\tilde{K}$. That is,
\begin{equation} \label{YangPrior}
	\tilde{K}_{s*} \sim \text{Dir}(\mu_{s*})\,,
\end{equation}
where $\tilde{K}_{s*} = (\tilde{K}_{s1},\ldots,\tilde{K}_{sk})$ and $\mu_{s*} = (\mu_{s1},\ldots,\mu_{sk})$.
In the present paper, we assume that $\mu_{s*} = (1,\ldots,1)$ for each $s \in [k]$.
 
In our second choice of priors, we use a method from Bazzi et al.~\cite{Bazzi20} along with our own choices of prior distributions of certain parameters.
In the transition kernel $K$, each probability vector $K^{(\ell)}_{s*}$ is the same for all $s \in [k]$. That is, $K^{(\ell)}_{s*} = \kappa^{(\ell)}$. 
 In the present paper, we sample each entry $\alpha_\ell$ of $\alpha$ uniformly at random from $[0,1]$. 
 We then impose an independent Dirichlet prior with parameters 
 $\mu^{(\ell)} = (\mu^{(\ell)}_1, \ldots, \mu^{(\ell)}_k)$ on 
$\kappa^{(\ell)}$ for each layer $\ell \in \{2,\ldots,L\}$. 
In summary, we set
\begin{align*}
        \alpha_\ell &\sim \text{Unif}(0,1) \, , \\
        \kappa^{(\ell)} &\sim \text{Dir}(\mu^{(\ell)}) \, , \notag \\
        K^{(\ell)}_{s*} &= \kappa^{(\ell)} \, ,
\end{align*}
where $K^{(\ell)}_{s*} =  (K^{(\ell)}_{s1},\ldots,K^{(\ell)}_{sk})$.
We assume that $\mu^{(\ell)} = (1,\ldots, 1)$ for each layer $\ell \in \{2,\ldots,L\}$. For brevity, we refer to this choice of prior distributions as the ``Bazzi et al. prior'' despite the fact that
we choose some of the prior distributions ourselves.

The Yang et al.\ \cite{Yang11} and Bazzi et al.\ \cite{Bazzi20} Markov-process approaches both have community-size distributions with less localization than those for a uniform distribution on community assignments (see Section \ref{unifDistML}). However, in both approaches, the single-layer community-size distributions are more localized for later layers than for earlier layers, which leads to moderate localization in the community-size distributions of a temporal network (see Section \ref{CSML}). Because many Markov-process approaches (e.g., \cite{Ishiguro10, Matias17}) have similar choices of prior distributions to the Yang et al.\ \cite{Yang11} and Bazzi et al.\ \cite{Bazzi20} approaches, we also expect increasing localization over time in other Markov-process models.

%%%%%

\subsubsection{Layerwise-Exchangeable Count-Splitting (LECS) Prior}
\label{novelML}
As we discussed in Section~\ref{NodewiseEvolML}, discrete-time Markov updates tend to localize single-layer community sizes over time. 
To mitigate this issue, we introduce a novel layerwise-exchangeable count-splitting (LECS) prior using a temporal extension of exchangeability~\cite{Bernardo96}.
The core idea is that the community assignments in a 
layer should not distinguish between nodes with indistinguishable community assignments in
the previous layer.

More specifically, the LECS prior specifies a layerwise update that (1) is \emph{node-exchangeable} (i.e., the nodes are indistinguishable) within each community in the previous layer and (2) \emph{microcanonical}\footnote{We borrow the term ``microcanonical" from statistical physics \cite{sethna2021}.} (i.e., uniform over assignments conditional on the given community sizes) at the labeling step. 
We classify each of the $n_r^{(\ell - 1)}$ nodes in each community $r$ of layer $\ell - 1$ into
``remainers" (which remain in $r$ in layer $\ell$) and ``movers" (which we reassign to some community $s \neq r$). We draw the number of remainers from a truncated geometric distribution in a process that we call ``geometric retention"). We then allocate the community assignments of the movers 
by first generating the community sizes by uniformly sampling a weak composition with $k - 1$ parts and then, conditional on the community sizes, assigning the community labels 
uniformly at random. Together, geometric retention and uniformly sampling a weak composition for each community mitigate
the increase in localization with time (see Section~\ref{ArashNumerics}).

\paragraph{Preliminaries}

Recall that a \emph{weak composition} of a non-negative integer $n$ into $k$ parts is an ordered $k$-tuple $(n_1,\dots,n_k)\in\mathbb{Z}_{\ge0}^k$ such that $\sum_{r=1}^k n_r = n$. Let $\mathcal{C}_n^k$ be the set of weak compositions. For an index set $\mathcal{I}\subseteq [n]$ of nodes and a weak composition $c = (c_1,\dots,c_k)\in\mathcal{C}_{|\mathcal{I}|}^k$, we define the set
\begin{equation*}
	\mathsf{Assign}(\mathcal{I},c) = \left\{g\in [k]^{\mathcal{I}}: |\{i\in\mathcal{I}: g_i = r\}| = c_r\ \text{for all} \ r \in [k]\right\} 
\end{equation*}
and denote the uniform distribution on this set of community assigments by $\text{Unif}(\mathsf{Assign}(\mathcal{I},c))$.
Suppose that $\{\mathcal{I}_r\}_{r = 1}^k$ is a partition of $[n]$ and that $g_r\in[k]^{\mathcal{I}_r}$ for each $r$. Let $\bigoplus_{r = 1}^k g_r \in [k]^n$ denote the blockwise concatenation of community assignments. That is, 
$\bigoplus_{r = 1}^k g_r \in [k]^n$ is a map that satisfies the property $(\bigoplus_{r = 1}^k g_r)\!\mid_{\mathcal{I}_r} = g_r$.

\paragraph{First layer}
As we mentioned previously, we assume exchangeability, so the nodes in the first layer are indistinguishable from each other. 
To ensure this, we use a nodewise approach for the first layer (see (\ref{DirichletNodewise})) and thus have
\begin{alignat}{3}
    \pi & &&\;\sim\; \text{Dir}(\gamma) \, , \notag \\
    g_{(i,1)} &| \pi &&\;\sim\; \pi \,, \label{novelFirstLayer}
 \end{alignat}
where $\gamma = (1,\ldots,1)$.

\paragraph{Layerwise evolution}
Given the community assignments $g_{(\ell - 1)}$ for $\ell\ge2$, let 
\begin{equation*}
	\mathcal{G}_r^{(\ell - 1)} = \{i \in [n] :  g_{(i,\ell - 1)} = r \}
\end{equation*}
be the set of nodes in community $r$ in layer $\ell - 1$, which has size $n_r^{(\ell - 1)} = |\mathcal{G}_r^{(\ell - 1)}|$.
For each $r \in [k]$, we generate the layer-$\ell$ assignments for nodes in $\mathcal{G}_r^{(\ell - 1)}$ via a two-stage process.

In the first stage, we generate a ``transition-count vector" $\mathbf{c}_r^{(\ell)} = (c_{r1}^{(\ell)},\dots,c_{rk}^{(\ell)})\in\mathcal{C}_{n_r^{(\ell - 1)}}^k$, where $c_{rs}^{(\ell)}$ is the number of nodes from community $r$ in layer $\ell - 1$ that transition to community $s$ in layer $\ell$. 
As we discussed previously, we do this by sampling the number of remainers and then uniformly splitting the set of movers:
\begin{align}
\begin{split}\label{eq:lecs-count-split}
	p_{r,\ell} &\sim \text{Unif}(0,1) \,, \\
	c_{rr}^{(\ell)}\mid p_{r,\ell} &\sim \geomb\!\left(n_r^{(\ell - 1)},\,p_{r,\ell}\right) \,,\\
	\big(c_{rs}^{(\ell)}\big)_{s\neq r}\ \Big|\ c_{rr}^{(\ell)}\ &\sim\ \text{Unif}\!\left(\mathcal{C}^{k-1}_{\,n_r^{(\ell - 1)} - c_{rr}^{(\ell)}}\right) \,,
\end{split}
\end{align}
where $\geomb(n,p)$ is the truncated geometric distribution on $\{0,\dots,n\}$ with probability mass function (PMF) $\P(m) \propto p^{n - m}$. This distribution biases the number $m = c_{rr}^{(\ell)}$ of remainers toward the total size 
$n = n_r^{(\ell - 1)}$ for large retention probabilities $p$.

In the second stage, we do {microcanonical labeling}. Conditional on the layer-$\ell$ community sizes $\mathbf{c}_r^{(\ell)}$, we uniformly sample the new assignments for the block of nodes from community $r$. That is,
\begin{equation*}
	\mathbf{g}'_{r,\ell} \sim  \Unif\left(\mathsf{Assign}(\mathcal{G}_r^{(\ell - 1)},\,\mathbf{c}_r^{(\ell)})\right) \,.
\end{equation*}
We then obtain the final community-assignment vector for layer $\ell$  by concatenating these assignments for all $r$. That is,
\begin{equation*}
	g_{(\ell)} = \bigoplus_{r = 1}^k \mathbf{g}'_{r,\ell} \,.
\end{equation*}

%%%%%

\section{Localization of Community-Size Distributions}\label{CommSizeDistLoc}

We numerically compute community-size distributions for each of the generative models in Sections \ref{CommAssnSL} and \ref{CommAssnML}, and we prove theoretical results about the behavior of the single-layer community-size distributions in our approach (see Section \ref{novelML}) in the limit of infinitely many layers. We obtain the following results.
\begin{itemize}
\item Monolayer-network generative models: 
\begin{itemize}
\item The community-size distribution of a uniform distribution on community assignments (see Section \ref{unifDistSL}) is highly localized. 
\item The community-size distribution of nodewise community assignments (see Section \ref{nodewiseSL}) is much less localized than the community-size distribution of a uniform distribution on community assignments.
\end{itemize}
\item Temporal-network generative models:
\begin{itemize}
\item The community-size distribution of a uniform distribution on community assignments (see Section \ref{unifDistML}) is highly localized. 
\item The community-size distributions of the Yang et al.\ \cite{Yang11} and Bazzi et al.\ \cite{Bazzi20} discrete-time Markov-process approaches (see Section \ref{NodewiseEvolML}) are much less localized than that of a uniform distribution on community assignments.
However, for both the Yang et al.\ and Bazzi et al.\ approaches,, the single-layer community-size distributions of later layers are more localized than those of earlier layers. 
\item In our approach (see Section \ref{novelML}), the localization of the single-layer community-size distributions increases much more slowly than it does for the Yang et al.\  \cite{Yang11} and Bazzi et al.\ \cite{Bazzi20} approaches. Consequently, the overall community-size distribution is much less localized than those of the Yang et al.\ and Bazzi et al.\ approaches.
\end{itemize}
\end{itemize}

In Section \ref{Methodology}, we discuss the methodology that we use to verify these claims. In Section \ref{CSSL}, we verify {empirically} verify the above claims for monolayer-network generative models. In Section \ref{CSML}, we {empirically} verify the above claims for temporal-network generative models. Finally, in Section \ref{ArashNumerics}, we {prove theoretical results about} the behavior of the single-layer 
community-size distributions {for our 
LECS prior} (see Section~\ref{novelML}) in the limit of infinitely many layers.

%%%%

\subsection{Methodology}\label{Methodology}

To compute the community-size distributions for each of the generative models in Sections \ref{CommAssnSL} and \ref{CommAssnML}, we begin by using them to generate $M =10^6$ instantiations of community assignments $g$.

For monolayer networks, we report the empirical distribution (i.e., histogram) of the size of community $1$. We let $g^{(m)}$ be the $m$th community-assignment instantiation, and we plot the observed frequencies $P_i$ for each $i \in \{0,\ldots,n\}$, where $P_i$ is defined by the map
\begin{equation}\label{Psubi1}
    i \mapsto P_i := \frac{1}{M}  \sum_{m=1}^M \mathbb{1}\{n_1(g^{(m)}) = i\} \, ,
\end{equation}
where $n_1(g^{(m)})$ is the size of community $1$ in community assignment $g^{(m)}$.
By symmetry, we choose the community label ``1" without loss of generality.
In Figure \ref{samplesinglelayer}, we show an example of such a histogram.

\begin{figure}[h]
\centering
\includegraphics[width=0.75\textwidth]{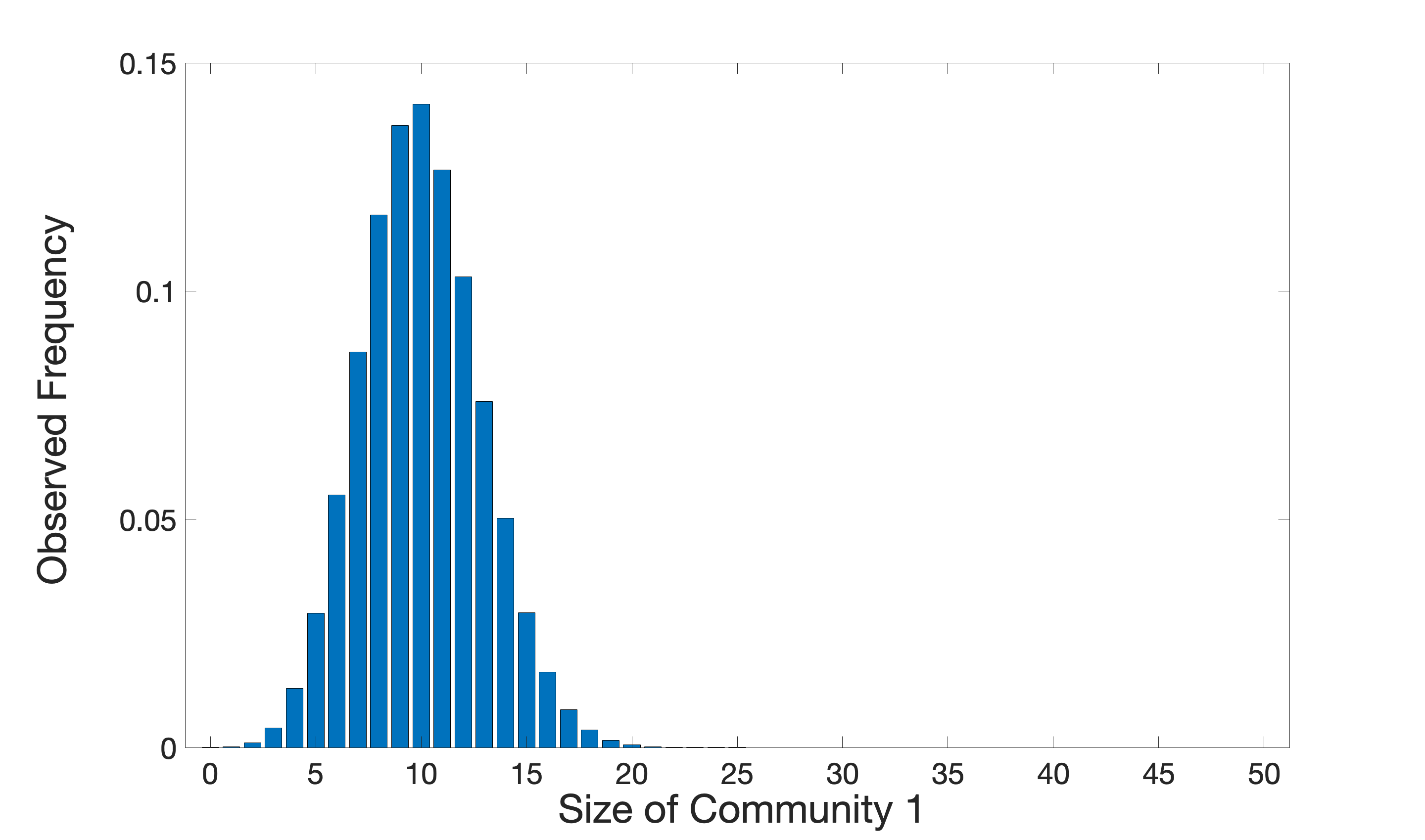}
\caption[An example of the community-size frequency histogram for a 50-node monolayer network.]{An example of the community-size frequency histogram for a 50-node monolayer network. The horizontal axis indicates the number of nodes in the network with community assignment $1$, and the vertical axis indicates the observed frequency of community assignments with that number of nodes in community $1$.
}
\label{samplesinglelayer}
\end{figure}

For a temporal network, we first count the number of node-layers $(i,\ell)$ in layer $\ell$ that are in community $1$, for each layer $\ell \in [L]$. 
We also count the total number of node-layers that are in community $1$ across a whole network. For each layer $\ell$, we then generate a histogram for each layer of the observed frequencies $P_i$ for each $i \in \{0,\ldots,n\}$, where $P_i$ is defined by the map
\begin{equation}\label{Psubi2}
    i \mapsto P_i := \frac{1}{M} \sum_{m = 1}^M \mathbb{1}\{n_1\left(g_{(\ell)}^{(m)}\right) = i\}
\end{equation}
 in that layer and $n_1\left(g_{(\ell)}^{(m)}\right)$ \footnote{This notation differs slightly from the notation $n_1^{(\ell)}$ that we used previously for the size of community $1$ in layer $\ell$ in a community assignment $g$ (see Section \ref{novelML}) . We use the alternate notation $n_1\left(g_{(\ell)}^{(m)}\right)$ in the map \eqref{Psubi2} to highlight the fact that we consider
 the size of community $1$ in layer $\ell$ specifically for the $m$th community-assignment instantiation $g^{(m)}$. 
 Subsequently, we consider only the community sizes of a single community assignment $g$, so we henceforth
 return to using the notation $n_r^{(\ell)}$.} is the size of community $1$ in layer $\ell$ of the community assignment $g^{(m)}$. We also generate a histogram of the observed frequencies $P_i$ for each $i \in \{0,\ldots,n\}$, where $P_i$ is defined by the map
\begin{equation}\label{Psubi3}
     i \mapsto P_i := \frac{1}{M}  \sum_{m = 1}^M  \mathbb{1}\{n_1\left(g^{(m)}\right) = i\}\,
     \end{equation}
for the overall network, where $n_1\left(g^{(m)}\right)$ is the size of community $1$ in the community assignment $g^{(m)}$.
In Figure \ref{samplemultilayer}, we show an example of such a set of histograms.

\begin{figure}[h]
\centering
\subfloat[Community-size histograms for each layer]{\includegraphics[height=0.24\textheight]{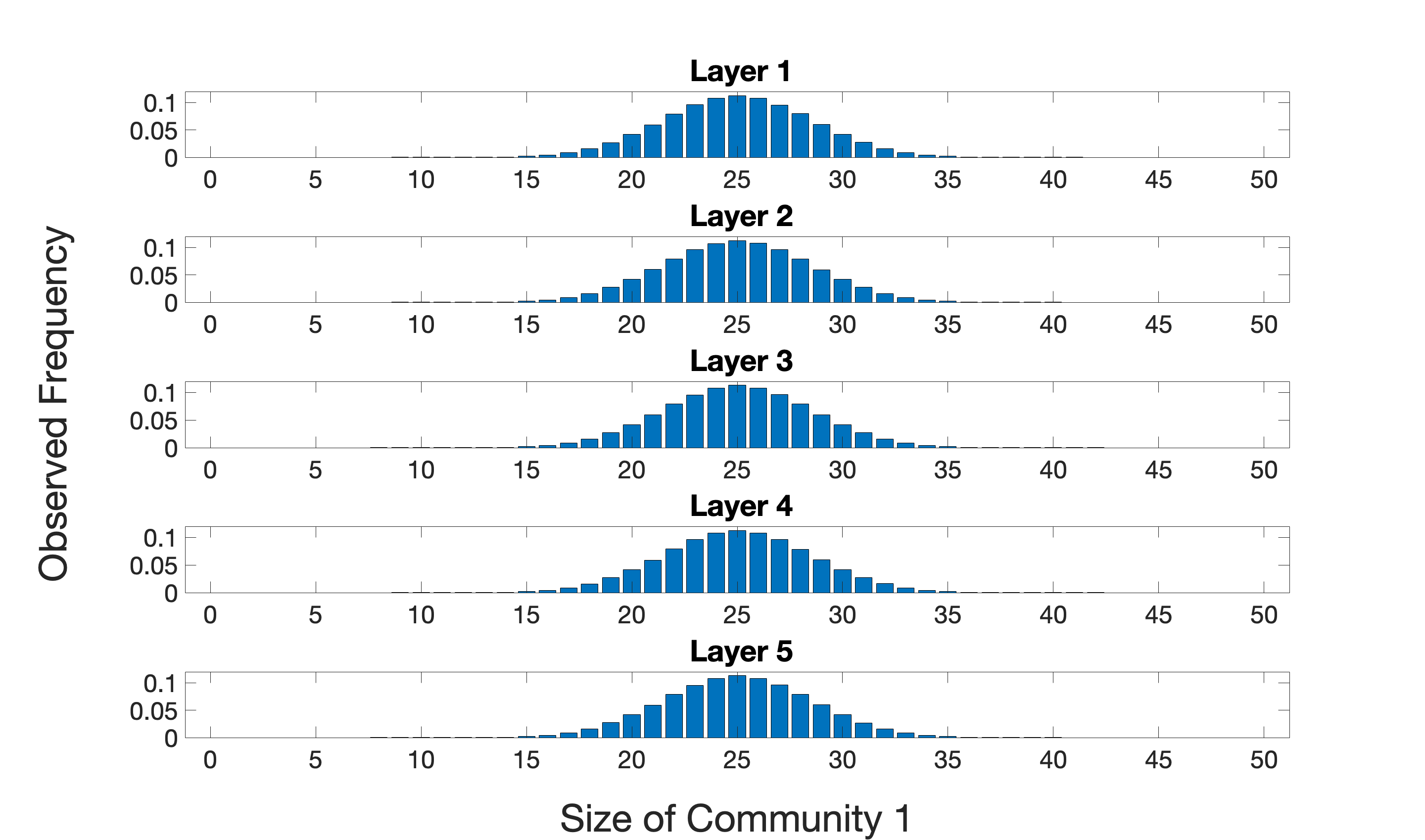}} \\
\subfloat[Overall community-size histograms for the network]{\includegraphics[height=0.24\textheight]{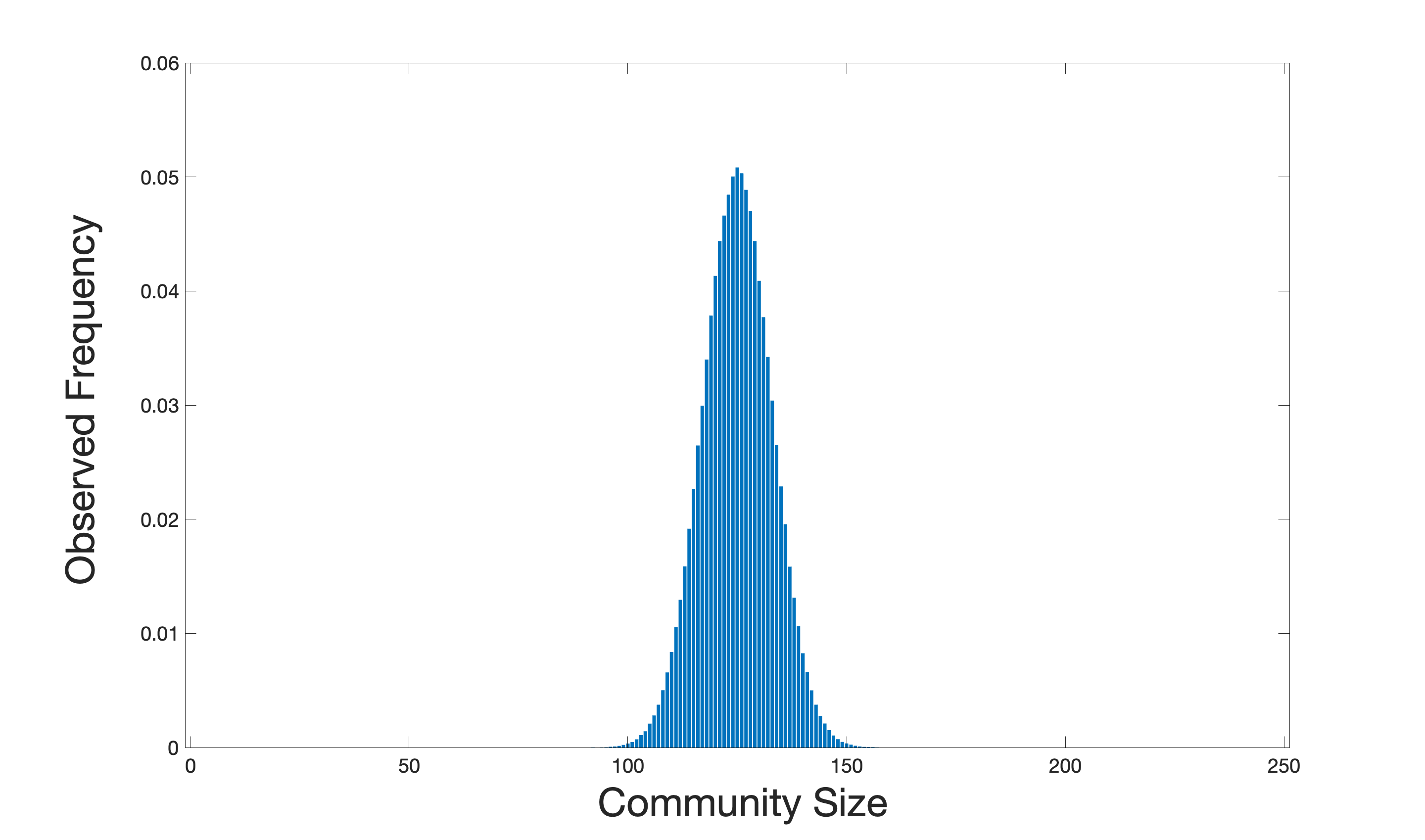}}
\caption[An example of the community-size histograms in a 50-node temporal network with $5$ layers.]{An example of the community-size histograms in a 50-node temporal network with $5$ layers. In both panels (a) and (b), the horizontal axis indicates the number of node-layers with community assignment $1$, and the vertical axis indicates the observed frequency of community assignments with that number of node-layers in community $1$. In the histograms in (a), we consider only the node-layers in the indicated layer when counting the number of nodes in community $1$. In the histogram in (b), we consider all of the node-layers in the network when counting the number of nodes in community $1$. }
\label{samplemultilayer}
\end{figure}

In addition to using histograms to qualitatively compare the amount of localization of \\ community-size distributions between different generative models, we calculate the inverse participation ratio (IPR) \cite{Kramer93} to quantify the amount of localization in each community-size distribution. The IPR of a community-size distribution is the squared $L^2$-norm of the distribution. That is,
\begin{equation}
	\text{IPR} = \sum_{i = 0}^n P_i^2 \, ,
\end{equation}
where $P_i$ is one of (\ref{Psubi1}), (\ref{Psubi2}), or (\ref{Psubi3}), depending on which type of community-size distribution we are considering. Larger values of the IPR indicate more localized distributions.
The minimum value of the IPR is $\frac{1}{n + 1}$ and is attained when $P_i = \frac{1}{n + 1}$ for each $i$. The maximum value of the IPR is $1$ and is attained when $P_j = 1$ for some $j$ and $P_i = 0$ for all $i \ne j$. We report $100 \times \text{IPR}$, which gives the IPR values in percentages.

In the following subsections, we examine the community-size histograms and calculate their IPRs to quantify and compare the amount of localization in the examined generative models for community detection.

%%%%%

\subsection{Community-Size Distributions for Monolayer-Network Models}\label{CSSL}

We consider two community-assignment probability distributions for monolayer-network models: a uniform distribution on community assignments (see Section \ref{unifDistSL}) and nodewise community assignments (see Section \ref{nodewiseSL}).  Using the approach in Section \ref{Methodology}, we generate a community-size histogram for each of the two community-assignment probability distributions. We set the number of nodes to $n = 50$ and generate two community-size histograms for each of the two community-assignment probability distributions. One distribution has $k = 2$ communities, and the other has $k = 5$ communities. 
In Figure~\ref{singlelayercomp}, we show the resulting community-size histograms.
We compute the IPR for the community-size distribution of each example and compile our results in Table \ref{iprSL}.

\begin{figure}[t]
	\centering
	\subfloat[$k = 2$; uniform distribution on community assignments]{
	  \includegraphics[width=0.485\linewidth]{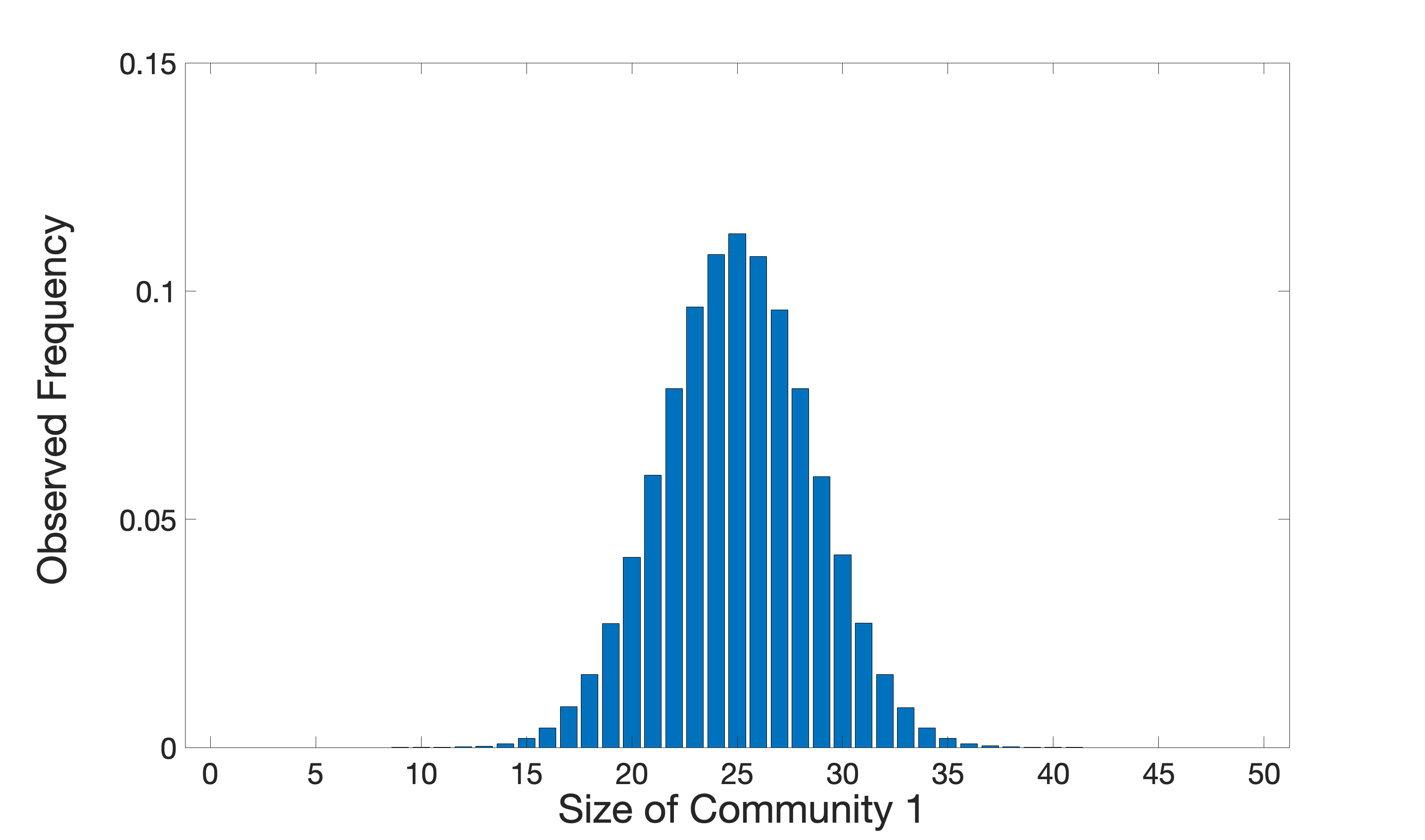}
	}\hfill
	\subfloat[$k = 2$; nodewise community assignments]{
	  \includegraphics[width=0.485\linewidth]{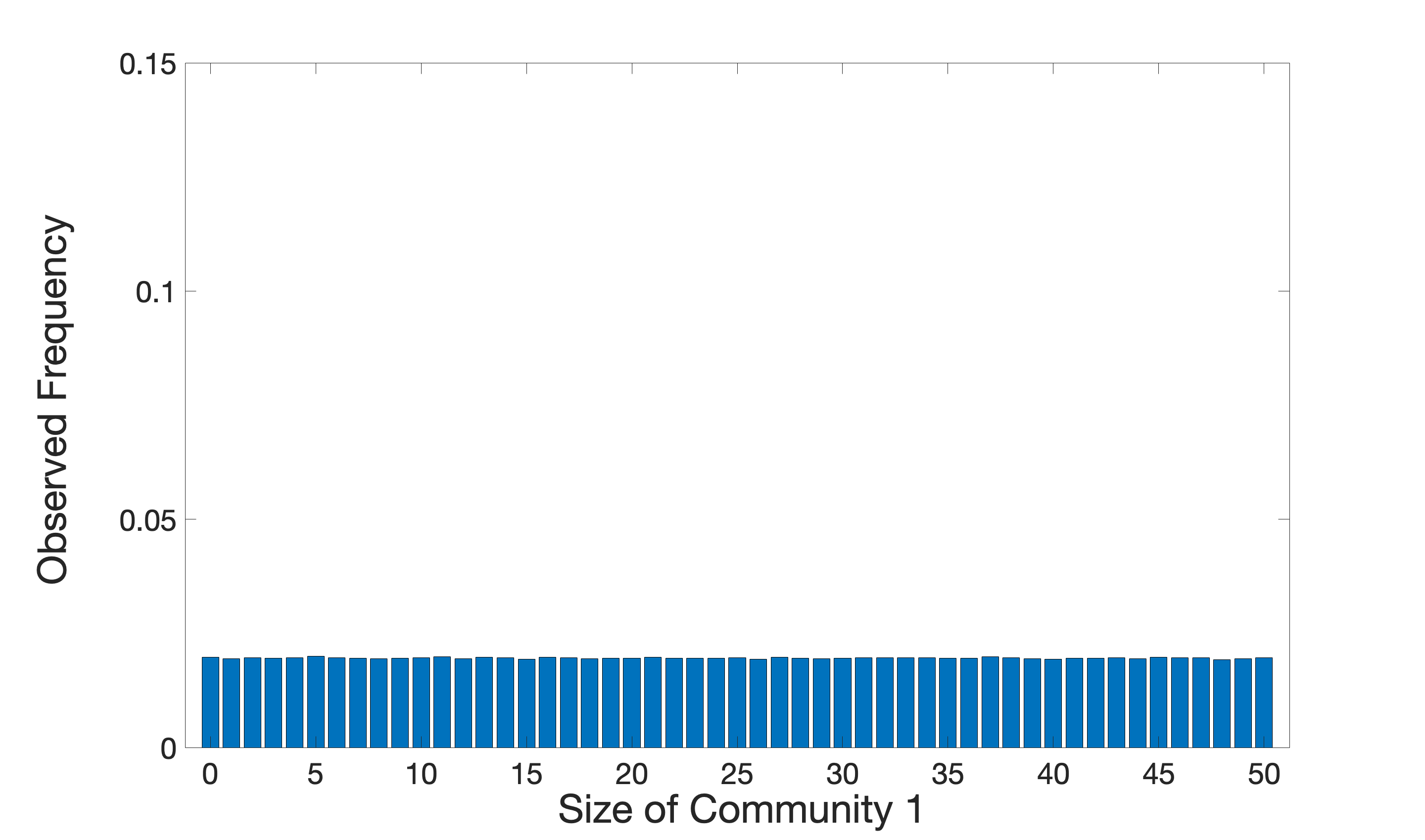}
	}\\[0.6em]
	\subfloat[$k = 5$; uniform distribution on community assignments]{
	  \includegraphics[width=0.485\linewidth]{fig4part1.png}
	}\hfill
	\subfloat[$k = 5$; nodewise community assignments]{
	  \includegraphics[width=0.485\linewidth]{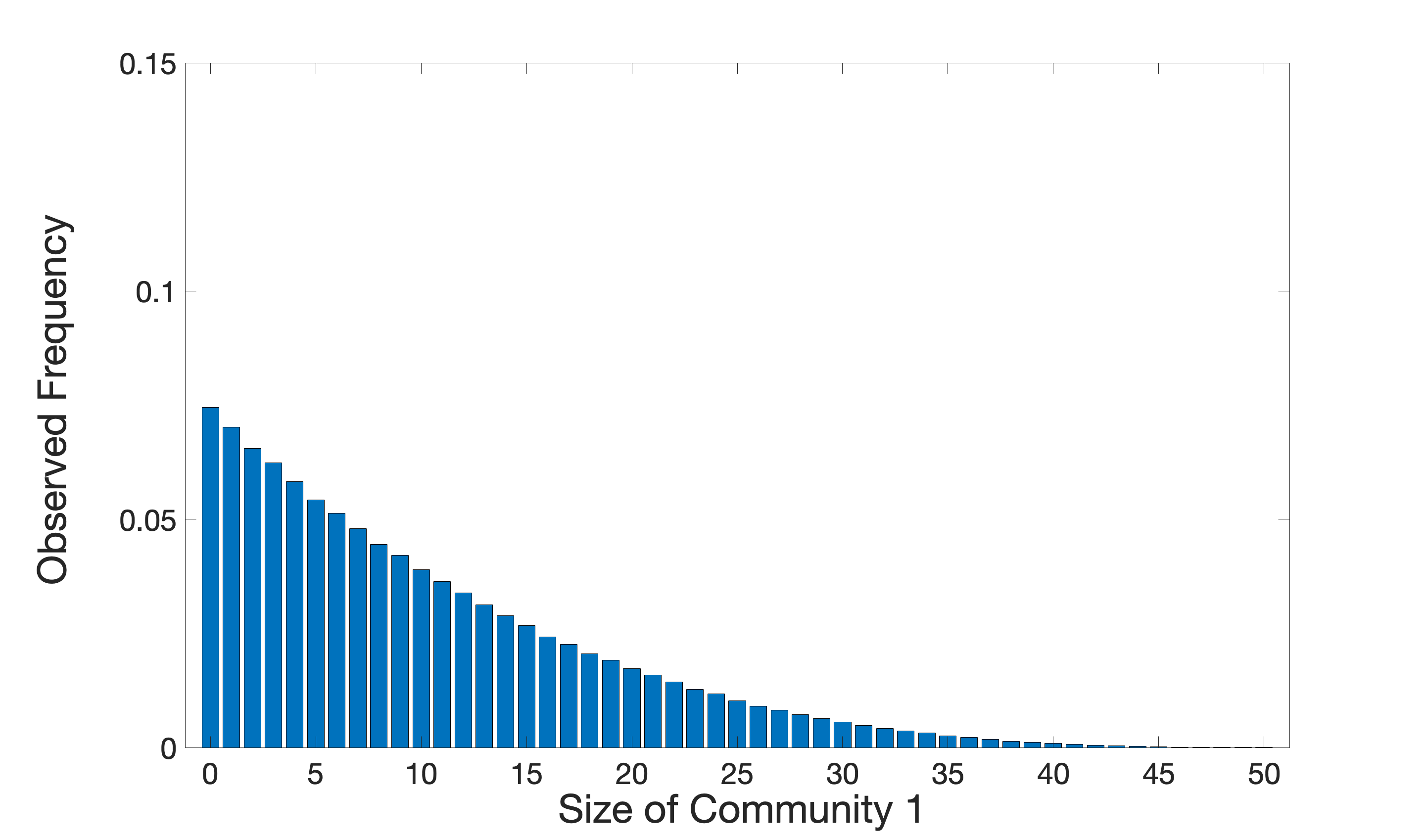}
	}
	\caption[Community-size histograms in 50-node monolayer networks ($k \in \{2,5\})$.]
	{
	Community-size histograms in 50-node monolayer networks for (a, b) $k = 2$ communities and (c, d) $k = 5$ communities. In panels (a) and (c), we show results for uniform distributions on community assignments. In panels (b) and (d), we show results for nodewise community assignments.
	}
	\label{singlelayercomp}
\end{figure}

\begin{table}[t]
	\centering
	\small
	\setlength{\tabcolsep}{6pt}
	\renewcommand{\arraystretch}{1.05}
	\begin{tabular}{lcc}
	\toprule
	Method & IPR for $k = 2$ (\%) & IPR for $k = 5$ (\%) \\
	\midrule
	Uniform distribution on community assignments         & 7.96 & 9.98 \\
	Nodewise community assignments           & 1.96 & 4.36 \\
	\bottomrule
	\end{tabular}
	\caption{
	Values of the inverse participation ratio (IPR) for community-size distributions for monolayer networks with $n = 50$ nodes and $k \in \{2,5\}$ communities.
	}
	\label{iprSL}
	\end{table}

We begin our analysis of our results by restating our claims (see Sections \ref{CommAssnSL} and \ref{CommSizeDistLoc}) for monolayer networks: 
\begin{itemize}
\item The community-size distribution for a uniform distribution on community assignments (see Section \ref{unifDistSL}) is highly localized. 
\item The community-size distribution for nodewise community assignments (see Section \ref{nodewiseSL}) is much less localized than that for a uniform distribution on community assignments (see Section \ref{unifDistSL}). 
\end{itemize}
{In Figure~\ref{singlelayercomp},} we see that both claims appear to hold for both {$k = 2$} communities and $k = 5$ communities. The histograms for a uniform distribution on community assignments appear to be highly localized, whereas the histograms for nodewise community assignments are far less localized. {Additionally, these histograms coincide with theoretical results about community-size localization in monolayer networks that we prove in Appendix \ref{sec:monolayer-theory}. For example, the distributions in Figures~\ref{singlelayercomp}(b) and \ref{singlelayercomp}(d) almost coincide with $\mathrm{Beta}(1,1)$ and $\mathrm{Beta}(4,1)$, respectively, as predicted by Corollary \ref{UnifDistLimitTheory}.}

The IPRs of the community-size distributions in Table \ref{iprSL} confirm these qualitative observations. For example, for $k = 2$, the IPR for the uniform distribution on community assignments is $0.0796$, which is much larger than the IPR value of $0.0196$ for nodewise community assignments. Recall that larger IPR values indicate more localization of a distribution. Therefore, for $k = 2$, the community-size distribution for nodewise community assignments is much less localized than the community-size distribution for a uniform distribution on community assignments. We observe the same behavior for $k = 5$. {The IPR values in Table \ref{iprSL} are similar to the values that we predict using our theoretical results in Appendix \ref{sec:monolayer-theory}. For example, for nodewise community assignments, Proposition \ref{prop:ipr-contrast} predicts IPR values of $2$\% for $k = 2$ and and $4.57$\% for $k = 5$. These values are close to the corresponding empirical IPR values of $1.96$\% and $4.36$\%.} {For uniform assignments, Proposition \ref{prop:ipr-contrast} predicts IPR values of $7.98$\% for $k = 2$ and $9.97$\% for $k = 5$, which are very close to the corresponding empirical IPR values of $7.96$\% and $9.98$\%.}

%%%

\subsection{Community-Size Distributions for Temporal-Network Models}\label{CSML}

We consider the following community-assignment probability distributions for temporal-network models: a uniform distribution on community assignments (see Section \ref{unifDistML}), the Yang et al.\ \cite{Yang11} and Bazzi et al.\ \cite{Bazzi20} discrete-time Markov-process approaches (see Section \ref{NodewiseEvolML}), and our LECS-prior-based approach (see Section \ref{novelML}). As with monolayer networks, we use the approach in Section \ref{Methodology} to generate a set of community-size histograms (one for each layer and one for the overall network) in temporal networks for each of the four generative models. We again set the number of nodes to $n = 50$ and again consider $k = 2$ and $k = 5$ communities. We show the histogram for each layer for $k = 2$ in Figure \ref{multilayercomp}, the overall histogram for the network for $k = 2$ in Figure~\ref{multilayercomp2}, the histogram for each layer for $k = 5$ in Figure \ref{multilayercomp3}, and the overall histogram for the network for $k = 5$ in Figure~\ref{multilayercomp4}. We compute the IPR for the community-size distribution for each example, and we compile the IPR values for each layer in Table \ref{iprML} and the IPR values for the overall network in Table \ref{iprML2}.

\begin{figure}[t]
\centering
\subfloat[Uniform distribution on community assignments]{\includegraphics[width=0.49\textwidth]{fig5part1.png}} \;
\subfloat[Yang et al.\ approach]{\includegraphics[width=0.49\textwidth]{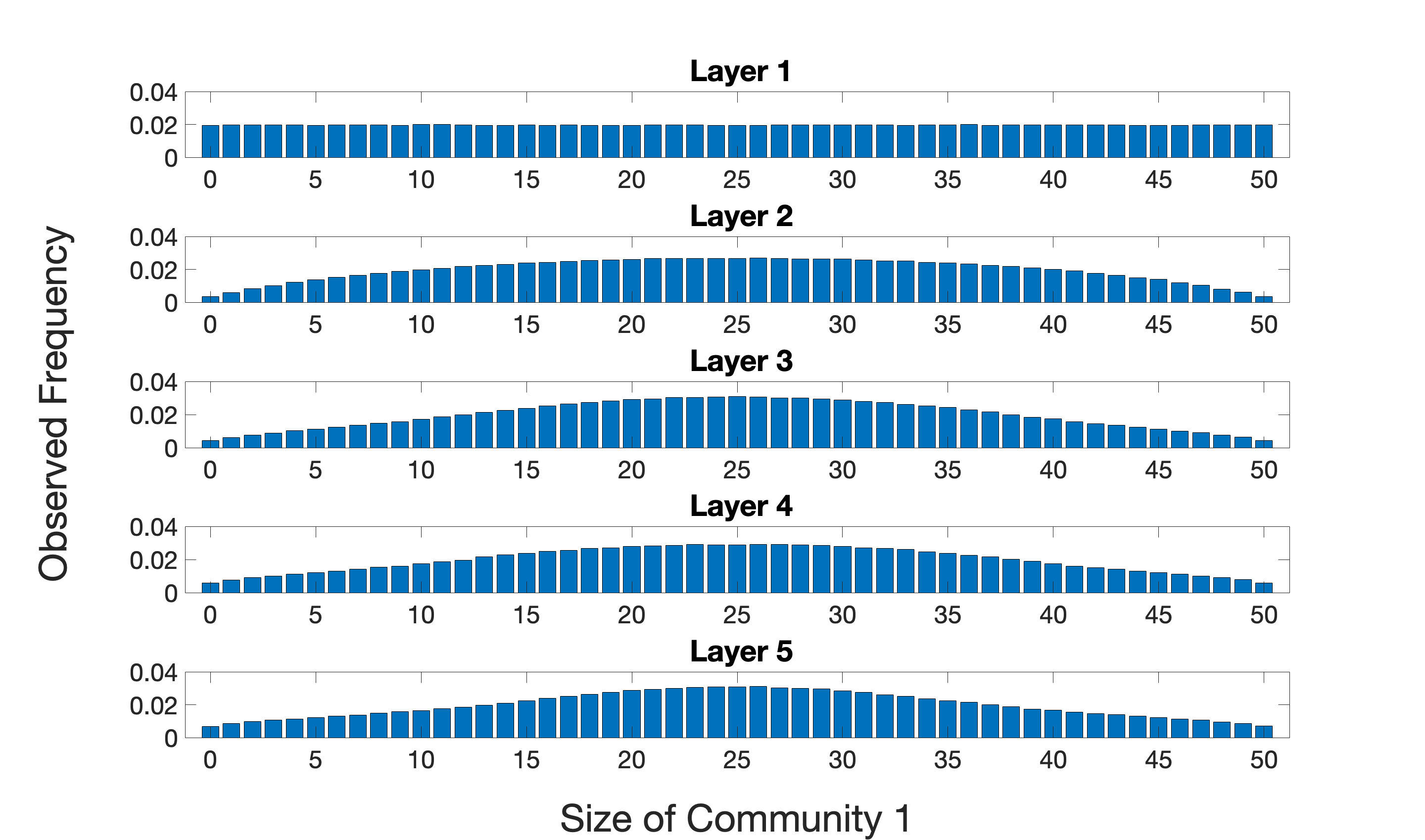}}\\
\subfloat[Bazzi et al.\ approach]{\includegraphics[width=0.49\textwidth]{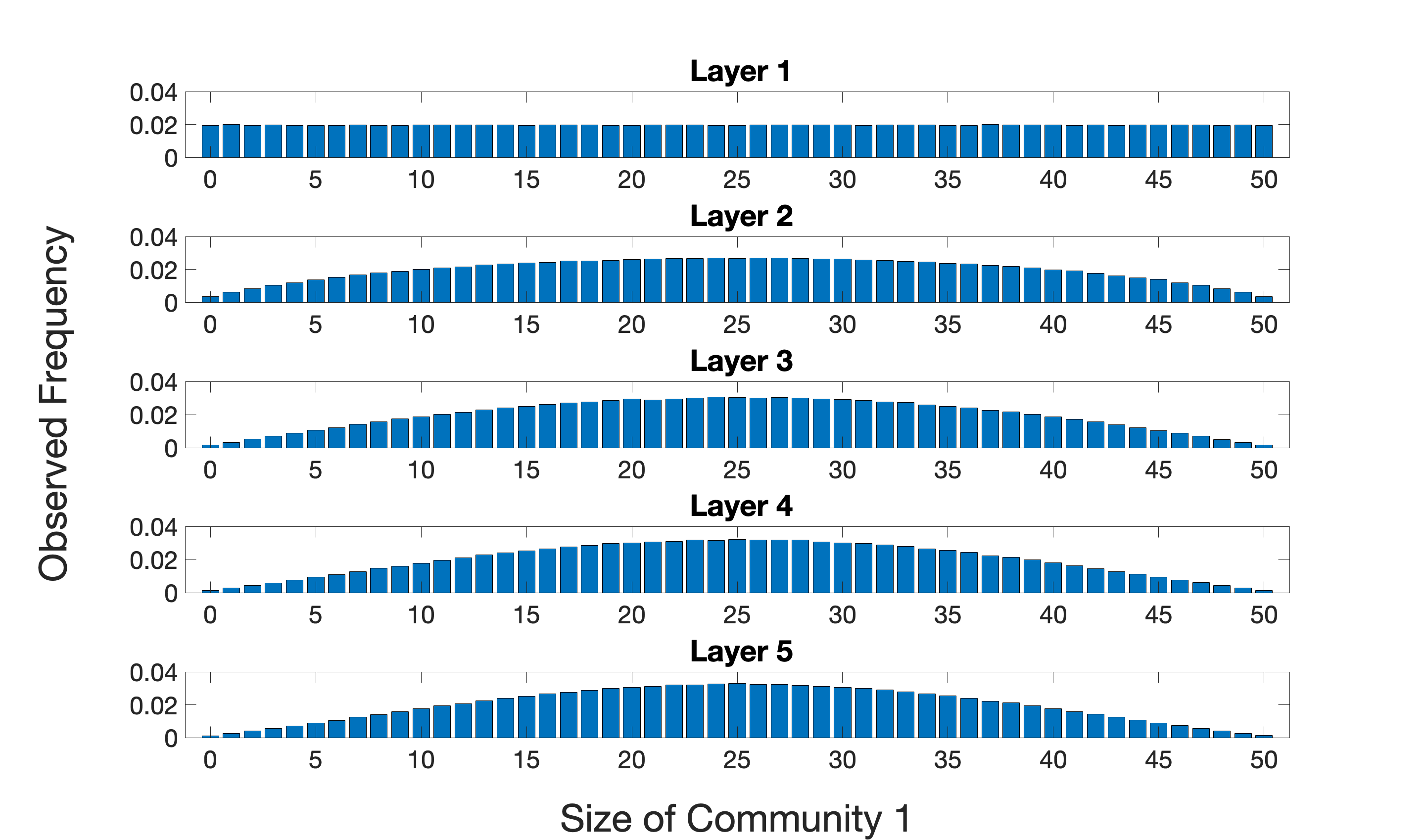}} \;
\subfloat[Our approach]{\includegraphics[width=0.49\textwidth]{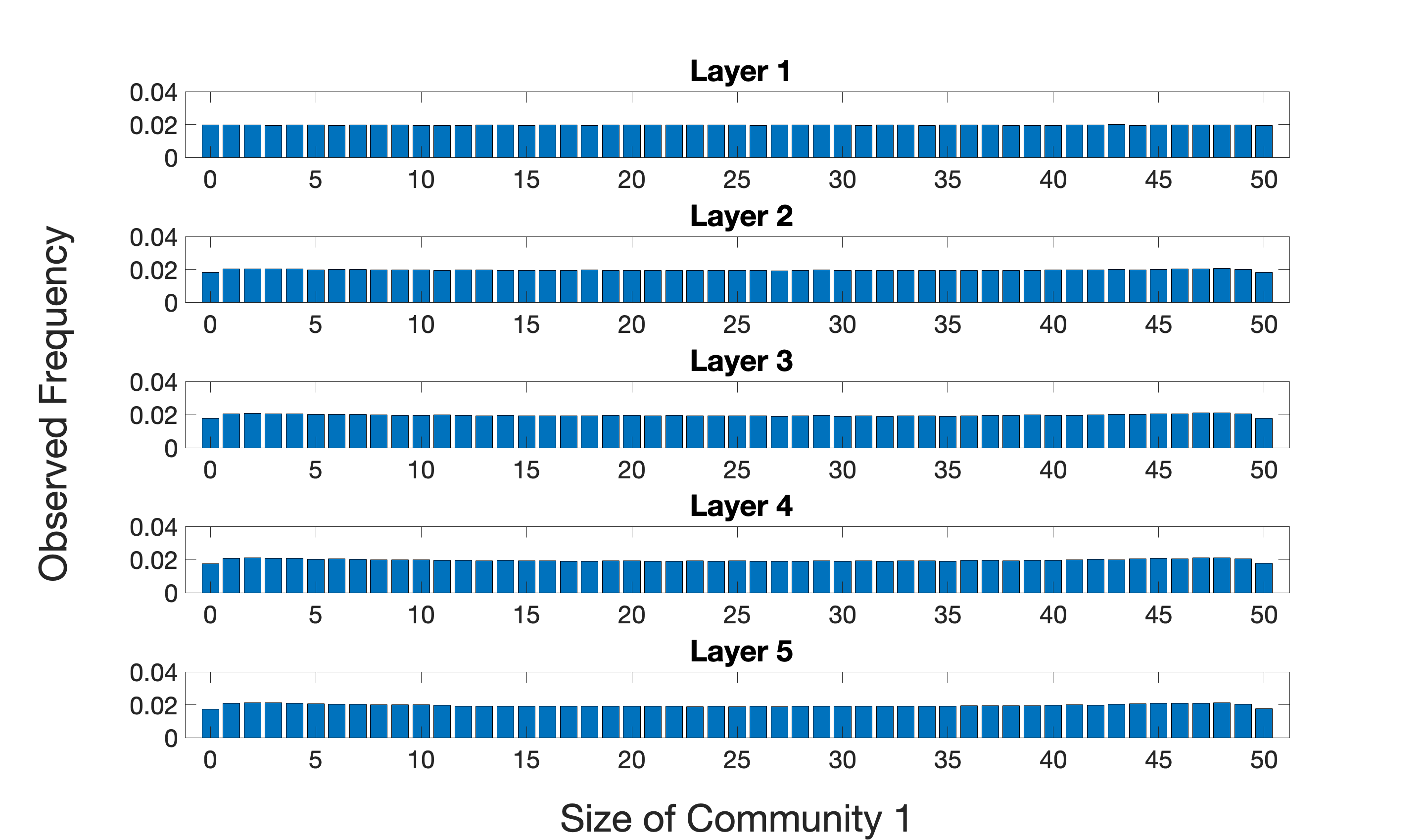}}
\caption[Community-size histograms for each layer of a $5$-layer, $50$-node temporal network 
 for $k = 2$ communities for a uniform distribution on community assignments, the Yang et al.\ approach \cite{Yang11}, the Bazzi et al.\ approach \cite{Bazzi20}, and our LECS-prior-based approach.]{Community-size histograms for each layer of a 5-layer, 50-node temporal network 
 for $k = 2$ communities for (a) a uniform distribution on community assignments, (b) the Yang et al.\ approach \cite{Yang11}, (c) the Bazzi et al.\ approach \cite{Bazzi20}, and (d) our LECS-prior-based approach.}
\label{multilayercomp}
\end{figure}

\begin{figure}[t]
\centering
\subfloat[Uniform distribution on community assignments]{\includegraphics[width=0.49\textwidth]{fig6part1.png}}\;
\subfloat[Yang et al.\ approach]{\includegraphics[width=0.49\textwidth]{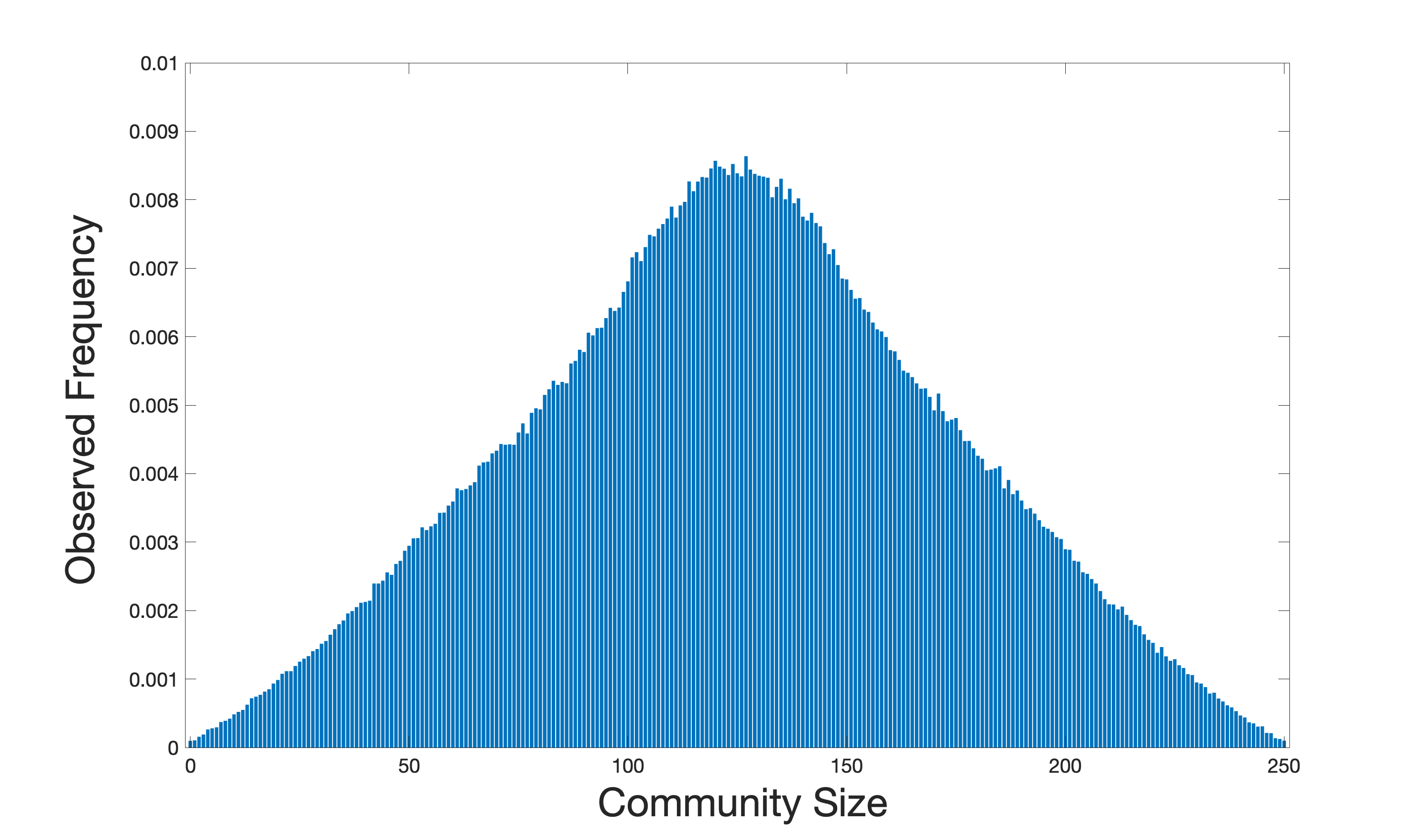}}\\
\subfloat[Bazzi et al.\ approach]{\includegraphics[width=0.49\textwidth]{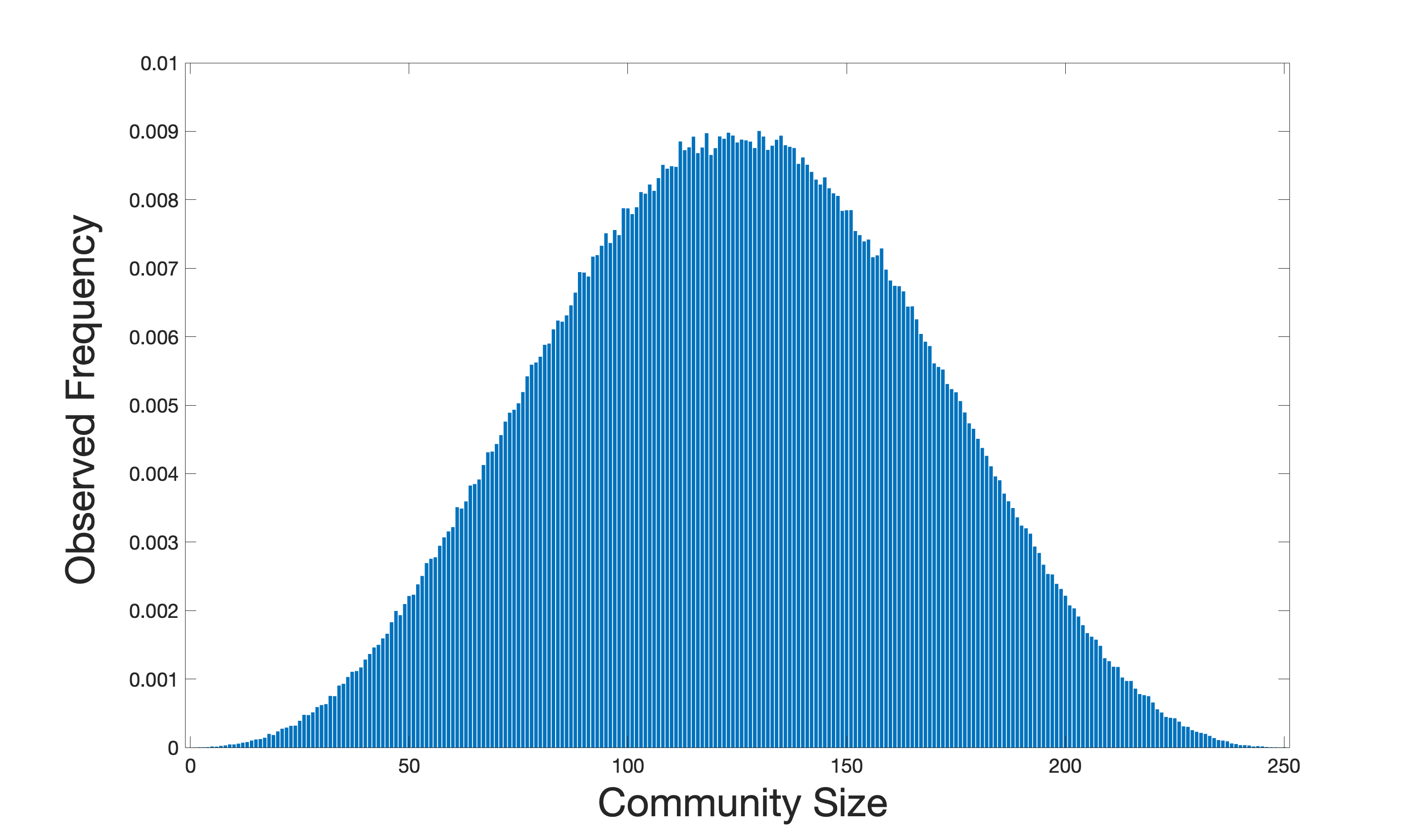}}\;
\subfloat[Our approach]{\includegraphics[width=0.49\textwidth]{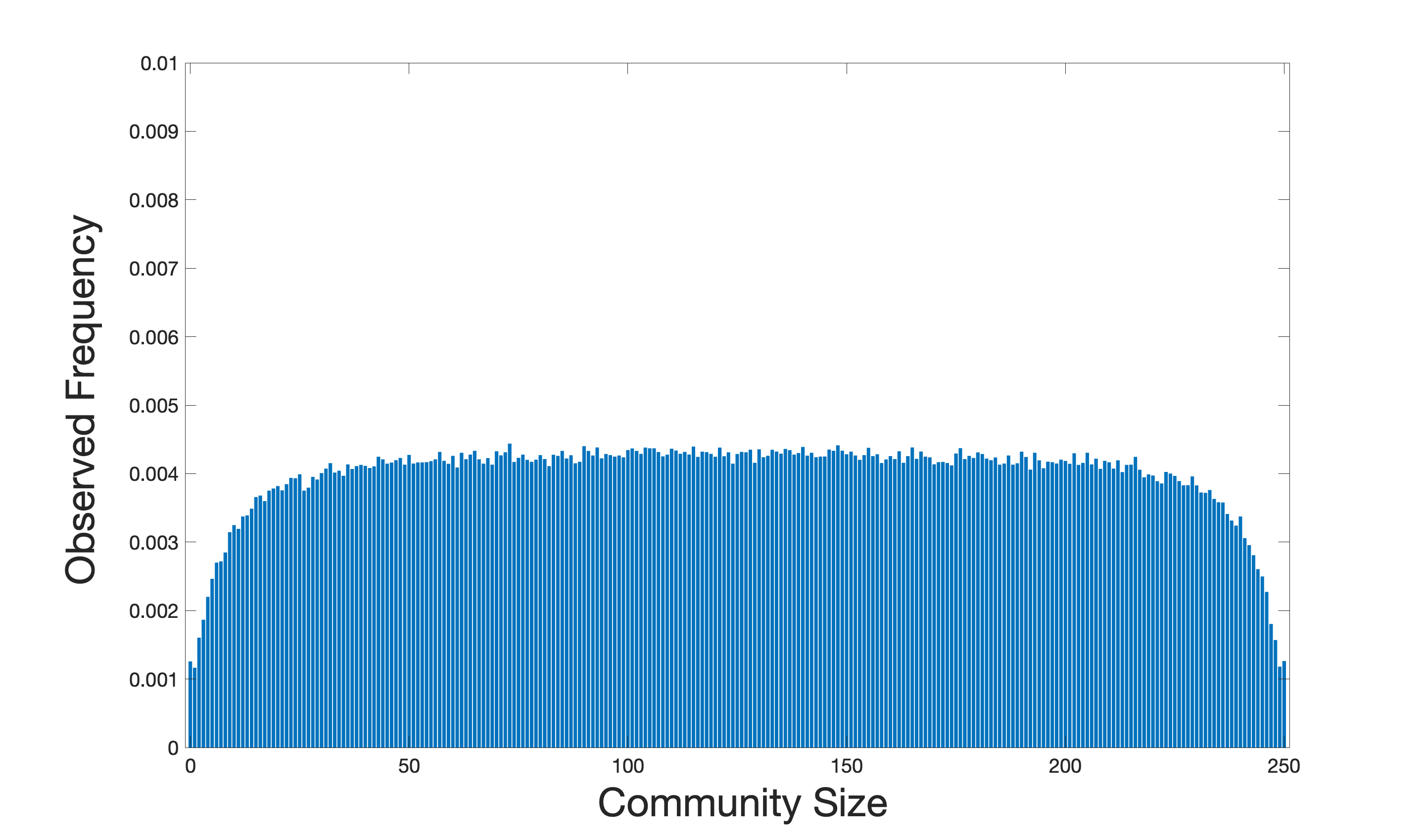}}
\caption[Overall community-size histograms in a 5-layer, 50-node temporal network
 for $k = 2$ communities for a uniform distribution on community assignments, the Yang et al.\ approach, the Bazzi et al.\ approach, and our LECS-prior-based approach.]{Overall community-size histograms in a 
 5-layer, 50-node temporal network
 for $k = 2$ communities for (a) a uniform distribution on community assignments, (b) the Yang et al.\ approach, (c) the Bazzi et al.\ approach, and (d) our LECS-prior-based approach.}
\label{multilayercomp2}
\end{figure}

\begin{figure}[t]\centering
\subfloat[Uniform distribution on community assignments]{\includegraphics[width=0.49\textwidth]{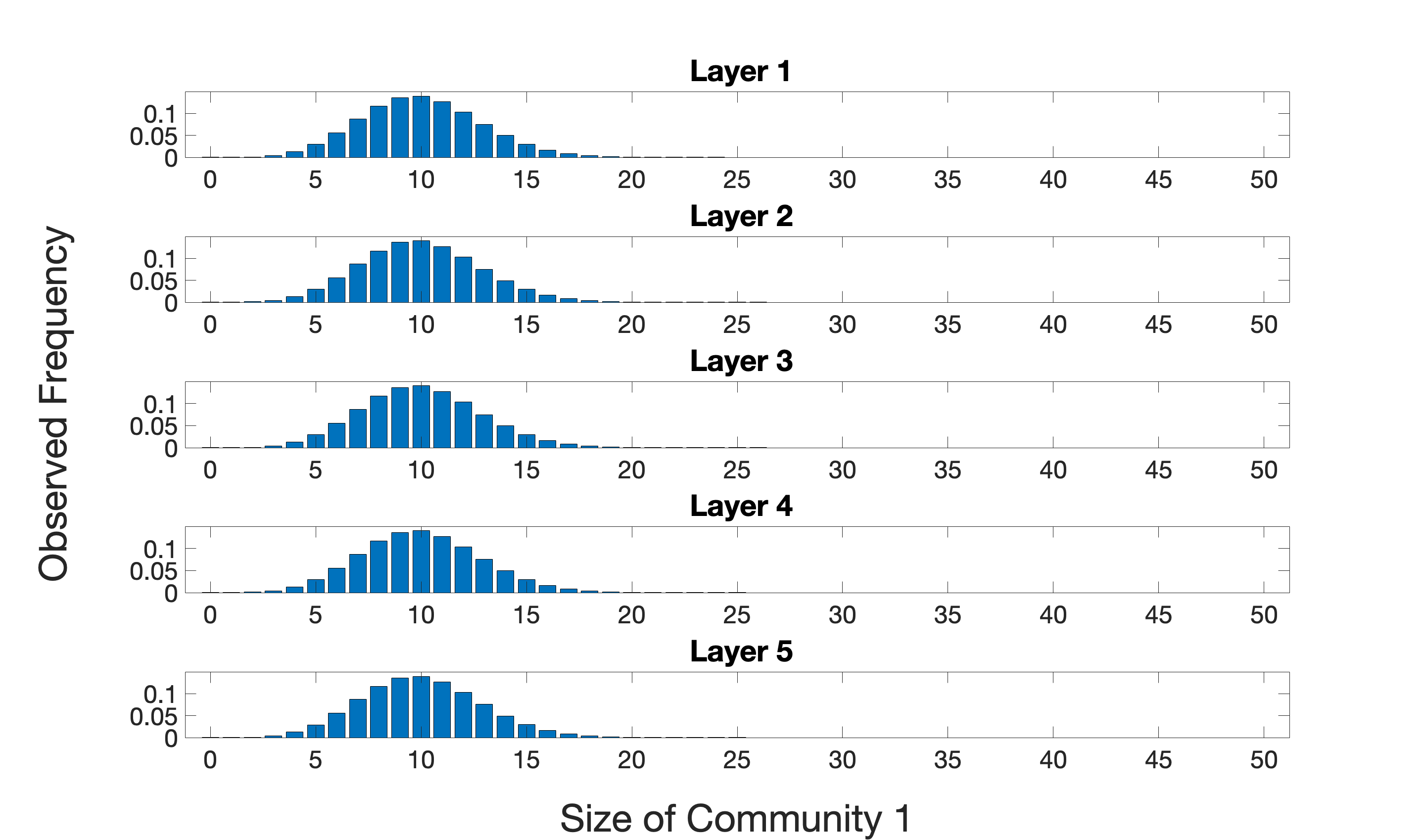}}\;
\subfloat[Yang et al.\ approach]{\includegraphics[width=0.49\textwidth]{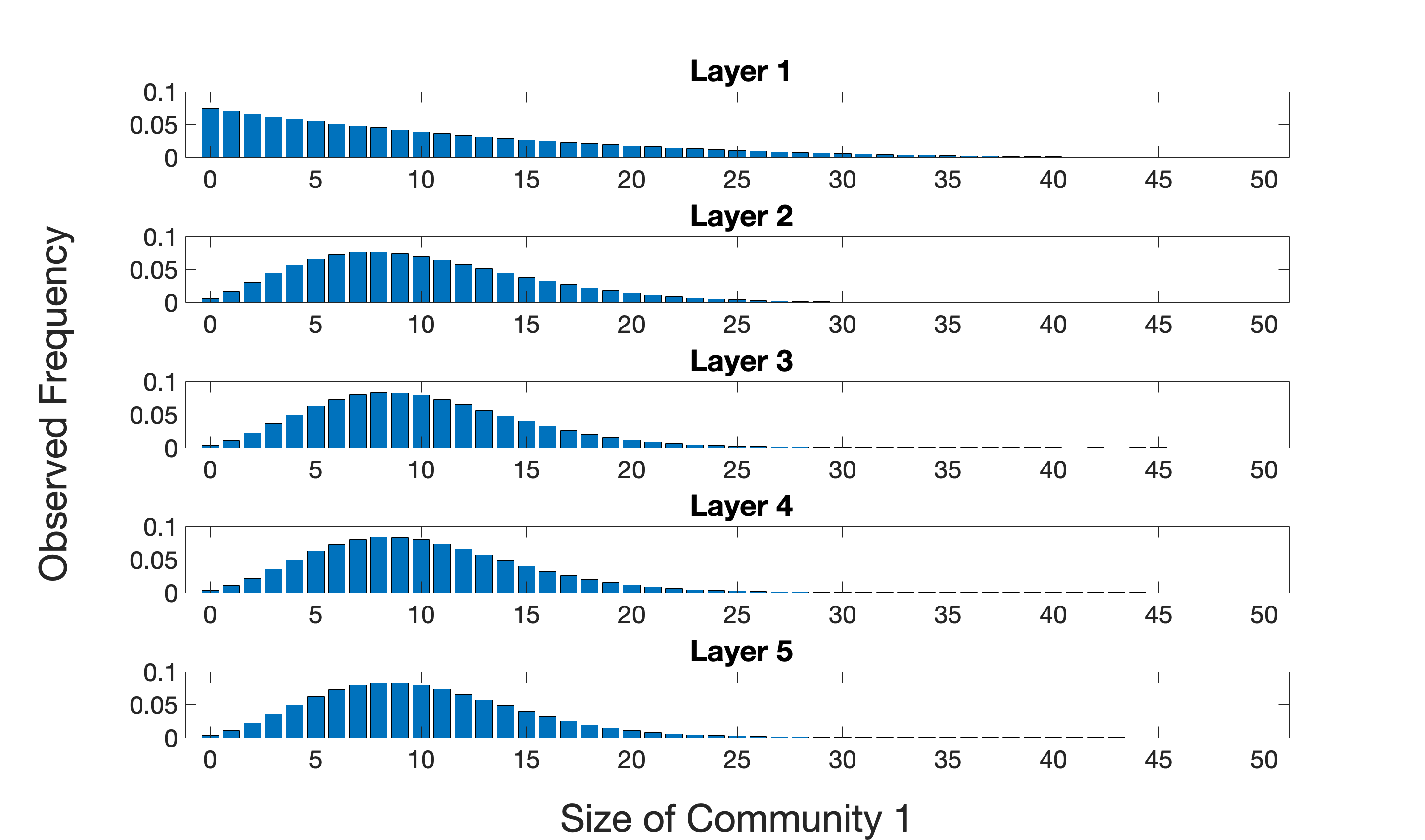}}\\
\subfloat[Bazzi et al.\ approach]{\includegraphics[width=0.49\textwidth]{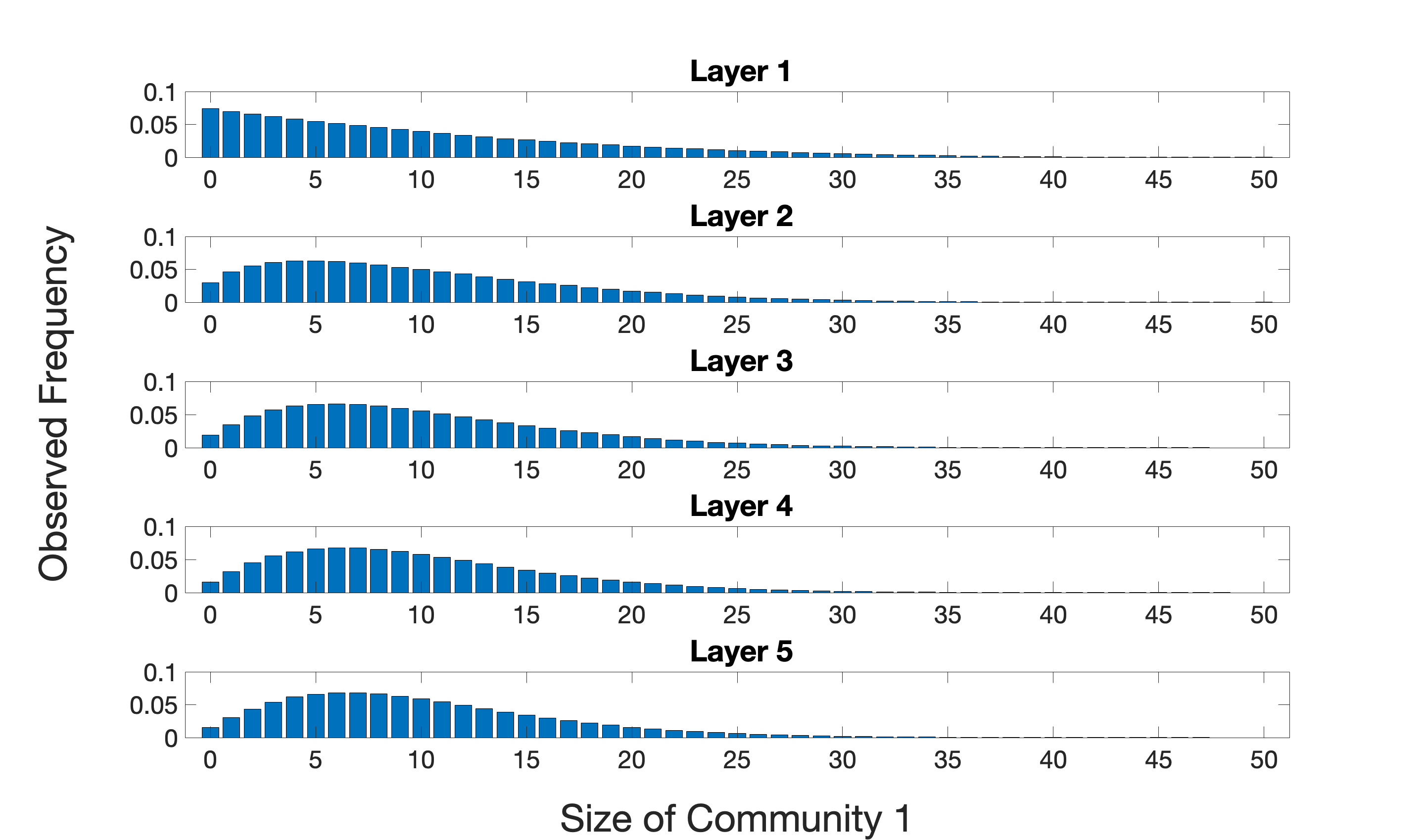}}\;
\subfloat[Our approach]{\includegraphics[width=0.49\textwidth]{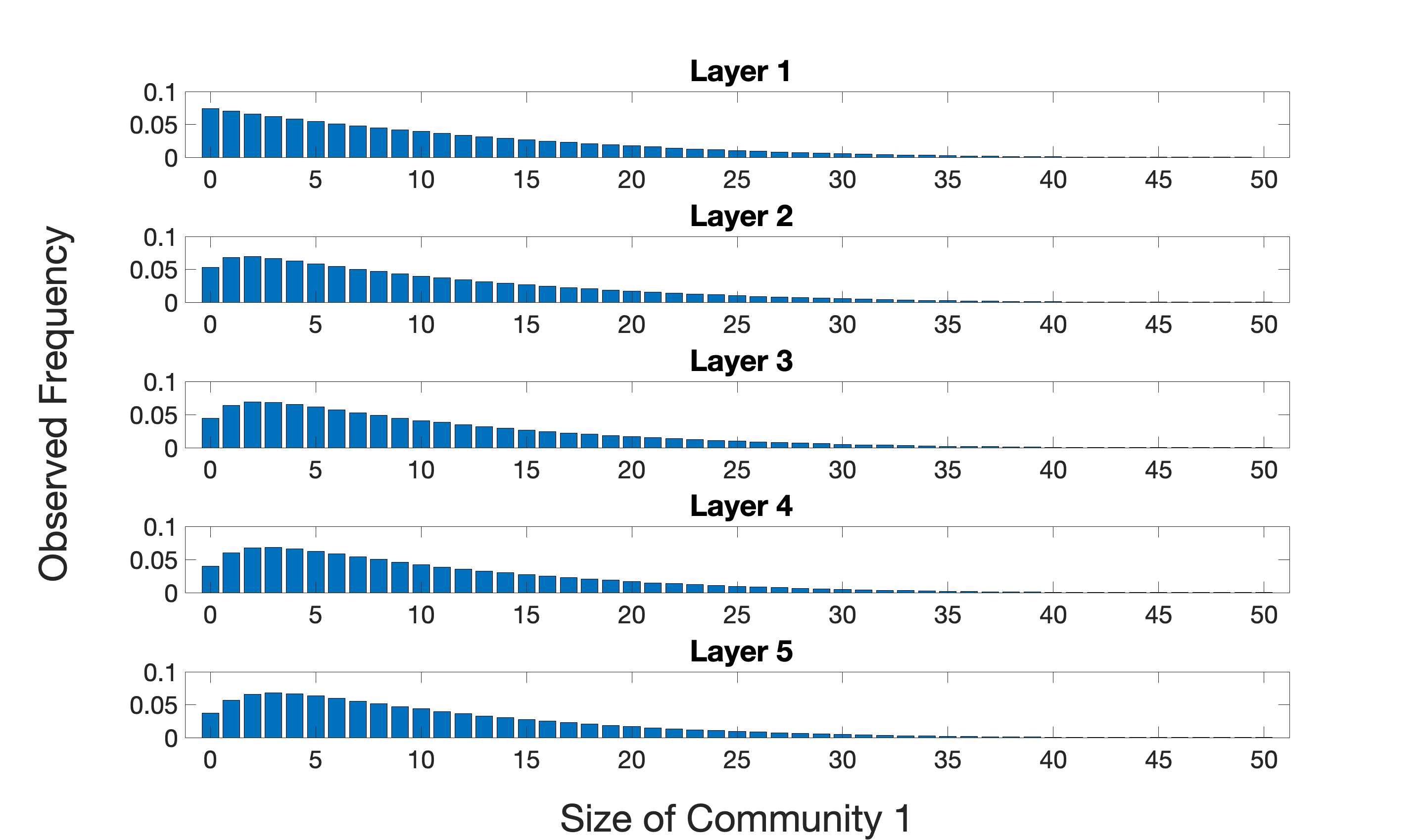}}
\caption[Community-size histograms in each layer of a 5-layer, 50-node temporal network 
 for $k = 5$ communities for a uniform distribution on community assignments, the Yang et al.\ approach, the Bazzi et al.\ approach, and our LECS-prior-based approach.]{Community-size histograms in each layer of a 
 5-layer, 50-node temporal network  
 for $k = 5$ communities for (a) a uniform distribution on community assignments, (b) the Yang et al.\ approach, (c) the Bazzi et al.\ approach, and (d) our LECS-prior-based approach.}
\label{multilayercomp3}
\end{figure}

\begin{figure}[t]\centering
\subfloat[Uniform distribution on community assignments]{\includegraphics[width=0.49\textwidth]{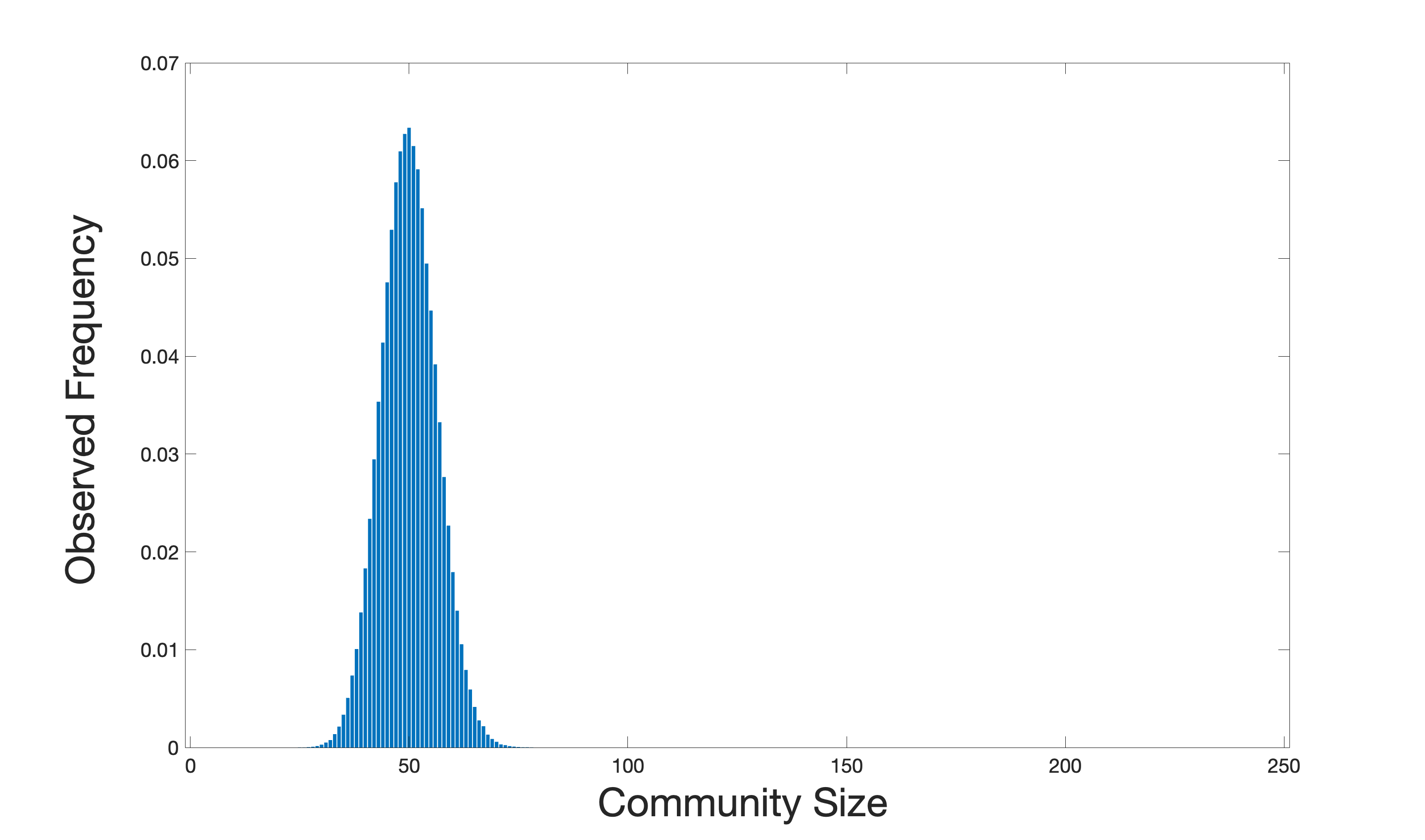}}\;
\subfloat[Yang et al.\ approach]{\includegraphics[width=0.49\textwidth]{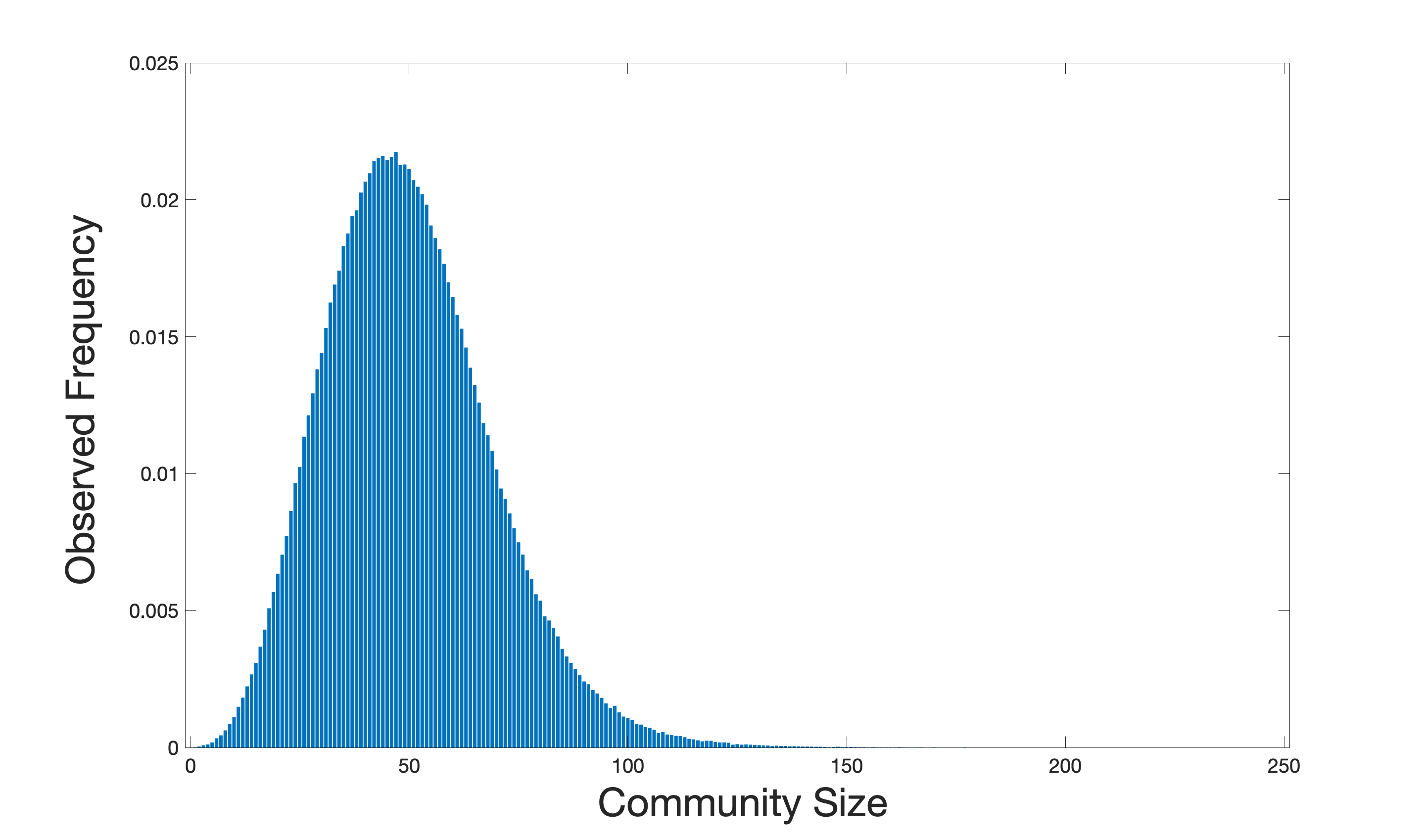}}\\
\subfloat[Bazzi et al.\ approach]{\includegraphics[width=0.49\textwidth]{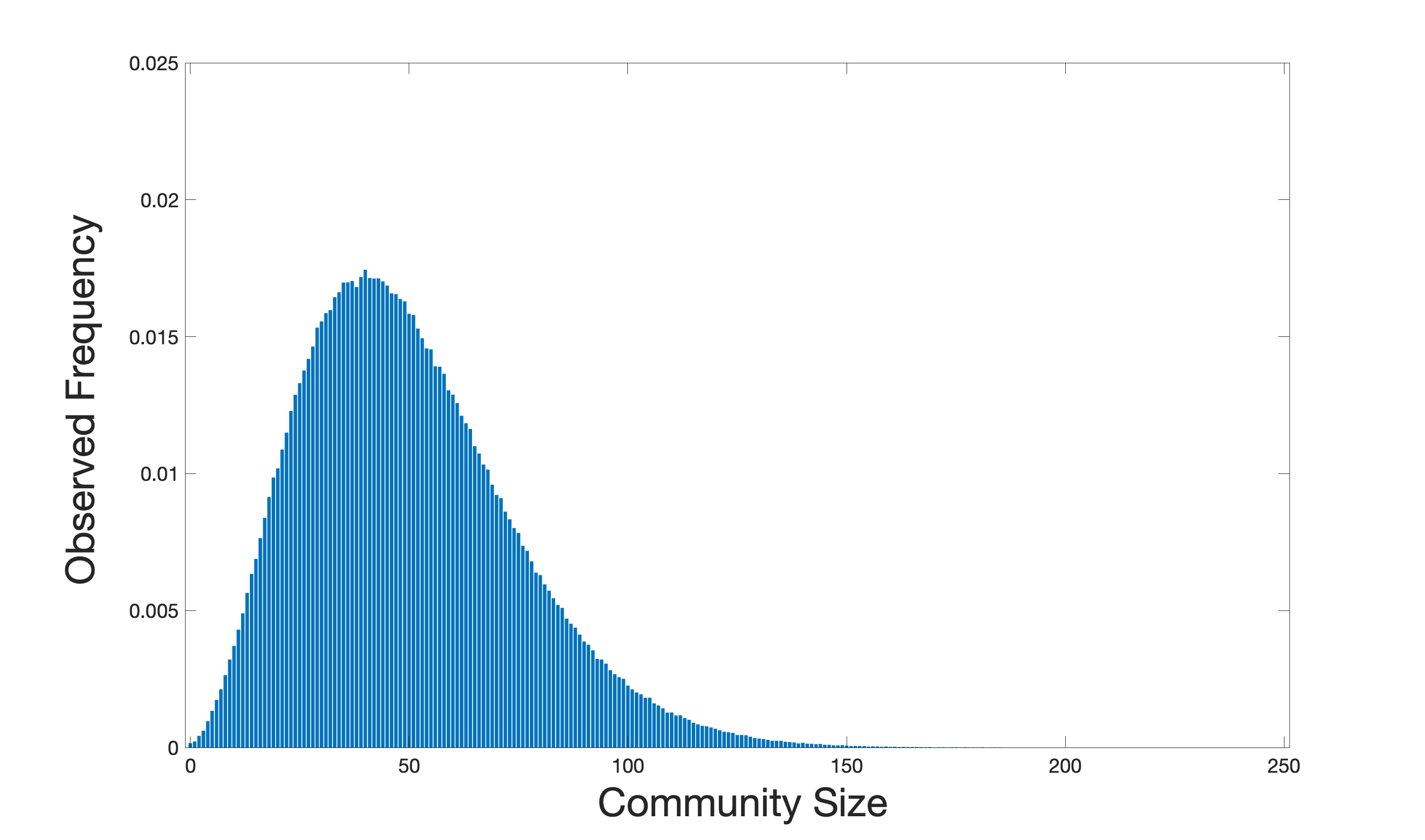}}\;
\subfloat[Our approach]{\includegraphics[width=0.49\textwidth]{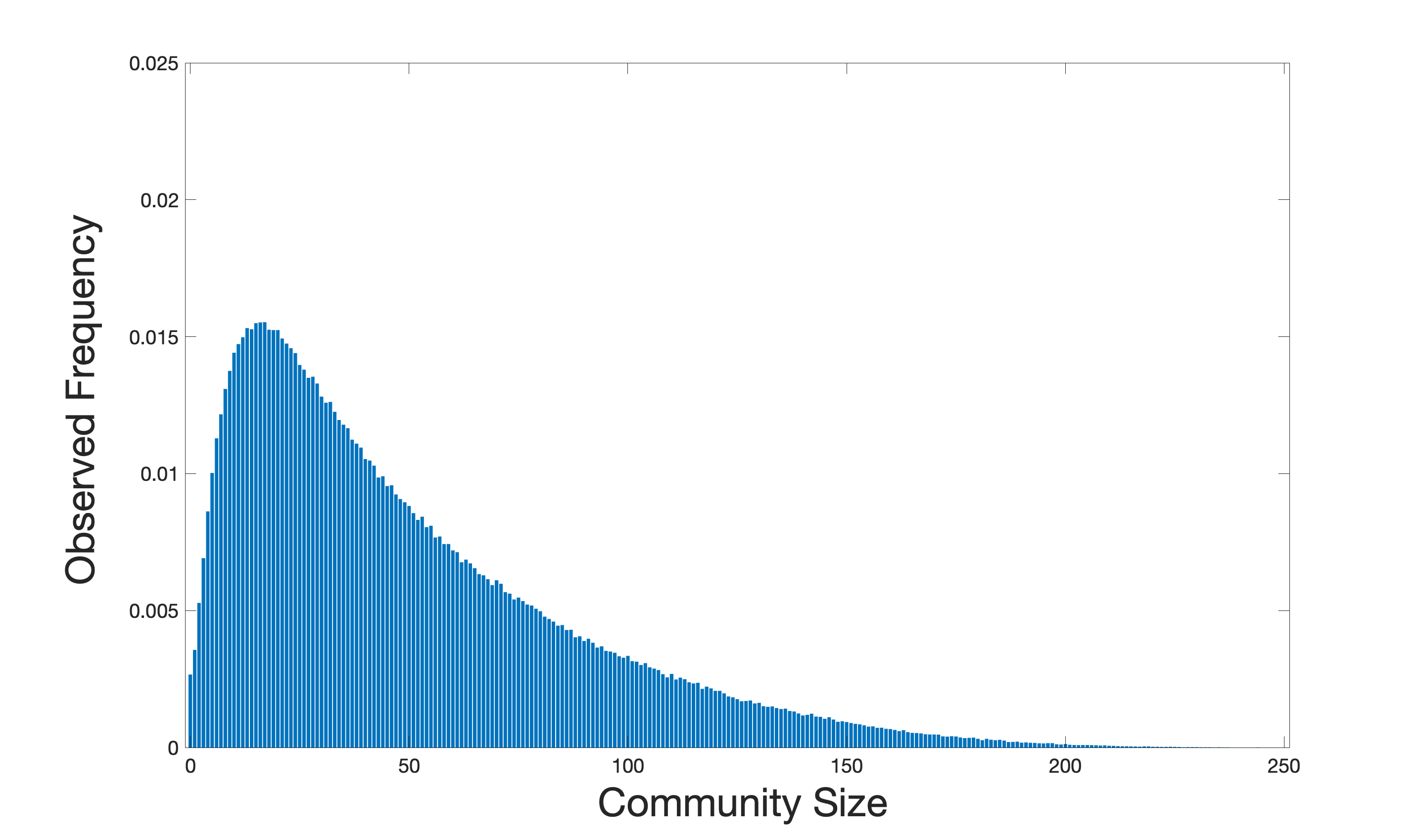}}
\caption[Overall community-size histograms in a 5-layer, 50-node temporal network
for $k = 5$ communities for a uniform distribution on community assignments, the Yang et al.\ approach, the Bazzi et al.\ approach, and our LECS-prior-based approach.]{Overall community-size histograms in a 
5-layer, 50-node temporal network 
for $k = 5$ communities for (a) a uniform distribution on community assignments, (b) the Yang et al.\ approach, (c) the Bazzi et al.\ approach, and (d) our LECS-prior-based approach.
}
\label{multilayercomp4}
\end{figure}

\begin{table}[t]
	\centering
	\small
	\setlength{\tabcolsep}{3pt}
	\renewcommand{\arraystretch}{1.05}
	\begin{tabular}{lcccccccccc}
	\toprule
	& \multicolumn{5}{c}{IPR for $k = 2$ (\%)} & \multicolumn{5}{c}{IPR for $k = 5$ (\%)} \\
	\cmidrule(lr){2-6}\cmidrule(lr){7-11}
	Method & $L_1$ & $L_2$ & $L_3$ & $L_4$ & $L_5$ & $L_1$ & $L_2$ & $L_3$ & $L_4$ & $L_5$ \\
	\midrule
	Uniform on community assignments
	& 7.95 & 7.94 & 7.98 & 7.95 & 7.97 & 9.97 & 9.98 & 9.97 & 9.97 & 9.98 \\
	Yang et al.\ \cite{Yang11}
	& 1.96 & 2.20 & 2.32 & 2.25 & 2.26 & 4.36 & 5.45 & 5.95 & 5.97 & 5.97 \\
	Bazzi et al.\ \cite{Bazzi20}
	& 1.96 & 2.20 & 2.39 & 2.47 & 2.50 & 4.35 & 4.48 & 4.71 & 4.81 & 4.85 \\
	LECS prior
	& 1.96 & 1.96 & 1.96 & 1.96 & 1.96 & 4.35 & 4.37 & 4.40 & 4.42 & 4.43 \\
	\bottomrule
	\end{tabular}
	\caption{IPR of the community-size distribution of each layer for each temporal-network model for a 5-layer, 50-node temporal network. 
	 We denote layer $t$ by $L_t$.
	}
	\label{iprML}
\end{table}

\begin{table}[t]
	\centering
	\small
	\setlength{\tabcolsep}{6pt}
	\renewcommand{\arraystretch}{1.05}
	\begin{tabular}{lcc}
	\toprule
	Method & IPR for $k = 2$ (\%) & IPR for $k = 5$ (\%) \\
	\midrule
	Uniform on community assignments                 & 3.57 & 4.46 \\
	Yang et al.\ \cite{Yang11}         & 0.57 & 1.53 \\
	Bazzi et al.\ \cite{Bazzi20}       & 0.66 & 1.22 \\
	LECS prior                               & 0.41 & 0.95 \\
	\bottomrule
	\end{tabular}
	\caption{IPR of the community-size distribution of the overall network for each temporal-network model for a 5-layer, 50-node temporal network.
	}\label{iprML2}
\end{table}

For convenience, we restate our previous claims (see the introduction of Section \ref{CommSizeDistLoc}) about temporal networks: 
\begin{itemize}
\item The community-size distribution of a uniform distribution on community assignments (see Section \ref{unifDistML}) is highly localized. 
\item The community-size distributions of the Yang et al.\ \cite{Yang11} and Bazzi et al.\ \cite{Bazzi20} discrete-time Markov-process approaches (see Section \ref{NodewiseEvolML}) are much less localized than that of a uniform distribution on community assignments.
However, for both the Yang et al.\ and Bazzi et al.\ approaches, the single-layer community-size distributions of later layers are more localized than those of earlier layers. 
\item In our approach (see Section \ref{novelML}), the localization of the single-layer community-size distributions increases much more slowly than it does for the Yang et al.\  \cite{Yang11} and Bazzi et al.\ \cite{Bazzi20} approaches. Consequently, the overall community-size distribution is much less localized than those of the Yang et al.\ and Bazzi et al.\ approaches. 
\end{itemize}
In Figures \ref{multilayercomp}, \ref{multilayercomp2}, \ref{multilayercomp3}, and \ref{multilayercomp4}, we see that all of these claims appear to hold for both $k = 2$ communities and $k = 5$ communities. In the example in Figure \ref{multilayercomp2}, we see for $k = 2$ that the overall community-size histogram is much more localized for a uniform distribution on community assignments than for the three other methods. We also observe that the overall community-size histogram 
appears to be less localized in our approach than in those of the Yang et al.\ \cite{Yang11} or Bazzi et al.\ \cite{Bazzi20} approaches.

Our qualitative observations are also confirmed by the IPR values in Tables \ref{iprML} and \ref{iprML2}. For example, for $k = 2$, the IPR of the overall community-size distribution is $0.0357$ for a uniform distribution on community assignments, $0.0057$ for the Yang et al. approach, $0.0066$ for the Bazzi et al. approach, and $0.0041$ for our LECS-prior-based approach, confirming that the overall community-size distribution is much more localized for a uniform distribution on community assignments than for the three other methods and that the community-size distribution for our approach is less localized those of the Yang et al.\ and Bazzi et al.\ approaches. Additionally, for $k = 2$, the IPRs of the single-layer community-size distributions for the Yang et al.\ approach increase from $0.0196$ in layer $1$ to $0.0226$ in layer $5$, illustrating that the single-layer community-size distributions become more localized over time.

%%%%

\subsection{Asymptotic Behavior of our LECS prior} \label{ArashNumerics}

{We have shown empirically (see Section~\ref{CSML}) that the localization of the single-layer community-size distributions increases much more slowly for our LECS prior than it does for the Yang et al.\  \cite{Yang11} and Bazzi et al.\ \cite{Bazzi20} approaches.}

{In this subsection, we study the LECS prior with a minor modification. Specifically, we fix the retention probability $p_{r,\ell}$ in~\eqref{eq:lecs-count-split} to $p_{r,\ell} = p$, rather than sampling it independently from  $\mathrm{Unif}(0,1)$. We refer to this version as the ``LECS prior with fixed retention probability" $p$. We prove the following result.

\begin{theorem}\label{ArashTheorem}
For $k = 2$ communities, as the layer $\ell \to \infty$, the single-layer community-size distributions of {the LECS prior with fixed retention probability $p$} satisfy
\begin{equation}\label{stationary_dist}
	P_i \to \frac{(1 - p^{i + 1})(1 - p^{n - i + 1})}{\sum_{j = 0}^n (1 - p^{j + 1})(1 - p^{n - j + 1})} \, ,
\end{equation}
where $P_i$ is defined in \eqref{Psubi2}. 
\end{theorem}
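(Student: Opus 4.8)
The plan is to show that, when $p\in(0,1)$ and $k=2$, the sequence of community-$1$ sizes $X_\ell := n_1^{(\ell)}$ is a finite, reversible Markov chain on $\{0,1,\dots,n\}$ whose stationary distribution is the right-hand side of \eqref{stationary_dist}, and then to invoke the standard convergence theorem for finite ergodic chains. First I would note that the microcanonical labeling step in Section~\ref{novelML} does not change community sizes and that for $k=2$ the movers of a community have only one available weak composition into $k-1=1$ part, which sends every mover to the opposite community; hence the size update depends on $g_{(\ell-1)}$ only through $X_{\ell-1}$. Writing $a = X_{\ell-1}$ and $b = n-a$, we get $X_\ell = c_{11}^{(\ell)} + \bigl(b - c_{22}^{(\ell)}\bigr)$ with $c_{11}^{(\ell)}\sim\geomb(a,p)$ and $c_{22}^{(\ell)}\sim\geomb(b,p)$ independent, where $\geomb(m,p)$ has PMF $(1-p)p^{m-j}/(1-p^{m+1})$ on $\{0,\dots,m\}$. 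Passing to the ``mover counts'' $y_1 = a - c_{11}^{(\ell)}$ and $y_2 = b - c_{22}^{(\ell)}$, the transition kernel becomes
\begin{equation*}
	P(a,c) = \frac{(1-p)^2}{(1-p^{a+1})(1-p^{n-a+1})}\ \sum_{\substack{0\le y_1\le a,\ 0\le y_2\le n-a\\ y_2 - y_1 = c - a}} p^{y_1 + y_2}\,.
\end{equation*}

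The crucial step is to check detailed balance against $\pi_i \propto (1-p^{i+1})(1-p^{n-i+1})$. Multiplying $P(a,c)$ by $\pi_a$ cancels exactly the denominator $(1-p^{a+1})(1-p^{n-a+1})$ --- this cancellation is precisely why the truncated-geometric normalizers were chosen to have that form --- leaving $\pi_a P(a,c) = (1-p)^2 Z^{-1} S(a,c)$, where $S(a,c)$ denotes the sum of $p^{y_1+y_2}$ over the index set above and $Z$ is the normalizing constant of $\pi$. It then remains to verify the symmetry $S(a,c)=S(c,a)$: assuming $c\ge a$ without loss of generality, the constraint $y_2 = y_1 + (c-a)$ forces $y_1\in\{0,\dots,\min(a,n-c)\}$ and $y_1+y_2 = 2y_1 + (c-a)$, so $S(a,c) = p^{c-a}\sum_{j=0}^{\min(a,n-c)} p^{2j}$; applying the mirror reparametrization $z_1 = z_2 + (c-a)$ to $S(c,a)$ produces the identical expression. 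Hence $\pi_a P(a,c) = \pi_c P(c,a)$, so $\pi$ is reversible and therefore stationary.

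Finally I would establish ergodicity and read off the limit. For any $a,c$ the index set defining $S(a,c)$ is nonempty (take $y_1=0,\ y_2=c-a$ when $c\ge a$ and $y_2=0,\ y_1=a-c$ otherwise), so $P(a,c)>0$ for all $a,c\in\{0,\dots,n\}$; in particular the chain is irreducible and aperiodic. The convergence theorem for finite irreducible aperiodic Markov chains then gives $\P(X_\ell = i)\to\pi_i$ as $\ell\to\infty$ for every initial law --- in particular for the $\Dir(1,\dots,1)$/uniform-composition initialization of layer~$1$ guaranteed by Proposition~\ref{prop:uniform-compositions} --- and evaluating $Z = \sum_{j=0}^n (1-p^{j+1})(1-p^{n-j+1})$ identifies $\pi_i$ with the right-hand side of \eqref{stationary_dist}. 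Since the number of node-layers in community~$1$ in layer~$\ell$ is exactly $X_\ell$, the quantity $P_i$ in \eqref{Psubi2} equals $\P(X_\ell = i)$ (in the $M\to\infty$ limit), which yields the claim.

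The main obstacle is identifying the stationary distribution in the first place --- equivalently, recognizing that the truncated-geometric normalizing constants are what make the product form $\pi_i\propto(1-p^{i+1})(1-p^{n-i+1})$ satisfy detailed balance --- together with the bookkeeping of summation ranges needed to prove $S(a,c)=S(c,a)$; the rest is routine finite-Markov-chain theory. It is also worth a remark that the statement tacitly requires $p\in(0,1)$: at $p=1$ the right-hand side of \eqref{stationary_dist} is indeterminate, and at $p=0$ every node is a remainer, so the chain is the identity and does not equilibrate.
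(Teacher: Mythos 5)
Your proposal is correct, and it shares the paper's central reduction: for $k=2$ the microcanonical labeling and the trivial $(k-1)$-part composition make the community-$1$ size $n_1^{(\ell)}$ a Markov chain on $\{0,\dots,n\}$ with increments given by two independent truncated-geometric ``mover counts,'' and your transition kernel coincides with the paper's \eqref{transition_kernel}. Where you diverge is in how stationarity of $\pi_i \propto (1-p^{i+1})(1-p^{n-i+1})$ is verified. The paper checks global balance directly, computing $\sum_i \pi(i)P(i,k)$ by exchanging the order of summation and collapsing the inner sum over $i$ to the indicator $\mathbb{1}\{v_2\le k,\ v_1\le n-k\}$. You instead prove the stronger statement that the chain is \emph{reversible}: the truncated-geometric normalizers cancel against $\pi_a$, and detailed balance reduces to the symmetry $S(a,c)=S(c,a)$ of the constrained sum $\sum p^{y_1+y_2}$, which you verify correctly via the reparametrization $y_1\in\{0,\dots,\min(a,n-c)\}$, $y_1+y_2=2y_1+(c-a)$. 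Both verifications are of comparable length; yours yields reversibility as a bonus and makes the role of the $\geomb$ normalizing constants transparent, while the paper's is a one-shot identity that never needs the symmetry observation. You are also more careful than the paper on ergodicity: the paper asserts the chain is ``clearly ergodic,'' whereas you exhibit $P(a,c)>0$ for all pairs (taking $y_1=0,\ y_2=c-a$ or $y_2=0,\ y_1=a-c$), which gives irreducibility and aperiodicity outright, and your closing remark that the statement tacitly assumes $p\in(0,1)$ is a fair observation (at $p=1$ the displayed limit is indeterminate; at $p=0$ the chain is the identity, although the uniform layer-$1$ initialization happens to make the claim hold trivially there). The identification of $P_i$ from \eqref{Psubi2} with $\P(n_1^{(\ell)}=i)$ in the $M\to\infty$ limit is the same informal step the paper takes.
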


\begin{proof}
With $k = 2$ communities, we have
$n_2^{(\ell)} = n - n_1^{(\ell)}$, where we recall from Section~\ref{novelML} that $n_r^{(\ell)}$ is the size of community $r$ in layer $\ell$. Therefore, it suffices to consider only the distribution of $n_1^{(\ell)}$. 

Let $\geom(n,p)$ denote the distribution of a random variable $V$ with probability mass function 
\begin{equation*}
	\mathbb P(V = m) = \frac{p^{m} (1 - p)}{1 - p^{n + 1}} \, , 
\end{equation*}
where $m \in \{0,1,\dots,n\}$. Note that $n - V \sim \geomb(n,p)$, which we defined in Section~\ref{novelML}.

By definition, 
\begin{equation}\label{group1sizes}
	n_1^{(\ell + 1)} = c^{(\ell + 1)}_{11} + c^{(\ell + 1)}_{21} \, ,
\end{equation} where we recall from Section~\ref{novelML} that $c^{(\ell)}_{rs}$ is the number of nodes in community $r$ in layer $\ell - 1$ that are in
 community $s$ in layer $\ell$. Similarly, 
\begin{equation} \label{group2sizes}
	c_{21}^{(\ell + 1)} = n_2^{(\ell)} - c_{22}^{(\ell + 1)} \, .
\end{equation}

Using the sampling procedure \eqref{eq:lecs-count-split} and the fact that 
$n - V \sim \geomb(n,p)$, we have 
\begin{equation}\label{V_def}
	c_{rr}^{(\ell + 1)} = n_r^{(\ell)} - V_r 
\end{equation}
for $V_r$ sampled independently from
\begin{equation}\label{V_dist}
	V_r \sim \geom(n_r^{(\ell)}, p) \, ,
\end{equation} 
where $r \in \{1,2\}$.

Combining \eqref{group1sizes}, \eqref{group2sizes}, and \eqref{V_def} yields
\begin{align}\label{c_markov}
	n_1^{(\ell + 1)} = n_1^{(\ell)} - V_1 + V_2 \, .
\end{align}
Therefore, equations {(\ref{V_dist}) and (\ref{c_markov}) define} a Markov chain on $n_1^{(\ell)}$. 
Consequently, to prove Theorem \ref{ArashTheorem}, it suffices to show that the Markov chain (\ref{V_dist}, \ref{c_markov}) 
is ergodic with a unique stationary distribution
\begin{equation}\label{stationary_dist:proof}
	\pi(i) = \frac{(1 - p^{i + 1})(1 - p^{n - i + 1})}{\sum_{j=0}^n (1 - p^{j + 1})(1 - p^{n - j + 1})} \, .
\end{equation}

The Markov chain (\ref{V_dist}, \ref{c_markov}) is clearly ergodic, so it suffices to show that the stationary distribution is 
\eqref{stationary_dist:proof}. 
The transition kernel $P$ of the Markov chain is 
\begin{equation}\label{transition_kernel}
	P(i,k) = \sum_{\substack{0 \, \le \, v_1 \, \le \, i \\ 0 \, \le \, v_2 \, \le \, n - i}} \mathbb{1}\{k = i - v_1 + v_2\}\frac{p^{v_1}(1 - p)}{1 - p^{i + 1}}\frac{p^{v_2}(1 - p)}{1 - p^{n - i + 1}} \, ,
\end{equation}
where $P(i,k)$ is the probability that $n_1^{(\ell + 1)} = k$ given that $n_1^{(\ell)} = i$.
To prove that the stationary distribution of the Markov chain (\ref{V_dist},\,\ref{c_markov}) is
\eqref{stationary_dist:proof}, it suffices to show that the stationary distribution is
\begin{equation}\label{stationary_dist:2}
	\pi(i) = C_{n,p} \,(1 - p^{i+1})(1 - p^{n - i + 1})
\end{equation}
for some constant $C_{n, p} > 0$. Using the expression \eqref{transition_kernel} for the transition kernel $P$, we see that
\begin{align}
	\sum_{i=0}^n \pi(i) P(i,k) 
		&= C_{n,p} \sum_{i=0}^n \sum_{\substack{0 \, \le \, v_1 \, \le \, i \\ 0 \, \le \, v_2 \, \le \, n - i}} \mathbb{1}\{k = i - v_1 + v_2\}	\, p^{v_1} (1 - p) p^{v_2} (1 - p) \notag \\
		&= C_{n,p} \sum_{v_1 = 0}^n \sum_{v_2 = 0}^n \sum_{i=0}^n \mathbb{1}\{k = i - v_1 + v_2 \,, \, v_1 \le i \,, \, v_2 \le n - i\} \, p^{v_1} (1 - p) p^{v_2} (1 - p) \notag \\
		&= C_{n,p} \sum_{v_1 = 0}^n \sum_{v_2 = 0}^n \left[p^{v_1} (1 - p) p^{v_2} (1 - p) \sum_{i=0}^n \mathbb{1}\{k = i - v_1 + v_2 \,, \, v_1 \le i \,, \, v_2 \le n - i\} \right] \, .\label{ArashProofSum1}
\end{align}
We rewrite the conditions $k = i - v_1 + v_2$, $v_1 \le i$, and $v_2 \le n - i$ 
as $i = k + v_1 - v_2$, $v_2 \le k$, and $v_1 \le n - k$, and we thereby obtain
\begin{align}
	\sum_{i = 0}^n \mathbb{1}\{k = i - v_1 + v_2 \,, \,v_1 \le i \,, \,v_2 \le n - i\} 
			&= \sum_{i = 0}^n \mathbb{1}\{i = k + v_1 - v_2 \,, \, v_2 \le k \,, \, v_1 \le n - k\} \notag \\
			&= \mathbb{1} \{v_2 \le k, \, v_1 \le n - k\} \, . \label{ArashProofSum2}
\end{align}
Combining \eqref{ArashProofSum1} and \eqref{ArashProofSum2} yields
\begin{align*}
	\sum_{i=0}^n \pi(i) P(i, k)  &= C_{n, p} \sum_{v_1 = 0}^n \sum_{v_2 = 0}^n \mathbb{1}\{v_2 \le k \,, \,v_1 \le n - k\} \, p^{v_1} (1 - p) p^{v_2} (1 - p) \\
		&= C_{n,p} (1 - p^{k + 1})(1 - p^{n - k + 1}) = \pi(k) \,.
\end{align*}
The above argument holds for all $i$ and $k$, so the stationary distribution of the Markov chain (\ref{V_dist}, \ref{c_markov}) is \eqref{stationary_dist:2}.
\end{proof}

Theorem \ref{ArashTheorem} supports our claim that our LECS prior
results in 
less-localized community-size distributions than Markov-process models (such as the Bazzi et al.~\cite{Bazzi20} and Yang et al.~\cite{Yang11} methods) because the 
distribution \eqref{stationary_dist}
is almost uniform for community sizes $i$ {that are neither very close to $0$ nor very close to the number $n$ of nodes in a network}. In particular, Theorem \ref{ArashTheorem} has the following corollary.

\begin{corollary}\label{ArashCorollary}
Let $C_{n, p} = \left[\sum_{j=0}^n (1 - p^{j + 1})(1 - p^{n - j + 1}) \right]^{-1}$, suppose that $\eta_n$ is a sequence with $\eta_n \to \infty$ and $\eta_n = o(n)$ as $n \to \infty$, and define $I_n = [\eta_n, n - \eta_n]$. As $n \to \infty$, we then have
\begin{equation*}
	\sup_{i \, \in \, I_n} \left| \frac{\pi(i)}{C_{n, p}} - 1\right| \to 0 \,.
\end{equation*}
\end{corollary}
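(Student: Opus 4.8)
The plan is to show that each of the two factors $1-p^{i+1}$ and $1-p^{n-i+1}$ is uniformly close to $1$ on the interval $I_n = [\eta_n, n-\eta_n]$, and that the normalizing constant $C_{n,p}^{-1}$ is asymptotic to $n+1$, so that $\pi(i)/C_{n,p} \to 1$ uniformly. Since $0 \le p < 1$ is fixed (the case $p=0$ is trivial, and we may assume $p>0$), for $i \in I_n$ we have $i+1 \ge \eta_n + 1$ and $n-i+1 \ge \eta_n + 1$, so both exponents are at least $\eta_n+1 \to \infty$. Hence $0 \le p^{i+1} \le p^{\eta_n+1} \to 0$ and likewise $0 \le p^{n-i+1} \le p^{\eta_n+1} \to 0$, uniformly in $i \in I_n$.

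First I would write
\[
	\frac{\pi(i)}{C_{n,p}} = (1-p^{i+1})(1-p^{n-i+1}) = 1 - p^{i+1} - p^{n-i+1} + p^{n+2}\,,
\]
so that
\[
	\left|\frac{\pi(i)}{C_{n,p}} - 1\right| \le p^{i+1} + p^{n-i+1} + p^{n+2} \le 3 p^{\eta_n + 1}
\]
for every $i \in I_n$. Taking the supremum over $i \in I_n$ gives $\sup_{i \in I_n} |\pi(i)/C_{n,p} - 1| \le 3p^{\eta_n+1} \to 0$ as $n \to \infty$, since $\eta_n \to \infty$ and $0 < p < 1$. This is already the claimed statement as written, because the corollary is phrased in terms of the \emph{unnormalized} ratio $\pi(i)/C_{n,p}$; no control of $C_{n,p}$ itself is actually needed for this precise inequality.

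If one also wants the stronger (and more interpretable) conclusion that the normalized stationary probabilities $P_i = \pi(i)$ are uniformly close to $1/(n+1)$ on $I_n$, I would additionally estimate $C_{n,p}^{-1} = \sum_{j=0}^n (1-p^{j+1})(1-p^{n-j+1})$. Each summand lies in $[(1-p)^2, 1]$, and more precisely equals $1 - O(p^{j+1} + p^{n-j+1})$; summing the geometric tails gives $C_{n,p}^{-1} = (n+1)\big(1 + O(1/n)\big)$, so that $(n+1)\pi(i) \to 1$ uniformly on $I_n$ as well. The only mild subtlety is handling the boundary case $p = 0$ separately (where $\pi(i) \equiv C_{n,0} = 1/(n+1)$ exactly and there is nothing to prove) and recording that for fixed $p \in (0,1)$ the bound $p^{\eta_n+1}$ decays to zero precisely because $\eta_n \to \infty$; the condition $\eta_n = o(n)$ is what guarantees the interval $I_n$ is nonempty for large $n$. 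There is no real obstacle here — the argument is a short geometric-decay estimate — so the main point is simply to organize the two factors and the error terms cleanly.
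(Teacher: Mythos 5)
Your argument is correct and is essentially the paper's own proof: both expand $(1-p^{i+1})(1-p^{n-i+1})$, bound $|\pi(i)/C_{n,p}-1|$ by the geometric terms $p^{i+1}+p^{n-i+1}+p^{n+2}$, and use monotonicity of $j\mapsto p^j$ together with $\eta_n\to\infty$ to conclude uniform decay (the paper writes the bound as $2p^{\eta_n+1}+p^{n+2}$ rather than your $3p^{\eta_n+1}$, an immaterial difference). Your extra remarks on $C_{n,p}^{-1}\sim n+1$ and the trivial $p=0$ case are fine but not needed for the statement as written.
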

The delocalization that is implied by Theorem \ref{ArashTheorem} and Corollary \ref{ArashCorollary} 
 arises from geometric retention.  We do not sample a weak composition, as movers from community $r$ must go to the only other community because $k = 2$.
 Moreover, we expect that our LECS prior with fixed retention probability $p$ is essentially the most localized case, as we expect (by the law of total variance~{\cite[Theorem 9.5.5]{Blitzstein19}}) that placing a uniform prior on $p_{r,\ell}$ will
 decrease localization. Therefore, we expect our LECS prior with random $p_{r,\ell}$ to be even less localized.

\begin{proof}[Proof of Corollary~\ref{ArashCorollary}]
{By definition of $C_{n,p}$, equation \eqref{stationary_dist:proof} implies that the stationary distribution} $\pi(i)$ of the Markov chain (\ref{V_dist}, \ref{c_markov}) on {$n_1^{(\ell)}$}
satisfies 
\begin{equation}\label{piExpression1}
	\pi(i) = C_{n, p} (1 - p^{i + 1})(1 - p^{n - i + 1}) \, .
\end{equation}
Therefore,
\begin{align*}
	\left| \frac{\pi(i)}{C_{n, p}} - 1 \right| &= \left| -p^{i + 1} -p^{n - i + 1} + p^{n + 2} \right| \\
								&\le p^{i + 1} + p^{n - i + 1} + p^{n + 2} \,.
\end{align*}
Because $j \mapsto p^{j}$ is a decreasing function of $j$
for $p \in (0,1)$, for $i \in I_n$, we have
\begin{equation*}
	\left| \frac{\pi(i)}{C_{n, p}} - 1 \right| \le p^{\eta_n + 1} + p^{n - (n - \eta_n) + 1} + p^{n + 2} \,.
\end{equation*}
That is,
\begin{equation*}
	\sup_{i \in I_n} \left| \frac{\pi(i)}{C_{n, p}} - 1 \right| \le  2 p^{\eta_n + 1} + p^{n + 2} = o(1) \,,
\end{equation*}
which completes the proof.
\end{proof}

%%%

\subsection{Summary}

Our {numerical computations and theoretical results} confirm our claims that the choice of generative model of community assignments has a significant effect on the localization of community-size distributions. We observed that a uniform distribution on community assignments yields substantially more localized community-size distributions than the other examined methods for temporal community structure. We also observed that the single-layer community-size distributions of the discrete-time Markov-process approaches become more localized over time, whereas the localization of the single-layer community-size distributions increases much more slowly with time for our LECS-prior-based approach. Accordingly, in temporal networks, we conclude that our LECS-prior-based approach has less localized overall community-size distributions than discrete-time Markov-process models.

%%%%

{
\section{Posterior Sampling for Temporal SBMs with LECS, Bazzi et al., and Uniform Priors}\label{sec:algo}

In this section, we describe the algorithms that we use to sample from the posterior distributions $\P(g|A)$ for each choice of community-assignment probability distribution $\P(g)$. 
Due to the relatively complicated nature of the expressions for $\P(A|g)$ and $\P(g)$, directly sampling from the posterior distributions is not tractable. Therefore, we use Gibbs updates \cite{Park22} 
{for} individual node-layers $(i,\ell)$ and augment them with occasional multilayer label swaps to escape local extrema so that we approximately sample from the posterior distribution.

\medskip 

\paragraph{Setup of generative models}

All examined methods use the same $\P(A|g)$. We consider independent SBMs in each layer, so
\begin{equation}\label{dcSBM}
	\P(A|g,\omega) = \prod_{\ell = 1}^{L} \P(A^{(\ell)}|g_{(\ell)},\omega^{(\ell)}) = \prod_{\ell = 1}^L \prod_{1 \le i < j \le n} \left(\omega_{g_{(i,\ell)}g_{(j,\ell)}}^{(\ell)}\right)^{A_{ij}^{(\ell)}}\left(1 - \omega_{g_{(i,\ell)}g_{(j,\ell)}}^{(\ell)}\right)^{1 - A_{ij}^{(\ell)}} \,,
\end{equation}
where $\omega^{(\ell)} \in [0,1]^{k \times k}$ (for each $\ell \in [L]$) is a matrix whose entries $\omega^{(\ell)}_{rs}$ control the probabilities of edges between nodes in communities $r$ and $s$ in layer $\ell$.
We set the prior distributions for each $\omega^{(\ell)}_{rs}$ to be independent uniform distributions on $[0,1]$ for each $r$ and $s$ such that $r,s \in [k]$.
We set $\omega^{(\ell)}_{rs} = \omega^{(\ell)}_{sr}$ if $r > s$ (i.e., we force $\omega^{(\ell)}$ to be symmetric). 
Integrating\footnote{See \cite{Polanco23} for the computation of similar integrals.} $\P(A|g,\omega)$ with respect to the probability measure that is induced by $\P(\omega)$ yields the posterior distribution
\begin{equation}\label{dcSBMmarg}
	\P(A|g) = \prod_{\ell = 1}^L \prod_{1 \le r \le s \le k} \frac{m^{(\ell)}_{rs}!(t^{(\ell)}_{rs}-m^{(\ell)}_{rs})!}{(t^{(\ell)}_{rs}+1)!} \,,
\end{equation}
where 
\begin{align*}
	t^{(\ell)}_{rs} &:= \sum_{1 \le i, j \le n;\, i \ne j} \mathbb{1}\{g_{(i,\ell)} =  r \, , \, g_{(j,\ell)} = s\} \,, \\
	m^{(\ell)}_{rs} &:= \sum_{1 \le i, j \le n;\, i \ne j} A_{ij}^{(\ell)}\mathbb{1}\{g_{(i,\ell)} =  r \, , \, g_{(j,\ell)} = s\}
\end{align*}	
for $r \ne s$ and
\begin{align*}
	t^{(\ell)}_{rr} &:= \frac{1}{2}\sum_{1 \le i, j \le n; \, i \ne j} \mathbb{1}\{g_{(i,\ell)} = r \, , \, g_{(j,\ell)} = r\} \,, \\
	m^{(\ell)}_{rr} &:=\frac{1}{2} \sum_{1 \le i, j \le n; \, i \ne j} A_{ij}^{(\ell)}\mathbb{1}\{g_{(i,\ell)} = r \, , \, g_{(j,\ell)} = r\} \,.
\end{align*}
That is, for $r \ne s$, the quantity $t^{(\ell)}_{rs}$ is the number of pairs of distinct node-layers $(i,\ell)$ and $(j,\ell)$ for which $(i,\ell)$ is in community $r$ and $(j,\ell)$ is in community $s$. Similarly, $m^{(\ell)}_{rs}$ is the number of such pairs that are connected directly by an edge. The expressions for $t^{(\ell)}_{rr}$ and $m^{(\ell)}_{rr}$ have an additional factor of $1/2$ to avoid double-counting.

We start by giving our general algorithm (see Algorithm \ref{algo}) to sample from the posterior distribution $\P(g\mid A)$ of temporal SBMs.

\begin{altalgorithm}[One step of our algorithm to sample from the posterior distribution $\P(g\mid A)$]
\begin{enumerate}
\item Consider each node-layer $(i,\ell)$ in the order $g_{(1,1)}, g_{(1,2)}, \ldots, g_{(1,L)}, g_{(2,1)}, g_{(2,2)}, \ldots, g_{(n,L)}$:
\begin{enumerate}
\item With probability $1-p$, perform a Gibbs-sampling step:
\begin{enumerate}
\item Sample $g_{(i,\ell)}$ from $\P(g_{(i,\ell)} | \tilde{g} ,A)$, where $\tilde{g}$ is the set of current community assignments other than $g_{(i,\ell)}$. (See Section \ref{GibbsSamplingStep} for details.)
\end{enumerate}
\item With probability $p$, perform a multilayer swap:
\begin{enumerate}
\item Sample $r, s \in [k]$ and $\ell \in [L]$ independently from uniform distributions. 
\item Propose new group assignment $g^\star$ by swapping labels $r$ and $s$ in all layers $\ell\ge \ell_0$:
\begin{equation*}
	g^\star_{(i,\ell)} = \begin{cases}
		s\,,& g_{(i,\ell)} = r\,,\ \ell\ge \ell_0 \\
		r\,,& g_{(i,\ell)} = s\,,\ \ell\ge \ell_0 \\
		g_{(i,\ell)}\,,& \text{otherwise}\,.
	\end{cases}
\end{equation*}
\item Accept the new group assignment $g^*$ with probability
\begin{equation}\label{MHcommunity}
	\min\left\{1,\frac{\P(g^*|A)}{\P(g|A)} \right\} \, .
\end{equation}	
(See Section \ref{MultiNodeMoveStep} for details.)
\end{enumerate}
\end{enumerate}
\end{enumerate}
\label{algo}
\end{altalgorithm}

In Section \ref{MainAlgDiscussion}, we discuss various aspects of Algorithm \ref{algo}.
In Section \ref{YangBaseline}, we discuss the Yang et al.\ \cite{Yang11} method, which we use as a baseline for our comparisons.

%%%

\subsection{Discussion of Algorithm \ref{algo}}\label{MainAlgDiscussion}

In this subsection, we discuss various aspects of Algorithm \ref{algo}. In Section \ref{PgComputation}, we explain how we compute $\P(g)$. In Section \ref{GibbsSamplingStep}, we describe the process 
to sample
from the posterior distribution $\P(g_{(i,\ell)} | \tilde{g},A)$. In Section \ref{MultiNodeMoveStep}, we explain the process to compute the acceptance probability (\ref{MHcommunity}) of a multilayer swap. In Section \ref{MLSwapMovesMotivation}, we discuss the motivation behind our use of multilayer swaps. Finally, in Section \ref{MixedGibbsMH}, we statistically justify our simultaneous use of Gibbs-sampling steps and multilayer swaps.

%%%%

\subsubsection{Computing $\P(g)$} \label{PgComputation}

In Appendix \ref{BazziCF}, we derive the expression
\begin{equation} \label{this}
	\P(g) = \P(g_{(1)}) \prod_{\ell = 2}^L \P(g_{(\ell)}|g_{(\ell - 1)}) \,.
\end{equation}	
Varying the community assignment of node-layer $(i,\ell)$ with all other community assignments fixed affects only the terms  $\P(g_{(\ell)}|g_{(\ell - 1)})$ and $ \P(g_{(\ell + 1)}|g_{(\ell )})$. 
To simplify the expression for $\P(g)$ for each choice of prior distribution, we perform the following procedures:
\begin{itemize}
  \item \textbf{Uniform prior.} The quantity $\log \P(g)$ is constant and cancels in all ratios, so we do not need to consider this term. (In the present paper, $\log(\cdot)$ denotes the complex base-$e$ logarithm with a branch cut along the negative real axis.)
  \item \textbf{Bazzi et al. approach.} We use the integral form 
\begin{align}\label{BazziIntegral}
	P(g_{(\ell)}\mid g_{(\ell - 1)}) \;=\; \int_{[0,1] \times \Delta^{k-1}} \prod_{i = 1}^n \Big( \alpha_\ell \,\mathbf{1}\{ g_{(i,\ell)} = g_{(i,\ell-1)} \} + (1-\alpha_\ell)\, \kappa^{(\ell)}_{\,g_{(i,\ell)}} \Big)\, d\mu(\alpha_\ell,\kappa^{(\ell)})\,,
\end{align}
where $\mu$ is the product of the uniform measure on $[0,1]$ and the uniform measure on $\Delta^{k - 1}$. 
The expression \eqref{BazziIntegral}, which we derive in Appendix~\ref{BazziCF}, does not have a
closed form, so we approximate it using
Monte Carlo integration with 1000 draws of $(\alpha_\ell,\kappa^{(\ell)})\sim \mathrm{Unif}[0,1]\times \mathrm{Unif}(\Delta^{k - 1})$.   
\item \textbf{LECS prior.}  In Appendix~\ref{novelCF}, we derive the closed-form expression
\begin{equation}\label{novelMarginal}
	\P(g_{(\ell)}\mid g_{(\ell - 1)}) \;=\; \prod_{r = 1}^k \left[\frac{ J\!\left(n_r^{(\ell-1)} - c^{(\ell)}_{rr},\, n_r^{(\ell - 1)}\right)}{\binom{\,n_r^{(\ell - 1)} - c^{(\ell)}_{rr} + k - 2\,}{\,n_r^{(\ell - 1)} - c^{(\ell)}_{rr}\,}\; \binom{\,n_r^{(\ell - 1)}\,}{\,c^{(\ell)}_{r,1},\ldots,c^{(\ell)}_{r,k}\,}} \right] \,,
\end{equation}
with
\begin{equation*}
	J(k_1,k_2) \;=\; \int_0^1 x^{k_1}\,\frac{x-1}{x^{k_2 + 1} - 1}\,dx = \frac{1}{k_2 + 1} \left[ \psi\left(\frac{k_1 + 2}{k_2 + 1}\right) - \psi\left(\frac{k_1 + 1}{k_2 + 1}\right) \right] \,,
\end{equation*}
 where $\psi(z) = \frac{d}{dz} \! \log \Gamma(z)$ is the digamma function \cite{dlmf} and $c^{(\ell)}_{rs}$ are the transition counts. (Recall from Section~\ref{novelML} that $c^{(\ell)}_{rs}$ is the number of nodes in community $r$ in layer $\ell - 1$ that {transition to} community $s$ in layer $\ell$.) 
For efficiency, we precompute $J(k_1,k_2)$ for all integers $k_1$ and $k_2$ such that $0\le k_1\le k_2\le n$. For very small values of $J(k_1,k_2)$ (i.e., for large $k_1$ and $k_2$), we stabilize 
our numerical computations by treating $J(k_1 + 1,k_2)/J(k_1,k_2) \approx 1$ and setting $J(k_1,k_2) = \exp(-16)$ in any computations that yield a value that is smaller than $\exp(-16)$.
\end{itemize}

%%%%

\subsubsection{Sampling from $\P(g_{(i,\ell)}|\tilde{g},A)$}\label{GibbsSamplingStep}

To sample from the distribution $\P(g_{(i,\ell)} | \tilde{g} ,A)$, we compute
\begin{equation}\label{condProbSingle}
	\P(g_{(i,\ell)} = r | \tilde{g} ,A) = \frac{\P(A, g_{(i,\ell)} = r, \tilde{g})}{ \sum_{s = 1}^k \P(A, g_{(i,\ell)} = s, \tilde{g})} \, ,
\end{equation}
where $\P(A, g_{(i,\ell)} = r, \tilde{g})$ is $\P(A,g)$ evaluated at
the specified adjacency structure $A$ 
for a community assignment $g$ that includes all current community assignments other than
$g_{(i,\ell)}$, which we take to be $r$. Because $\P(A,g) = \P(A|g) \P(g)$, we use the approach in Section \ref{PgComputation} to compute $\P(g)$ and equation \eqref{dcSBMmarg} to compute $\P(A|g)$.

%%%%

\subsubsection{Computation of the Acceptance Probability of a Multilayer Swap} \label{MultiNodeMoveStep}

Recall from \eqref{MHcommunity} that the acceptance probability of a multilayer swap is
\begin{equation*}
	\min\left\{1,\frac{\P(g^*|A)}{\P(g|A)} \right\} \, .
\end{equation*}
Because $\P(A,g) = \P(A|g) \P(g)$ and $\P(g|A) = \frac{\P(A|g) \P(g)}{\P(A)}$, the acceptance probability \eqref{MHcommunity} is equivalent to
\begin{equation*}
	\min\left\{1,\frac{\P(A, g^*)}{\P(A,g)} \right\}\, .
\end{equation*}	
As we discussed in Section \ref{GibbsSamplingStep}, because $\P(A,g) = \P(A|g) \P(g)$, we can then use the approach in Section \ref{PgComputation} to compute $\P(g)$ and equation \eqref{dcSBMmarg} to compute $\P(A|g)$.

%%%%

\subsubsection{Motivation for Multilayer Swaps}\label{MLSwapMovesMotivation}

In this subsubsection, we discuss our motivation for including multilayer swaps 
in our posterior sampling algorithm (see Algorithm \ref{algo}). 

If we apply a naive Gibbs-sampling approach (i.e., one that does not use multilayer swaps), we often obtain incorrect
community identifications in temporal networks. 
A key reason for this is that the SBM likelihood \eqref{dcSBMmarg} is invariant to intra-layer permutations of community labels, whereas the temporal priors (the choices of $\P(g)$ for the Bazzi et al. and LECS approaches) break this invariance across layers. 

Phrased mathematically, the temporal priors in the Bazzi et al. and LECS are label-equivariant, whereas the SBM likelihood \eqref{dcSBMmarg} satisfies a single-argument invariance. Suppose that the finite symmetric group $\mathfrak{S}_k$ (i.e., the group of all permutations) acts on labelings by $(\varpi\!\cdot\! g)_{(i,\ell)} := \varpi\bigl(g_{(i,\ell)}\bigr)$ for any permutation $\varpi\in\mathfrak{S}_k$.
A conditional is
\emph{label-equivariant} if
\begin{equation}
	\P\bigl(\varpi\!\cdot\! g_{(\ell)} \,\big|\, \varpi\!\cdot\! g_{(\ell - 1)}\bigr) \;=\; \P\bigl(g_{(\ell)} \,\big|\, g_{(\ell - 1)}\bigr)\quad\text{for all }\varpi\in\mathfrak{S}_k \,,
\end{equation}
{where $\varpi\!\cdot\! g_{(\ell)} := ((\varpi\!\cdot\! g)_{(1,\ell)}, \ldots, (\varpi\!\cdot\! g)_{(n,\ell)})$ is the vector of community assignments in $\varpi\!\cdot\! g$ for each node-layer in layer $\ell$.}
In this equivariance, one permutes both arguments with the same permutation $\varpi$. This property thus differs from single-argument invariance.

In our temporal-network setting, combining an invariant likelihood with an equivarient prior
yields many local extrema in the landscape of the likelihood $\P(g|A)$ that differ only in the numerical labels of communities across layers
(see Figure \ref{localextremacommunity}). 
Naive Gibbs sampling often cannot cross the barriers between such label-misaligned extrema because the label-invariant SBM likelihood can dominate differences in the distributions $\P(g_{(\ell)} | g_{(\ell - 1)})$. 

A multilayer
swap proposes a relabeling of communities $r$ and $s$ across all layers $\ell\ge \ell_0$, and such a relabeling helps align community assignments across layers. Because the SBM likelihood \eqref{dcSBMmarg} is label-invariant within each layer, the Metropolis--Hastings (MH) ratio 
\begin{equation}
	\min\left\{1,\frac{\P(g^*|A)}{\P(g|A)} \right\} 
\end{equation}
(which we repeat from \eqref{MHcommunity}) reduces to a ratio of terms of the form $\P(g_{(\ell)} | g_{(\ell - 1)})$. 
Such a ratio is sensitive to misalignment and thus helps detect and correct it. Moreover, one only needs to evaluate prior factors that straddle the swap boundary (i.e., $\P(g_{(\ell_0)} | g_{(\ell_0 - 1)})$ and $\P(g_{(\ell_0 + 1)} | g_{(\ell_0)})$),
  so it is efficient to compute the acceptance probabilities.}

\begin{figure}[t]	\centering
	\subfloat[Seeded community structure]{\includegraphics[width=0.49\textwidth]{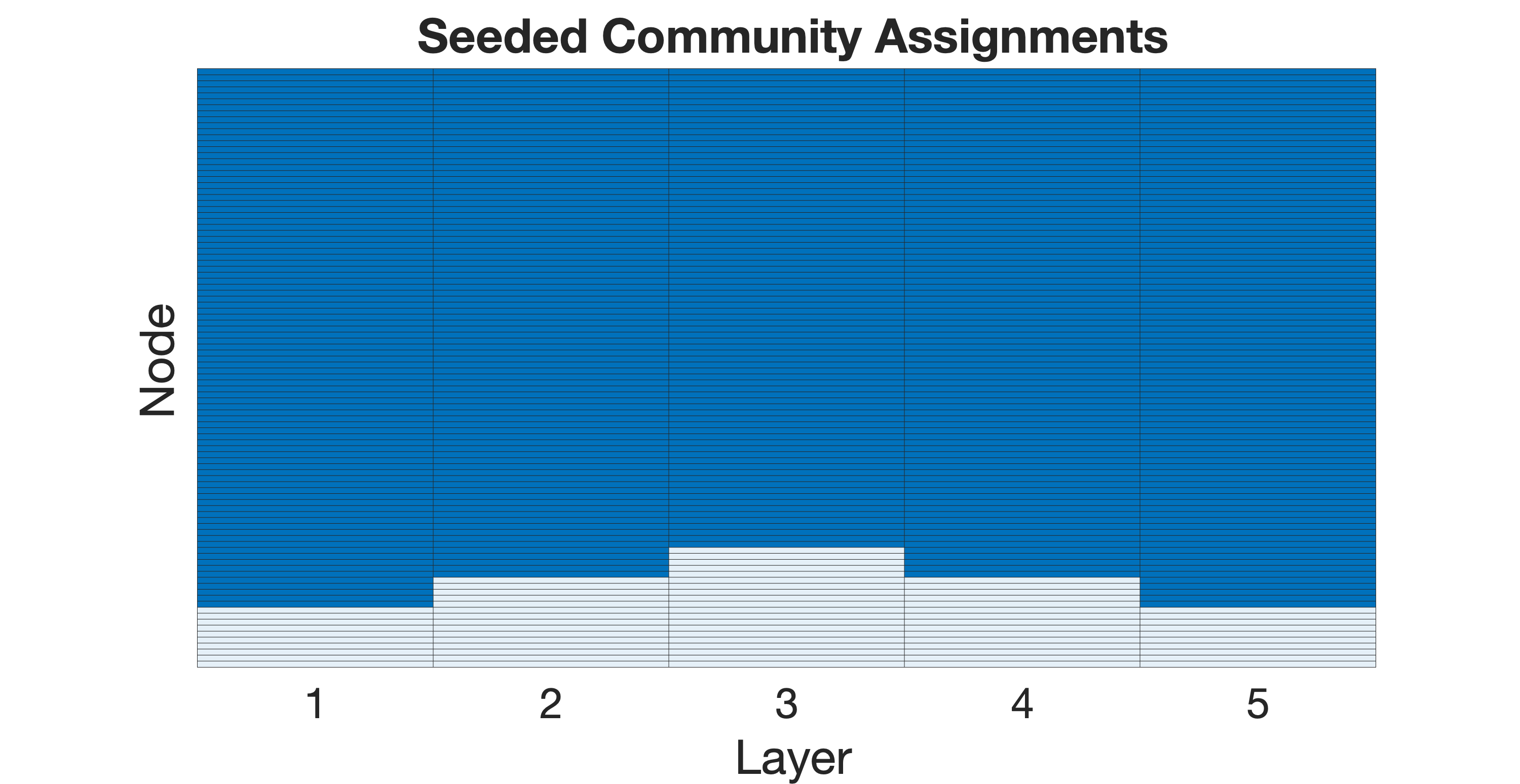}}
	\subfloat[Example of a local extremum]{\includegraphics[width=0.49\textwidth]{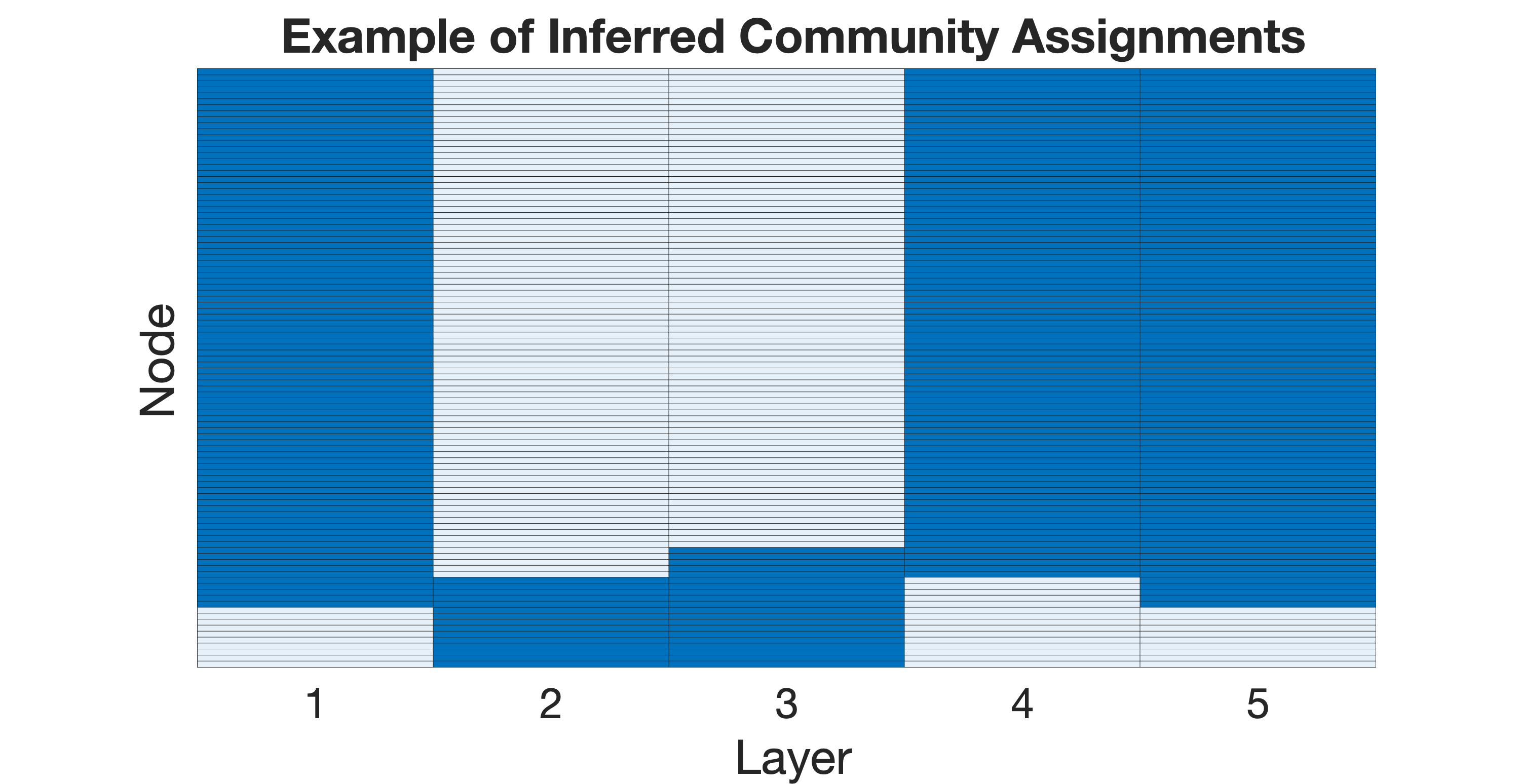}}
	\caption[Heat maps of an example of actual community structure and an illustration of the permuted community structure commonly observed in local maxima of $\P(g|A)$ in a 100-node network with 5 layers.]
	{Heat maps of (a) an example of actual community structure and (b) an illustration of the permuted community structure that we observe commonly at local maxima of $\P(g|A)$ in a 100-node network with 5 layers. Each rectangle in a heat map corresponds to one node-layer $(i,\ell)$. Dark blue rectangles signify the community assignment $g_{(i,\ell)} = 1$, and light blue rectangles signify the community assignment $g_{(i,\ell)} = 2$.}
	\label{localextremacommunity}
	\end{figure}

%%%%%
	
\subsubsection{Correctness of Mixed Gibbs and Metropolis--Hastings Updates}\label{MixedGibbsMH}

As we discussed in Section \ref{sec:algo}, in our posterior-sampling algorithm (see Algorithm \ref{algo}), we interleave exact Gibbs updates for individual node-layers with multilayer MH swap proposals. Each Gibbs step samples from a full conditional distribution $\P(g_{(i,\ell)} | \tilde{g} ,A)$ (where $\tilde{g}$ is the set of current community assignments other than
$g_{(i,\ell)}$) and is thus reversible with respect to $\P(g\mid A)$. Each multilayer swap uses an MH ratio \eqref{MHcommunity} targeting $\P(g\mid A)$ and is thus also reversible. 
The composition of such {Markov chains} preserves the same invariant distribution, so the overall sampler (see Algorithm \ref{algo}) has the desired stationary distribution $\P(g\mid A)$.

%%%%

\subsection{Yang et al.~\cite{Yang11} Baseline} \label{YangBaseline}

To compare with a different likelihood $\P(A|g)$, prior $\P(g)$, and sampler, we also implement the method of Yang et al.~\cite{Yang11}. 
We use their approach as a baseline to help separate the effects of the temporal prior $\P(g)$ from those of the posterior sampling approach.

In this subsection, we briefly describe Yang et al.'s approach.
Yang et al.'s likelihood shares the SBM form \eqref{dcSBMmarg} but uses the same
SBM parameters $\omega_{rs} := \omega_{rs}^{(1)} = \cdots = \omega_{rs}^{(L)}$
 for all layers and employs the priors
 \begin{equation*}
	\omega_{rs} \sim \mathrm{Beta}(\alpha_{rs},\beta_{rs})
\end{equation*}	
for all $r \ge s$, where $\alpha_{rs}$ and $\beta_{rs}$ are user-specified parameters. 
In our implementation of Yang et al.'s approach, we let $\alpha_{rs} = \beta_{rs} = 1$ for all $r \ge s$ and thereby obtain
\begin{equation*}
	\omega_{rs} \sim \text{Uniform}(0,1)
\end{equation*}	
for all $r \ge s$. 
To sample from the posterior distribution $\P(g|A)$, Yang et al.\ \cite{Yang11} used Gibbs sampling with simulated annealing. 
We do so as well, and we use the same simulated-annealing setup as them. Accordingly, for each step $m \in \{1,\ldots,10\}$, we sample from
$\exp\{\log \P(g\mid A)/T_m\}$} using
$I_m$ iterations of Gibbs sampling. 
For each step $m$, we show the corresponding temperature $T_m$ and iteration count $I_m$ 
in Table \ref{YangGibbsSetup}.

\begin{table}[t]
\begin{center}
\begin{tabular}{c|c|c|c|c|c|c|c|c|c|c}
$m$ & 1 & 2 & 3 & 4 & 5 & 6 & 7 & 8 & 9 & 10 \\ \hline
$T_m$ & 1 & 0.9 & 0.8 & 0.7 & 0.6 & 0.5 & 0.4 & 0.3 & 0.2 & 0.1 \\ \hline
$I_m$ & 20 & 10 & 10 & 10 & 10 & 10 & 10 & 5 & 5 & 5
\end{tabular}
\end{center}
\caption{The temperature $T_m$ and iteration count $I_m$ for each step $m$ in the Yang et al.\ \cite{Yang11} simulated-annealing sampling process.} \label{YangGibbsSetup}
\end{table}

%%%%

\section{Comparison of Community-Detection Performance of the Different Approaches} \label{ComparisonSI}

As we mentioned in Section \ref{intro}, we expect that statistical-inference methods that employ generative models with more-localized community-size distributions will have poorer performance in networks with large or small communities than methods with less-localized community-size distributions. In this section, we demonstrate that this is indeed the case.

%%%%

\subsection{Setup of our Comparison}

To compare the behavior of the four approaches on networks with different community sizes, we generate several networks with known community structure and different community sizes. We begin by considering $k = 2$ communities in temporal networks with $n = 100$ nodes and $L=5$ layers.
We then choose the parameters $\omega \in [0,1]^{k \times k}$
(which encode the strength of the community structure) and $q \in \{0,\ldots,100\}$ (which is the size of community $1$). We use the parameter $q$ to assign the seeded community structure $g$ as follows.
For each layer $\ell \in [L]$, we set $g_{(i,\ell)} = 1$ for nodes $i \in \{1,\ldots, q + \tau_\ell\}$ and $g_{(i,\ell)} = 2$ for all other nodes. The quantity $\tau = (\tau_1,\ldots,\tau_L)$ is a vector of ``offsets'' for each layer that we use to avoid having the same seeded community structure for each layer. In Figure \ref{seededstructs}, we show examples of the seeded community structure for $q = 50$ and $q = 90$, with $\tau = (0,-5,-10,-5,0)$ in each case. We use $g$ to generate the networks $A^{(\ell)}$ via independent SBMs with parameters
\begin{equation*}
	\omega^{(\ell)} = \begin{pmatrix} 0.25 & 0.1 \\ 0.1 & 0.25 \end{pmatrix}
\end{equation*}	
for each layer $\ell \in [L]$.

\begin{figure}[t]\centering
\subfloat[$q=50$]{\includegraphics[width=0.49\textwidth]{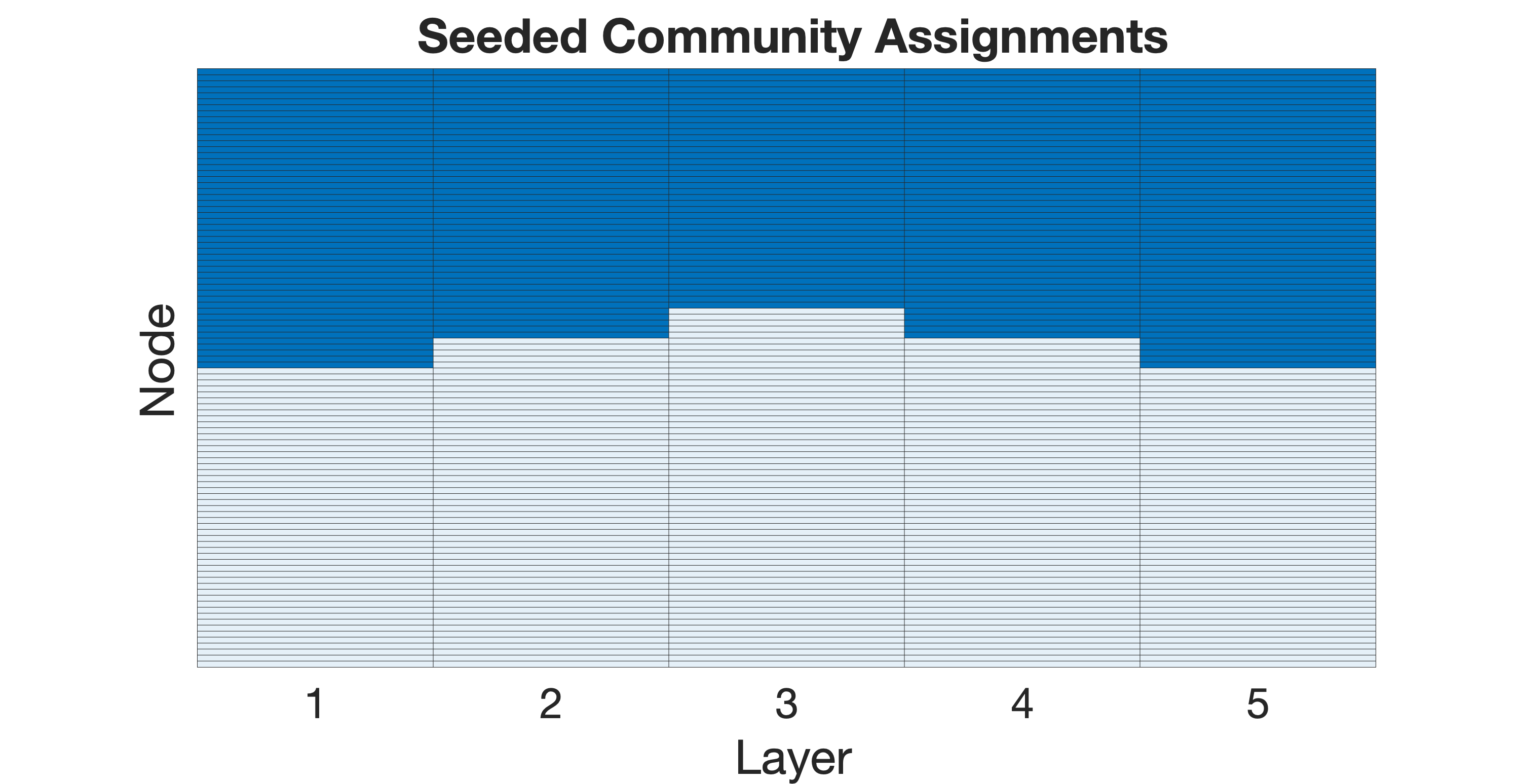}}
\subfloat[$q=90$]{\includegraphics[width=0.49\textwidth]{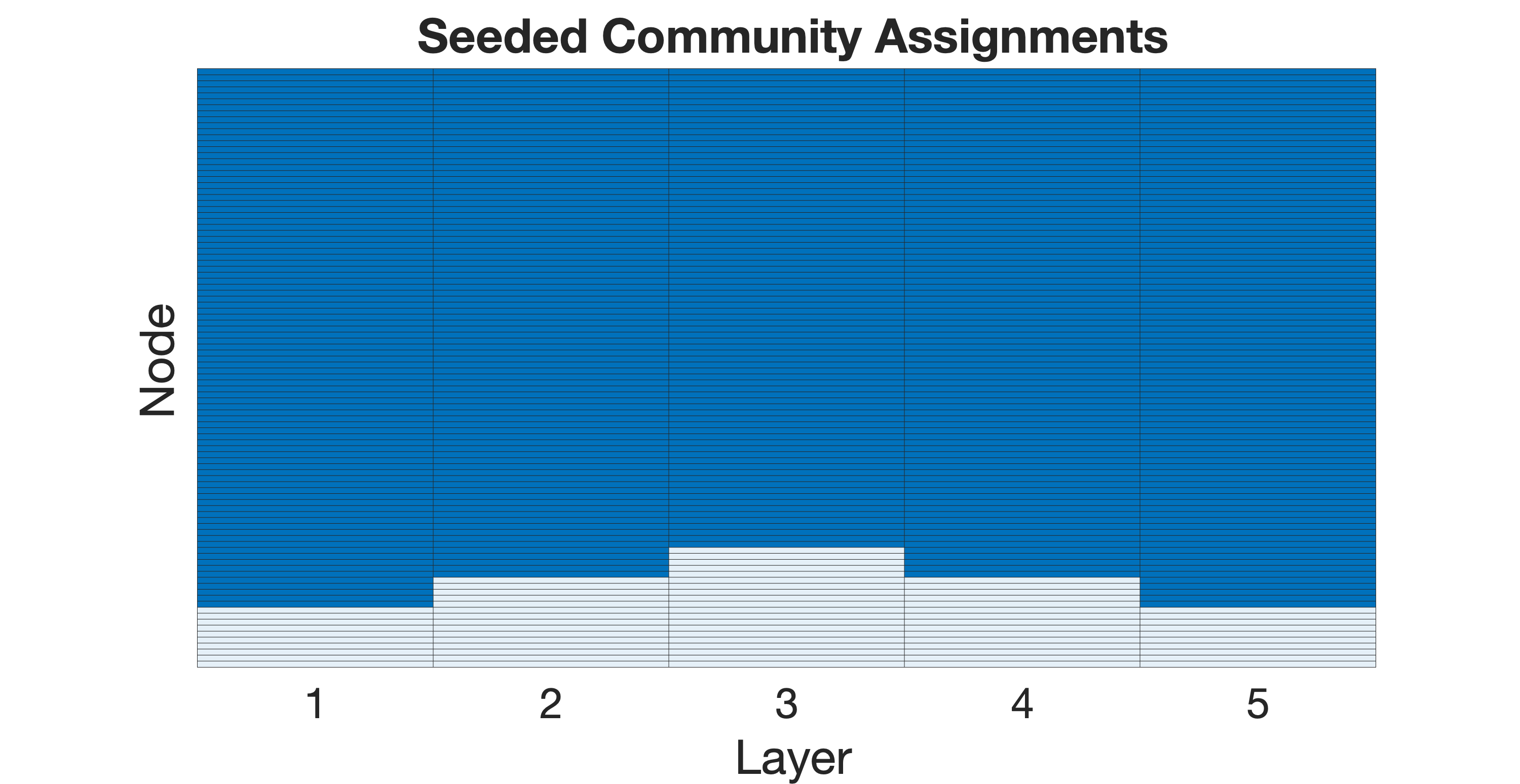}}
\caption[Heat maps of the seeded community structure that we use to generate the adjacency structure $A$ for community-1 size $q = 50$ and community-1 size $q = 90$.]{Heat maps of the seeded community structure that we use to generate the adjacency structure $A$ for (a) community-1 size $q = 50$ and (b) community-1 size $q = 90$. Each rectangle in a heat map corresponds to one node-layer $(i,\ell)$. Dark blue rectangles signify the community assignment $g_{(i,\ell)} = 1$, and light blue rectangles signify the community assignment $g_{(i,\ell)} = 2$.
 }
\label{seededstructs}
\end{figure}

In our experiments, we consider $q \in \{50,60,70,80,90\}$. For each choice of $q$, we run 500 instantiations of each approach and record the inferred community assignments $g'$ of each instantiation of each approach. To quantitatively measure the performance of each approach, we compute normalized mutual information (NMI) $\text{NMI}(g';g)$ \cite{Danon05}, which is a commonly employed similarity measure for analyzing the performance of a classification, between the inferred community structure $g'$ and the seeded community structure $g$ for each instantiation and approach. 
The formula for the NMI is  
\begin{equation}
	\text{NMI}(g';g) = \frac{I_0(g';g)}{\frac{1}{2}[I_0(g;g) + I_0(g';g')]} \, ,
\end{equation}	
where 
\begin{align*}
	I_0(g';g) &= nL \sum_{1 \le r,s \le k} p_{rs}^{(gg')} \log \!\left(\frac{p_{rs}^{(gg')}}{p_r^{(g')}p_s^{(g)}}\right) \, , \\
	p_{rs}^{(gg')} &= \frac{1}{nL} \sum_{i =1}^n \sum_{\ell = 1}^L \mathbb{1}\{g'_{(i,\ell)} = r \, , \, g_{(i,\ell)} = s\} \,, \\
	p_{r}^{(g)} &= \frac{1}{nL} \sum_{i =1}^n \sum_{\ell = 1}^L \mathbb{1}\{g_{(i,\ell)} = r\} \, .
\end{align*}	
The NMI quantifies the similarity between the inferred community structure $g'$ and the seeded community structure $g$. The maximum NMI value of $1$ occurs when $g'$ and $g$ coincide up to a permutation of the community labels.

%%%%

\subsection{Results}\label{MLResults}

We begin by comparing the performances of the Bazzi et al.~\cite{Bazzi20} approach and our LECS-prior-based approach with and without multilayer swaps (see Section \ref{sec:algo}). We set the probability of a multilayer swap to $3 \times 10^{-3}$. In Figure \ref{NMIfull}, we plot the mean of $\text{NMI}(g';g)$ for each of the posterior samples $g'$ for community-1 sizes $q \in \{50,60,70,80,90\}$.

We obtain larger mean NMI values for both the Bazzi et al.\ approach and our LECS-prior-based approach when we use multilayer swaps, indicating that they are less likely to become stuck at local extrema (of the type in Figure \ref{localextremacommunity}) than when we do not include multilayer swaps. We use hypothesis testing to make this observation statistically rigorous. The distributions of the NMI values are not normal distributions, so we compute the $p$-values for a one-sided Mann--Whitney U test \cite{MannWhitney08} for each layer $\ell$, community-1 size $q$, and community-structure strength $\omega$.
For both approaches, we see that the $p$-values are significant for all values of $q$ except $q = 50$.
Therefore, we conclude that employing multilayer swaps improves the performance of both our LECS-prior-based approach and the Bazzi et al.\ approach.

\begin{figure}[t]\centering
\includegraphics[width=0.95\textwidth]{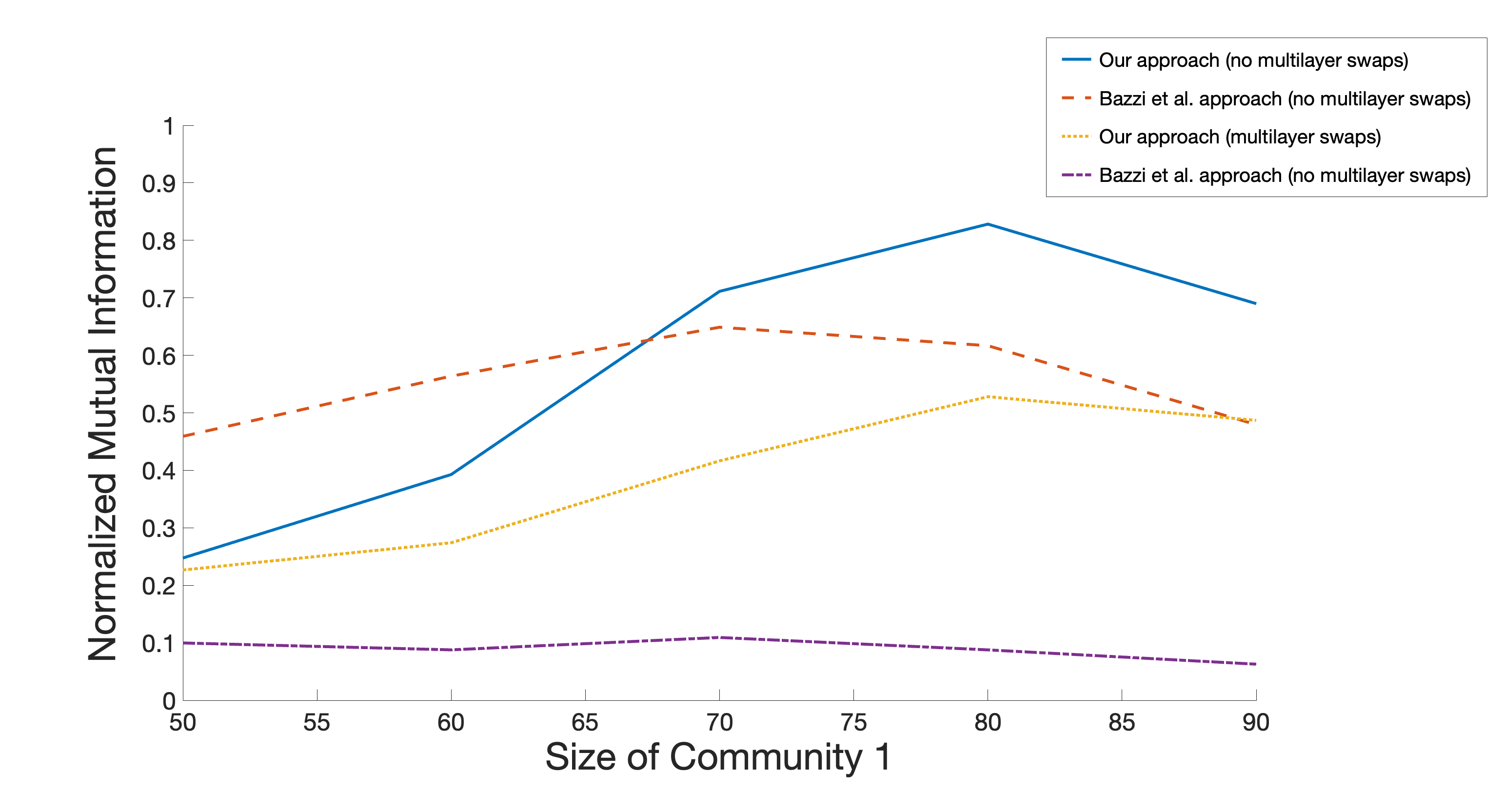}
\caption{The mean NMI for our LECS-prior-based approach and the Bazzi et al.\ approach with and without multilayer swaps for several values of the community-1 size $q$. 
}
\label{NMIfull}
\end{figure}

\begin{table}[t]\begin{center}
\begin{tabular}{| c | c | c | c | c | c |}
\hline
 & $q = 50$ & $q = 60$ & $q = 70$ & $q = 80$ & $q = 90$  \\ \hline
$\omega = 0.28$ & 0.56     &    0.017        &  0.45        & $9.1 \times 10^{-5} $    &   0.96\\ \hline
$\omega = 0.25$ &         0.12   &     0.063 &     $ 9.4 \times 10^{-6}$  &       $3.3 \times 10^{-6}$   &     $3.4 \times 10^{-7}$\\ \hline
 \end{tabular}
 \end{center}
\caption{The $p$-values for our comparison of our LECS-prior-based approach with and without multilayer swaps.
}

\label{pvaluesfullMultiNovel}
\end{table}

\begin{table}[t]\begin{center}
\begin{tabular}{| c | c | c | c | c | c |}
\hline
 & $q = 50$ & $q = 60$ & $q = 70$ & $q = 80$ & $q = 90$  \\ \hline
$\omega = 0.25$ & 0.015 & $ 6.5 \times 10^{-4}$  &  $3.3 \times 10^{-6}$  & $3.1 \times 10^{-16}$   & $7.8 \times 10^{-5}$ \\ \hline
$\omega = 0.28$ &      $2.4 \times 10^{-3}$ &  0.40 &   $8.9 \times 10^{-9}$   & $5.9 \times 10^{-15}$ & $ 2.5000 \times 10^{-6}$\\ \hline
 \end{tabular}
 \end{center}
\caption{The $p$-values for our comparison of the Bazzi et al. \cite{Bazzi20}  with and without multilayer swaps.}
\label{pvaluesfullMultiBazzi}
\end{table}

We also compare the performance of the four approaches --- a uniform distribution on community assignments, the Yang et al.\ \cite{Yang11} and Bazzi et al.\ \cite{Bazzi20} discrete-time Markov-process approaches, and our LECS-prior-based approaches --- when we incorporate multi-mode moves. We again consider community-1 sizes $q \in \{50,60,70,80,90\}$ and plot the overall-network NMI for each of the four approaches in Figure \ref{NMI4methodsfull}.

\begin{figure}[h]
\centering
\includegraphics[width=0.95\textwidth]{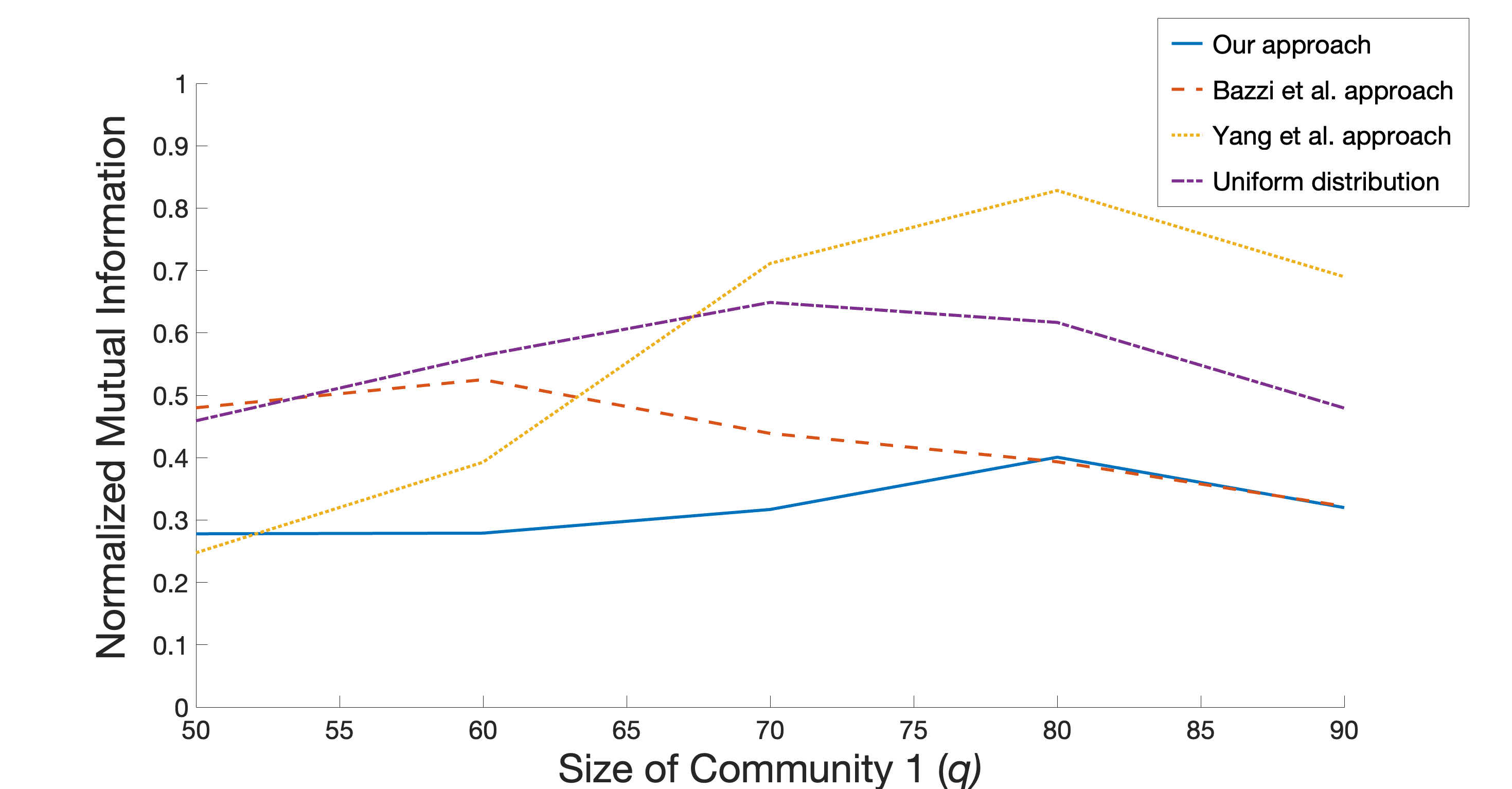}
\caption{The mean NMI for each of the four examined approaches for generating community assignments for various choices of community-1 sizes $q$.
}
\label{NMI4methodsfull}
\end{figure}

Our LECS-prior-based approach performs worse than the Bazzi et al.\ approach for \\ community-1 sizes of $q = 50$ and $q = 60$. However, our approach outperforms the Bazzi et al. approach for $q = 70$, $q = 80$, and $q = 90$. Our approach outperforms the Yang et al.\ approach for all values of $q$. Using Mann--Whitney U tests, we see that these results are significant (with $p$-value $p < 0.01$) for all but one value of $q$ (this value is $q=80$ for the Yang et al.\ method) when our LECS-prior-based approach performs better than the other methods.

In Figures \ref{NMIfull} and \ref{NMI4methodsfull}, the mean NMIs for each method are smaller for community-1 size $q = 90$ than for $q = 80$. We believe that this is an artifact of our choice to use NMI to measure the similarity between the inferred community structure and the seeded community structure.\footnote{For a detailed discussion of biases in NMI, see \cite{Jerdee24}.} When we use the number of correct community assignments to measure such similarity, we find that the mean numbers of correct community assignments for community-1 sizes $q=80$ and $q=90$ are almost identical for each method. 

%%%%

\section{Conclusions and Discussion} \label{FinalConclusions}

{When using statistical inference to detect communities in networks, it is important to use a generative model that is based on realistic assumptions \cite{peixoto2023}. One important facet is to avoid (or at least mitigate) biases 
against communities with large or small numbers of nodes.  
In this paper, we illustrated empirically that 
statistical-inference models that generate community assignments via either a uniform distribution on community assignments or discrete-time Markov processes are biased against generating communities with large or small numbers of nodes. 
To mitigate this bias, we formulated a generative model for community assignments
with a layerwise-exchangeable count-splitting (LECS) prior, and we proved several
 results about the resulting community-size distributions.
We demonstrated in tests on synthetic networks with small and large communities that our generative model outperforms several existing generative models at statistical inference.
  We analyzed the effect of group-size biases in the setting of community structure, but analogous group-size assumptions also arise in statistical-inference methods for identifying other mesoscale structures. For example, Faust and Porter \cite{Faust25} showed that using discrete-time Markov-process models to detect core--periphery structure causes associated statistical-inference methods to underestimate the numbers of cohesive groups in networks.

There are many viable ways to build on our work. First, we considered the performance of a small number of representative approaches, and it is certainly worthwhile to also study other methods (e.g., see \cite{Ishiguro10}) with similar community-assignment probability distributions to those in Sections \ref{unifDistML} and \ref{NodewiseEvolML}. Second, in our LECS-prior-based approach,
  we assumed that the community assignments for a given layer depend only on those in the previous layer. Relaxing this assumption and allowing community assignments to depend on additional previous layers may lead to improved community-detection performance. Third, it is beneficial to derive bounds on the amount of localization of the community-size distributions both for our approach and for other approaches. In contrast to the situation for monolayer networks, where community-size distributions tend to have a relatively simple form, there often is not a simple closed-form expression for the single-layer community-size distributions in temporal networks. Consequently, for the Yang et al.\ \cite{Yang11} and Bazzi et al.\ \cite{Bazzi20} approaches, we used numerical simulations to examine the community-size distributions instead of obtaining analytical results (such as Theorem \ref{ArashTheorem}) about the single-layer community-size distributions in the limit of infinitely many layers. 
 Deriving {additional} analytical results {that are similar to Theorem \ref{ArashTheorem}} can provide important insights into the behavior of community-assignment approaches, and such insights can then facilitate  the development of better-performing and more efficient methods.}

%%%%

\appendix

\section{Expressions for the Community-Assignment Probabilties} \label{Expressions}
In this appendix, we derive closed-form expressions for the community-assignment probability distributions $\P(g)$ for the Bazzi et al.\ approach \cite{Bazzi20} (see Section \ref{NodewiseEvolML}) and our LECS-prior-based approach (see Section \ref{novelML}). We use these expressions in the Gibbs-sampling procedures that we described in Section \ref{sec:algo}. 

%%%%

\subsection{Closed-Form Expression for $\P(g)$ for the Bazzi et al.\ Approach}\label{BazziCF}

Recall from Section \ref{NodewiseEvolML} that the Bazzi et al.\ approach samples community assignments $g$ via the discrete-time Markov process (\ref{MarkovProcessModelNoPrior}). This Markov process is
\begin{alignat}{3}
    \pi & &&\;\sim\; \text{Dir}(\gamma) \, , \notag \\
    g_{(i,1)} &\given \pi &&\;\sim\; \pi \,, \notag\\
    \{g_{(i,\ell)}\}_{\ell = 2}^L & \given \alpha, K &&\;\sim\; \text{Markov}\left( \left\{\alpha_{\ell} I + (1 - \alpha_\ell) K^{(\ell)}\right\} \right) \, . \notag
 \end{alignat}
     where $\gamma = (1,\ldots,1)$ and $\alpha = (\alpha_2,\ldots,\alpha_L)$.
In the Bazzi et al.\ approach,
\begin{align*}
        \alpha_\ell &\sim \text{Unif}(0,1) \, , \\
        \kappa^{(\ell)} &\sim \text{Dir}(\mu^{(\ell)}) \, , \\
        K^{(\ell)}_{s*} &= \kappa^{(\ell)}\, ,
\end{align*}
where we also assume that $\mu^{(\ell)} = (1,\ldots, 1)$ for each layer $\ell \in \{2,\ldots,L\}$.

Because we generate $\{g_{(i,\ell)}\}_{\ell=1}^L$ via a discrete-time Markov process, the community assignment of a node in a given layer depends only on its community assignment in the previous layer. Therefore,
\begin{equation}\label{dtMarkov}
	\P(g) = \P(g_{(1)}) \prod_{\ell=2}^L \P(g_{(\ell)}|g_{(\ell-1)}) \, 
\end{equation}
and
\begin{equation}\label{dtMarkov2}
	\P(g_{(\ell)}|g_{(\ell-1)},K,\alpha) = \prod_{i=1}^n \mathbb{P}\left(g_{(i,\ell)} \Big|g_{(i,\ell - 1)},K, \alpha \right) \,. 
\end{equation}
To derive a closed-form expression for $\P(g)$, it suffices to derive closed-form expressions for $ \P(g_{(1)})$ and $\P(g_{(\ell)}|g_{(\ell - 1)})$. 

Because the procedure 
\begin{alignat}{3}
    \pi & &&\;\sim\; \text{Dir}(\gamma) \, , \notag \\
    g_{(i,1)} &| \pi &&\;\sim\; \pi \,, \notag
\end{alignat}
for sampling $g_{(1)}$ is identical to the nodewise community-assignment approach for monolayer networks that we discussed in Section \ref{nodewiseSL}, it is equivalent to sample $g_{(1)}$ using a uniform distribution on community sizes. 
Namely, we first choose the sizes $n_1,\ldots,n_k$ of communities $1,\ldots,k$ uniformly at random from the set $\{(n_1,\ldots,n_k) : \sum_{i = 1}^k n_i = n\} \cap \{0,\ldots,n\}^k$ of ordered pairs of $k$ non-negative integer elements that sum to $n$, and we then choose $g_{(1)}$ uniformly at random from the set of all community assignments with $n_i$ nodes in community $i$ for all $i \in [k]$. Therefore,
\begin{equation}\label{layer1Prob}
	\P(g_{(1)}) = \frac{1}{\binom{k + n - 1}{n}} \frac{n_1^{((1))}! \times \cdots \times n_k^{((1))}!}{n!} \, ,
\end{equation}
where we recall that $n_r^{((1))}$ is the number of times that $r$ appears in $g_{(1)}$.

To derive a closed-form expression for $\P(g_{(\ell)}|g_{(\ell-1)})$, we recall from (\ref{condsinglenode}) that
\begin{equation*}
	\P( g_{(i,\ell)} = r | g_{(i,\ell - 1)} = s, \alpha, K) = \alpha_\ell \, \mathbb{1}\{r = s\} + (1 - \alpha_\ell) K^{(\ell)}_{sr} \, .
\end{equation*}	
It thus follows directly from (\ref{dtMarkov2}) that
\begin{align*}
	\P(g_{(\ell)}|g_{(\ell - 1)},K,\alpha) &= \prod_{i = 1}^n \mathbb{P}\left(g_{(i,\ell)} \Big|g_{(i,\ell - 1)},K, \alpha \right) \\
		&=  \prod_{i = 1}^n \left( \alpha_\ell \, \mathbb{1}\{g_{(i,\ell)} = g_{(i,\ell-1)}\} + (1 - \alpha_\ell) K^{(\ell)}_{g_{(i,\ell - 1)}g_{(i,\ell)}} \right)\, .
\end{align*}
Finally, because
\begin{align*}
        \alpha_\ell &\sim \text{Unif}(0,1) \, , \\
        \kappa^{(\ell)} &\sim \text{Dir}(\mu^{(\ell)}) \, ,  \\
        K^{(\ell)}_{s*} &= \kappa^{(\ell)}\, ,
\end{align*}
with $\mu^{(\ell)} = (1,\ldots, 1)$ for each $\ell \in \{2,\ldots,L\}$, we have
\begin{align}\label{marginalNodewise}
	\P(g_{(\ell)}|g_{(\ell - 1)})
&=  \dotsint_{[0,1] \times \Delta^{k-1}} \prod_{i=1}^n \left( \alpha_\ell \, \mathbb{1}\{g_{(i,\ell)} = g_{(i,\ell-1)}\} + (1-\alpha_\ell) \kappa^{(\ell)}_{g_{(i,\ell)}} \right) \; d\mu(\alpha_\ell,\kappa^{(\ell)})\,,
\end{align}
where we recall from \eqref{DeltaDefinition} that $\Delta^{k-1}$ denotes the probability simplex in 
$\mathbb R^k$ and where $\mu$ is the product measure of a uniform measure on $[0,1]$ and a uniform measure on $\Delta^{k-1}$. (The measure is uniform because $\mu^{(\ell)} = (1,\ldots, 1)$.)
Combining equation (\ref{marginalNodewise}) with (\ref{dtMarkov}) and (\ref{layer1Prob}) yields a closed-form expression for $\P(g)$. 

%%%%

\subsection{Closed-Form Expression for $\P(g)$ for our LECS-Prior-Based Approach} \label{novelCF}
Recall from Section \ref{novelML} that our LECS-prior-based approach samples community assignments $g$ via the process in (\ref{novelFirstLayer}) and (\ref{eq:lecs-count-split}). For each $r \in [k]$, we sample $\mathbf{g}'_{r,\ell}$ according to the following procedure:
\begin{alignat}{3}
    \pi & &&\sim \text{Dir}(\gamma) \, , \notag \\
    g_{(i,1)} &\; | \; \pi &&\sim \pi \, , \notag\\
p_{r,\ell} & &&\sim \text{Unif}(0,1) \, , \notag\\
	c_{rr}^{(\ell)} &\; | \; p_{r,\ell} &&\sim \geomb\!\left(n_r^{(\ell-1)},\,p_{r,\ell}\right) \, ,\notag\\
	\big(c_{rs}^{(\ell)}\big)_{s\neq r} & \; \Big| \;  c_{rr}^{(\ell)} &&\sim\ \text{Unif}\!\left(\mathcal{C}^{k-1}_{\,n_r^{(\ell-1)}-c_{rr}^{(\ell)}}\right)\, , \notag
\end{alignat}
where we recall that $\gamma = (1,\ldots,1)$. We then set $g_{(\ell)} = \bigoplus_{r=1}^k \mathbf{g}'_{r,\ell}$. 
The above procedure implies that the community assignments in a given layer depend only on the community assignments in the previous layer. We thus obtain
\begin{equation}\label{novel1}
	\P(g) = \P(g_{(1)}) \prod_{\ell=2}^L \P(g_{(\ell)}|g_{(\ell - 1)}) \, ,
\end{equation}
which takes the same form as (\ref{dtMarkov}).
Additionally, because the procedure to sample $g_{(1)}$ and the choice of $\gamma$ is the same as in the Bazzi et al.\ \cite{Bazzi20} approach, we obtain (using the same logic as in Appendix \ref{BazziCF}) the expression
   \begin{equation}\label{layer1ProbNovel}
	\P(g_{(1)}) = \frac{1}{\binom{k+n-1}{n}} \frac{n_1^{((1))}! \times \cdots \times n_k^{((1))}!}{n!} \, ,
\end{equation}
where we recall that $n_r^{((1))}$ is the number of times that $r$ appears in $g_{(1)}$.

To derive a closed-form expression for $\P(g_{(\ell)}|g_{(\ell - 1)})$, we first note that we obtain the set $g_{(\ell)}$ of community assignments for layer $\ell$ by concatenating $\mathbf{g}'_{r, \ell}$, which is the set of community assignments for layer $\ell$ when restricted to nodes in community $r$ in layer $\ell - 1$ for each $r \in [k]$. 
Therefore, 
\begin{equation}\label{novelprevlayer}
	\P(g_{(\ell)}|g_{(\ell-1)}) = \prod_{r=1}^k \P(\mathbf{g}'_{r, \ell}) \,.
\end{equation}

Recall that we generate $\mathbf{g}'_{r, \ell}$ using the procedure
\begin{alignat}{3}
 c^{(\ell)}_{rr} & \;|\; p_{r,\ell} &&\;\sim\; \geomb(n_r^{((\ell-1))},p_{r,\ell}) \, , \notag \\
    \mathbf{c}^{(\ell)}_{r,-r} & \;|\; c^{(\ell)}_{rr} &&\;\sim\; \text{Unif}\left(\mathcal{C}^{k-1}_{n_r^{((\ell-1))} - c^{(\ell)}_{rr}}\right) \, ,\notag \\
    \mathbf{g}'_{r,\ell} & \;|\; \mathbf{c}^{(\ell)}_r &&\;\sim\; \text{Unif}\left(\mathcal{G}_r^{(\ell-1)}(\mathbf{c}^{(\ell)}_r)\right) \, . \notag
      \end{alignat}
By the definition (see Section~\ref{novelML}) of $\geomb(n_r^{((\ell-1))},p_{r,\ell})$, we have
\begin{equation}\label{probNovelStep1}
	\P(c^{(\ell)}_{rr} | p_{r,\ell}) = p_{r,\ell}^{n_r^{((\ell-1))}  - c^{(\ell)}_{rr}}\frac{p_{r,\ell}-1}{p_{r,\ell}^{n_r^{((\ell-1))}+1} - 1} \, .
\end{equation}
Because $\mathbf{c}^{(\ell)}_{r,-r} | c^{(\ell)}_{rr}\sim\text{Unif}\left(\mathcal{C}^{k-1}_{n_r^{((\ell-1))} - c^{(\ell)}_{rr}}\right)$, a combinatorial argument combined with the definition of weak compositions implies that
\begin{equation}\label{probNovelStep2}
	\P(\mathbf{c}^{(\ell)}_{r,-r} | c^{(\ell)}_{rr}) = \frac{1}{\binom{n_r^{((\ell-1))} - c^{(\ell)}_{rr} + k - 2}{n_r^{((\ell-1))} - c^{(\ell)}_{rr}} } \, .
\end{equation}
Recall that we sample $\mathbf{g}'_{r,\ell} | \mathbf{c}^{(\ell)}_r \sim\text{Unif}(\mathcal{G}(\mathbf{c}^{(\ell)}_r))$. It follows from a combinatorial argument along with the definition of $\mathcal{G}$ (see Section \ref{novelML}) that
\begin{equation}\label{probNovelStep3}
	\P(\mathbf{g}'_{r,\ell} | \mathbf{c}^{(\ell)}_r) = \frac{1}{\binom{n_r^{((\ell-1))}}{c_{r,1},\ldots,c_{r,k}}} \, ,
\end{equation}
where $\binom{n}{k_1,\ldots,k_j} = \frac{n!}{k_1!k_2!\ldots k_j!}$ and $\sum_{i=1}^j k_i = n$. 
Combining (\ref{probNovelStep1}), (\ref{probNovelStep2}), and (\ref{probNovelStep3}) yields
\begin{align*}
	\P(\mathbf{g}'_{r, \ell} | p_{r,\ell}) &= \P(c^{(\ell)}_{rr} | p_{r,\ell})\P(\mathbf{c}^{(\ell)}_{r,-r} | c^{(\ell)}_{rr})\P(\mathbf{g}'_{r,\ell} | \mathbf{c}^{(\ell)}_r)  \notag \\
 		&= p_{r,\ell}^{n_r^{((\ell - 1))}  - c^{(\ell)}_{rr}}\frac{p_{r,\ell} - 1}{p_{r,\ell}^{n_r^{((\ell-1))} + 1} - 1} \times \frac{1}{\binom{n_r^{((\ell-1))} - c^{(\ell)}_{rr} + k - 2}{n_r^{((\ell-1))} - c^{(\ell)}_{rr}} \binom{n_r^{((\ell - 1))}}{c_{r,1},\ldots,c_{r,k}}} \, .
\end{align*}		
Because $p_{r,\ell}  \sim \text{Unif}(0,1)$, integrating $\P(\mathbf{g}'_{r, \ell} | p_{r,\ell})$ with respect to the probability measure induced by $\P(p_{r,\ell})$ yields
\begin{equation}\label{novelGPrime}
	\P(\mathbf{g}'_{r, \ell}) = \frac{1}{\binom{n_r^{((\ell - 1))} - c^{(\ell)}_{rr} + k - 2}{n_r^{((\ell - 1))} - c^{(\ell)}_{rr}} \binom{n_r^{((\ell - 1))}}{c_{r,1},\ldots,c_{r,k}}} \times \int_0^1 p_{r,\ell}^{n_r^{((\ell - 1))}  - c^{(\ell)}_{rr}}\frac{p_{r,\ell} - 1}{p_{r,\ell}^{n_r^{((\ell - 1))} + 1} - 1} \; d p_{r,\ell} \, .
\end{equation}

Although we can directly compute $\P(\mathbf{g}'_{r, \ell})$ using equation (\ref{novelGPrime}), we make the computations more efficient by precomputing the integrals 
\begin{equation*} \int_0^1 p_{r,\ell}^{n_r^{((\ell-1))}  - c^{(\ell)}_{rr}}\frac{p_{r,\ell}-1}{p_{r,\ell}^{n_r^{((\ell-1))}+1} - 1} \; d p_{r,\ell} \, .\end{equation*}
For notational simplicity, we write 
\begin{equation}\label{IDefinition}
	J(k_1,k_2) = \int_0^1 x^{k_1} \frac{x-1}{x^{k_2+1} - 1} \; dx \,.
\end{equation}
Combining (\ref{novelprevlayer}), (\ref{novelGPrime}), and \eqref{IDefinition} then yields
\begin{equation}\label{novelFinal}
\P(g_{(\ell)}|g_{(\ell-1)}) = \prod_{r = 1}^k \left[\frac{1}{\binom{n_r^{((\ell-1))} - c^{(\ell)}_{rr} + k - 2}{n_r^{((\ell-1))} - c^{(\ell)}_{rr}} \binom{n_r^{((\ell-1))}}{c_{r,1},\ldots,c_{r,k}}} \times J\left(n_r^{((\ell-1))}  - c^{(\ell)}_{rr}, n_r^{((\ell-1))}\right)\right] \, .
\end{equation}
In concert, the relations \eqref{novelFinal}, (\ref{novel1}), and (\ref{layer1ProbNovel}) yield a closed-form expression for $\P(g)$. 
 
 %%%%
 
\subsection{Computing $J(k_1,k_2)$} 
\label{Ik1k2Computation} 
To compute \eqref{novelFinal}, we need to calculate $J(k_1,k_2)$ (see \eqref{IDefinition}). Because $n_r^{((\ell - 1))}  - c^{(\ell)}_{rr} \le n_r^{(\ell - 1)}$, we are able to compute \eqref{novelFinal} by calculating $J(k_1,k_2)$ only when $0 \le k_1 \le k_2$. In this region,
we use a partial-fractions expansion
to obtain
\begin{align}\label{PFInt}
	J(k_1,k_2) &= \int_0^1 x^{k_1} \frac{x - 1}{x^{k_2 + 1} - 1} \; dx \notag\\
	&= \sum_{r=1}^{k_2} c_r \left[\log\left(1 - e^{\frac{2\pi \mathrm{i} r}{k_2 + 1}}\right) - \log\left(-e^{\frac{2\pi \mathrm{i} r}{k_2 + 1}}\right)\right] \,,
\end{align}
where
\begin{equation}\label{PFIntConst}
	c_r = \frac{e^{\frac{2\pi \mathrm{i} r k_1}{k_2 + 1}}}{\prod_{s = 1, \, s \ne r}^{k_2} \left(e^{\frac{2\pi \mathrm{i} r}{k_2 + 1}} - e^{\frac{2\pi \mathrm{i} s}{k_2 + 1}}\right)} \, ,
\end{equation}
the symbol $\mathrm{i}$ denotes the imaginary unit, and $\log(\cdot)$ denotes the complex base-$e$ logarithm with a branch cut along the negative real axis.
To calculate $J(k_2,k_2)$, we note that
\begin{equation*}
	\sum_{k_1 = 0}^{k_2} \left(\int_0^1 x^{k_2} \frac{x-1}{x^{k_2 + 1} - 1} \; dx\right) =  \int_0^1 \left(\sum_{k_1 = 0}^{k_2} x^{k_2} \frac{x-1}{x^{k_2 + 1} - 1}\right) \; dx = \int_0^1 1 \; dx = 1 \, .
\end{equation*}	
We thereby obtain
\begin{align}\label{PFIntEqual}
	J(k_2,k_2) &= \int_0^1 x^{k_2} \frac{x-1}{x^{k_2 + 1} - 1} \; dx \notag \\
	&= 1 - \sum_{k_1 = 0}^{k_2 - 1} \int_0^1 x^{k_2} \frac{x-1}{x^{k_2 + 1} - 1} \; dx \notag \\
	&= 1 - \sum_{k_1 = 0}^{k_2 - 1} J(k_1,k_2) \, .
\end{align}	

We now prove Lemma \ref{JSimplerForm}, which provides a simpler expression 
 for $J(k_1,k_2)$. However, in our code, we use the expressions \eqref{PFInt} and \eqref{PFIntEqual} to compute $J$.

{
\begin{lemma}\label{JSimplerForm}
	For any $k_1$ and $k_2$ in $\mathbb Z_{\ge 0}$, we have
	\begin{equation}\label{JSimplerExp}
		J(k_1,k_2) := \int_0^1 x^{k_1} \frac{x - 1}{x^{k_2 + 1} - 1} \; dx = \frac{1}{k_2 + 1} \left[ \psi\left(\frac{k_1 + 2}{k_2 + 1}\right) - \psi\left(\frac{k_1 + 1}{k_2 + 1}\right) \right]  \, ,
	\end{equation}
where $\psi(z) = \frac{d}{dz} \! \log \Gamma(z)$ is the digamma function~\cite{dlmf}.
\end{lemma}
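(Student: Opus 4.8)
The plan is to expand the integrand as a geometric series and then recognize the resulting sum as a difference of digamma series. First I would rewrite $\frac{x-1}{x^{k_2+1}-1} = \frac{1-x}{1-x^{k_2+1}}$ and, for $x \in (0,1)$, use $\frac{1}{1-x^{k_2+1}} = \sum_{j=0}^\infty x^{(k_2+1)j}$ to get
\begin{equation*}
	x^{k_1}\,\frac{x-1}{x^{k_2+1}-1} = \sum_{j=0}^\infty \bigl(x^{(k_2+1)j+k_1} - x^{(k_2+1)j+k_1+1}\bigr) \,.
\end{equation*}
Each summand equals $x^{(k_2+1)j+k_1}(1-x) \ge 0$ on $[0,1]$, so by the monotone convergence theorem (equivalently, Tonelli's theorem) one may integrate term by term, which yields
\begin{equation*}
	J(k_1,k_2) = \sum_{j=0}^\infty \left(\frac{1}{(k_2+1)j+k_1+1} - \frac{1}{(k_2+1)j+k_1+2}\right) \,.
\end{equation*}

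Next I would invoke the classical series representation $\psi(z) = -\gamma_{\mathrm E} + \sum_{j=0}^\infty\bigl(\tfrac{1}{j+1} - \tfrac{1}{j+z}\bigr)$ of the digamma function, where $\gamma_{\mathrm E}$ is the Euler--Mascheroni constant. Subtracting two such series gives, for any $z_1,z_2>0$,
\begin{equation*}
	\psi(z_2) - \psi(z_1) = \sum_{j=0}^\infty\left(\frac{1}{j+z_1} - \frac{1}{j+z_2}\right) \,,
\end{equation*}
with $\gamma_{\mathrm E}$ cancelling. Setting $z_1 = \frac{k_1+1}{k_2+1}$ and $z_2 = \frac{k_1+2}{k_2+1}$ and factoring $\frac{1}{k_2+1}$ out of each term reproduces exactly the series for $J(k_1,k_2)$ found above, which is precisely \eqref{JSimplerExp}.

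The only genuinely delicate point will be justifying the interchange of summation and integration, since the geometric series for $\frac{1}{1-x^{k_2+1}}$ diverges at $x=1$. This is handled cleanly because every term is nonnegative on $[0,1]$, so monotone convergence applies directly, and the left-hand side is a convergent integral (the integrand in \eqref{IDefinition} extends continuously to $x=1$ with value $\frac{1}{k_2+1}$, by L'Hôpital's rule). Everything else is routine bookkeeping. As an alternative route, one could substitute $u = x^{k_2+1}$ to write $J(k_1,k_2) = \frac{1}{k_2+1}\int_0^1 u^{(k_1+1)/(k_2+1)-1}\,\frac{1-u^{1/(k_2+1)}}{1-u}\,du$ and apply a standard digamma integral formula, but the geometric-series argument is the most self-contained.
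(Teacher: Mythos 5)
Your proof is correct and follows essentially the same route as the paper's: expand $\frac{x-1}{x^{k_2+1}-1}$ as a geometric series, integrate term by term, and identify the resulting series via the standard digamma representation $\sum_{j\ge 0}\left(\frac{1}{j+a}-\frac{1}{j+b}\right)=\psi(b)-\psi(a)$. Your justification of the term-by-term integration via monotone convergence (nonnegative terms) is in fact tidier than the paper's appeal to uniform convergence, which fails near $x=1$; otherwise the two arguments coincide.
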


\begin{proof}
	Let $N = k_2 + 1$. For $x \in [0, 1)$, we use the geometric-series expansion $\frac{1}{1 - u} = \sum_{n = 0}^\infty u^n$ to obtain
	\begin{equation}\label{this-two}
		\frac{x - 1}{x^N - 1} = (1 - x) \sum_{n = 0}^\infty (x^N)^n = \sum_{n=0}^\infty (x^{nN} - x^{nN + 1}) \, .
	\end{equation}
	Substituting \eqref{this-two} into \eqref{PFInt}	
	and switching the order of integration and summation (which is valid due to uniform convergence) yields
	\begin{equation*}
		J(k_1, k_2) = \sum_{n = 0}^\infty \int_0^1 (x^{k_1 + nN} - x^{k_1 + 1 + nN}) \, dx \, .
	\end{equation*}
	The integral of $x^a$ from $0$ to $1$ is $1/(a + 1)$, so
	\begin{equation}\label{series}
		J(k_1, k_2) = \sum_{n=0}^\infty \left( \frac{1}{k_1 + nN + 1} - \frac{1}{k_1 + nN + 2} \right) \, .
	\end{equation}
One can express the series in \eqref{series} in terms of the digamma function $\psi$,
which has the series representation
	\begin{equation*}
		\psi(z) = -\gamma - \sum_{n = 0}^\infty \left( \frac{1}{n + z} - \frac{1}{n + 1} \right). 
	\end{equation*}
	In particular, 
	\begin{equation}\label{digammaidentity}
		\sum_{n = 0}^\infty \left( \frac{1}{n + a} - \frac{1}{n + b} \right) = \psi(b) - \psi(a) \, .
	\end{equation}
	Rescaling the sum in \eqref{series} by $1/N$ and using the identity \eqref{digammaidentity} yields
	\begin{equation*}
		J(k_1, k_2) = \sum_{n = 0}^\infty \frac{1}{N} \left( \frac{1}{n + \frac{k_1 + 1}{N}} - \frac{1}{n + \frac{k_1 + 2}{N}} \right)  = \frac{1}{N} \left[ \psi\left(\frac{k_1 + 2}{N}\right) - \psi\left(\frac{k_1 + 1}{N}\right) \right]  \, ,
	\end{equation*}	
	which is the desired result.
\end{proof}
}

%%%%

\section{Proof of Proposition~\ref{prop:uniform-compositions}}\label{app:remaining:proofs}
{In this appendix, we prove Proposition \ref{prop:uniform-compositions}, which establishes the equivalence of the following two sampling techniques:
\begin{enumerate}
 \item{Sample the sizes $n_1,\ldots,n_k$ of communities $1,\ldots,k$ uniformly at random from the set $\mathcal{C}_n^k$ of ordered pairs of $k$ non-negative integers that sum to $n$.}
 \item{Sample the community assignment $g$ uniformly at random from the set of all community assignments with $n_r$ nodes in community $r$ for all $r \in [k]$.} 
\end{enumerate}
For convenience, we restate the proposition.

\begin{proposition}
	Let $\pi \sim \Dir(1,\ldots,1)$ and, conditional on $\pi$, let $g_1,\ldots,g_n\sim\pi$.
	Let $n_r = \sum_{i = 1}^n\mathbf{1}\{g_i = r\}$ and $\bm n = (n_1,\ldots,n_k)$.
	We then have that $\bm n$ is \emph{uniform on the set $\mathcal{C}_n^k$ of weak $k$-compositions} of $n$. That is,
	\begin{equation*}
		\mathbb{P}(\bm n = c) = \frac{1}{\binom{n + k - 1}{k - 1}} \quad\text{for all } \, c\in\mathcal{C}_n^k \,.
	\end{equation*}
	Moreover, conditional on the counts $c = (c_1,\ldots,c_k)$, the community labels are \emph{uniform on community assignments with those counts}. That is, 
	for any community assignment $g$ with count vector $c$, we have $\mathbb{P}(g\,\mid\,\bm n = c) = \binom{n}{c_1,\ldots,c_k}^{-1}$ and $\mathbb{P}(g\,\mid\,\bm n = c) = 0$ otherwise.
	\end{proposition}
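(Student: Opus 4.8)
The plan is to compute the marginal law of $\bm n$ by integrating out $\pi$, recognizing the result as a Dirichlet--multinomial distribution with all parameters equal to $1$, and then to read off the conditional law of $g$ given $\bm n$ as a ratio of two such marginals. First I would fix a weak composition $c = (c_1,\ldots,c_k) \in \mathcal{C}_n^k$ and note that, conditional on $\pi$, the count vector $\bm n$ is multinomial, so
\[
	\mathbb{P}(\bm n = c) = \binom{n}{c_1,\ldots,c_k} \int_{\Delta^{k-1}} \Bigl(\prod_{r=1}^k \pi_r^{c_r}\Bigr)\, f_{\Dir(1,\ldots,1)}(\pi)\, d\pi \, ,
\]
where $f_{\Dir(1,\ldots,1)}$ is the (constant) uniform density on $\Delta^{k-1}$, equal to $(k-1)!$, since the multivariate Beta normalizing constant of $\Dir(1,\ldots,1)$ is $B(1,\ldots,1) = \prod_r \Gamma(1)/\Gamma(k) = 1/(k-1)!$.

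Next I would evaluate the Dirichlet-type integral $\int_{\Delta^{k-1}} \prod_r \pi_r^{c_r}\, d\pi = \prod_r \Gamma(c_r+1)/\Gamma(n+k) = \prod_r c_r!\,/\,(n+k-1)!$, using $\sum_r c_r = n$. Substituting this and the value $(k-1)!$ for the density, the factors $\prod_r c_r!$ cancel against the multinomial coefficient, leaving
\[
	\mathbb{P}(\bm n = c) = \frac{n!}{\prod_r c_r!}\cdot(k-1)!\cdot\frac{\prod_r c_r!}{(n+k-1)!} = \frac{n!\,(k-1)!}{(n+k-1)!} = \binom{n+k-1}{k-1}^{-1} \, ,
\]
which does not depend on $c$. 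Since $|\mathcal{C}_n^k| = \binom{n+k-1}{k-1}$ by a standard stars-and-bars count, this says precisely that $\bm n$ is uniform on $\mathcal{C}_n^k$.

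For the conditional claim I would use that $\bm n$ is a deterministic function of $g$, so $\mathbb{P}(g \mid \bm n = c) = 0$ unless $g$ has count vector $c$, in which case $\mathbb{P}(g \mid \bm n = c) = \mathbb{P}(g)/\mathbb{P}(\bm n = c)$. The unconditional probability of such a $g$ is the same Dirichlet moment as above but without the multinomial coefficient, namely $\mathbb{P}(g) = \mathbb{E}\bigl[\prod_r \pi_r^{c_r}\bigr] = (k-1)!\,\prod_r c_r!\,/\,(n+k-1)!$. Dividing by $\mathbb{P}(\bm n = c) = n!\,(k-1)!\,/\,(n+k-1)!$ gives $\mathbb{P}(g \mid \bm n = c) = \prod_r c_r!\,/\,n! = \binom{n}{c_1,\ldots,c_k}^{-1}$, as claimed.

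The argument is entirely a computation; the only place requiring care is the Gamma/Beta bookkeeping --- correctly identifying the normalizing constant of $\Dir(1,\ldots,1)$ and the value of $\int_{\Delta^{k-1}}\prod_r \pi_r^{c_r}\,d\pi$, and keeping track of the shift by $1$ in the exponents so that the relevant moments involve $\Gamma(c_r+1) = c_r!$ rather than $\Gamma(c_r)$. I do not anticipate any genuine obstacle beyond this routine accounting.
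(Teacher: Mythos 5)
Your proposal is correct and follows essentially the same route as the paper: marginalize the multinomial counts over the $\Dir(1,\ldots,1)$ prior (i.e., recognize the Dirichlet--multinomial with unit parameters) to get the constant probability $n!\,(k-1)!/(n+k-1)!$, and then obtain the conditional uniformity of $g$ given the counts. The only cosmetic difference is that you derive the conditional as the ratio $\mathbb{P}(g)/\mathbb{P}(\bm n = c)$ of marginals, while the paper observes that $\mathbb{P}(g \mid \bm n = c, \pi)$ is already independent of $\pi$; both rest on the same exchangeability fact and your Gamma/Beta bookkeeping is accurate.
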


\begin{proof}
	Because $n_r = \sum_{i = 1}^n\mathbf{1}\{g_i = r\}$ with $g_1,\ldots,g_n\sim\pi$, we have $\bm n\mid\pi\sim\mathrm{Multinomial}(n,\pi)$. 
	Since $\pi \sim \Dir(1,\ldots,1)$, marginalizing according to this distribution, we find that $\bm n$ is Dirichlet--multinomial with PMF
	\begin{equation}\label{this-three}
		\mathbb{P}(\bm n = c) = \frac{n!}{\prod_r c_r!}\,\frac{\Gamma(\alpha_0)}{\Gamma(n + \alpha_0)}\, \prod_{r = 1}^k\frac{\Gamma(c_r + \alpha_r)}{\Gamma(\alpha_r)} \,.
	\end{equation}
	
With $\alpha_r\equiv1$ (and hence $\alpha_0 = k$), the expression \eqref{this-three} simplifies to
	$\mathbb{P}(\bm n = c) = n!(k - 1)!/(n + k - 1)!$, which is constant with respect to $c\in\mathcal{C}_n^k$ and sums to $1$ because $|\mathcal{C}_n^k| = \binom{n + k - 1}{k - 1}$.
	Then, to prove the result about $\mathbb{P}(g\,\mid\,\bm n = c)$, we note that given $\pi$, we know that any labeling $g$ with count vector
	$c$ has probability
	$\prod_{r = 1}^k \pi_r^{c_r}$. This probability is the same for any $g$ with count vector $c$. 
	Therefore,
	\begin{equation*}
		\mathbb{P}(g\,\mid\,\bm n = c,\pi) = \frac{1}{\binom{n}{c_1,\ldots,c_k}} \,, 
	\end{equation*}
	which is independent of $\pi$.
	Removing the conditioning on $\pi$ thus yields the same value, and we can write
	$\mathbb{P}(g\,\mid\,\bm n = c) = \binom{n}{c_1,\ldots,c_k}^{-1}$.
	\end{proof}

%%%

\section{Monolayer-Network Community-Size Localization: Theory} \label{sec:monolayer-theory}
In this appendix, we explain the empirical observations in Figure~\ref{singlelayercomp} and Table~\ref{iprSL} by proving various theoretical results. 

We start by proving a result about the distribution of community sizes when one samples
community assignments using
a uniform distribution on community assignments.
Let $(n_1,\ldots,n_k)$, with $\sum_r n_r = n$, denote the vector of community sizes.

\begin{proposition}[Uniform distribution on community assignments]\label{UnifDistProp}
For a uniform distribution on community assignments (equivalently, for independent and identically distributed (IID)
labels with probabilities $p_r = 1/k$), we have $n_1\sim\mathrm{Bin}(n,1/k)$. Therefore,
\begin{equation*}
	\sqrt{n}\left(\frac{n_1}{n} - \frac{1}{k}\right) \ \rightsquigarrow \ \mathcal N \left(0,\frac{k - 1}{k^2}\right)
\end{equation*}
and
\begin{equation*}
	\mathrm{Var} \left(\frac{n_1}{n}\right) = \frac{k - 1}{k^2} \times \frac{1}{n} \,,
\end{equation*}
where $\rightsquigarrow$ denotes convergence in distribution.
\end{proposition}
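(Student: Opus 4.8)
The plan is to reduce all three assertions to standard facts about a sum of independent and identically distributed (IID) Bernoulli random variables. First I would unpack the hypothesis: the uniform distribution $\P(g) = 1/k^n$ on $[k]^n$ is precisely the $n$-fold product of the uniform distribution on $[k]$, so the coordinates $g_1,\dots,g_n$ are independent with $\P(g_i = r) = 1/k$ for every $r\in[k]$. Setting $X_i := \mathbb{1}\{g_i = 1\}$, the $X_i$ are therefore IID $\mathrm{Bernoulli}(1/k)$ random variables, and $n_1 = \sum_{i=1}^n X_i$. A sum of $n$ IID $\mathrm{Bernoulli}(1/k)$ variables is $\mathrm{Bin}(n,1/k)$ by definition, which gives the distributional claim. (By the symmetry of the uniform measure under permuting the labels, the same holds for every $n_r$.)

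Next I would read off the variance directly. For $n_1\sim\mathrm{Bin}(n,1/k)$ we have
\begin{equation*}
	\mathrm{Var}(n_1) = n\cdot\frac1k\Bigl(1-\frac1k\Bigr) = n\,\frac{k-1}{k^2}\,,
\end{equation*}
so dividing by $n^2$ yields $\mathrm{Var}(n_1/n) = \frac{k-1}{k^2}\cdot\frac1n$, as claimed.

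Finally, for the asymptotic statement I would invoke the classical central limit theorem for the IID sequence $(X_i)$: each $X_i$ has mean $1/k$ and finite, positive variance $\sigma^2 = \frac{k-1}{k^2}$ (positivity uses $k\ge2$; the degenerate case $k=1$ is trivial), so
\begin{equation*}
	\sqrt{n}\Bigl(\frac1n\sum_{i=1}^n X_i - \frac1k\Bigr) = \sqrt{n}\Bigl(\frac{n_1}{n}-\frac1k\Bigr)\ \rightsquigarrow\ \mathcal N\Bigl(0,\frac{k-1}{k^2}\Bigr)\,.
\end{equation*}
There is no substantive obstacle in this argument; the only point worth stating explicitly is the equivalence between the uniform prior on community assignments and IID uniform node labels, which is immediate from the product structure of the uniform measure on $[k]^n$. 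One could alternatively obtain the limiting statement from the De Moivre--Laplace theorem (the specialization of the CLT to binomials) or from a characteristic-function computation, but the IID CLT is the cleanest route.
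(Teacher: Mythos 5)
Your proposal is correct and follows essentially the same route as the paper, which simply observes that $n_1\sim\mathrm{Bin}(n,1/k)$, applies the central limit theorem for the first display, and reads the variance off directly; your write-up merely makes these standard steps (the product structure of the uniform prior, the IID Bernoulli decomposition, and the binomial variance) explicit.
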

The first result in Proposition \ref{UnifDistProp} is an application of the central limit theorem, and the second result follows directly from the first result.

In Theorem \ref{theorem-fin}, we prove a deep connection between the hierarchical Bayesian model in Section~\ref{nodewiseSL} and the limiting behavior of the community-size distribution of {such a model}. In this theorem, we sample a distribution $\Pi$ from an arbitrary distribution, rather than sampling specifically from a Dirichlet distribution as in Section~\ref{nodewiseSL}.
	
	\begin{theorem}[Mixture LLN / finite-$k$ de Finetti]\label{theorem-fin}
	Let $Q$ be any probability distribution on the simplex $\Delta^{k - 1}$ and sample
	\begin{equation*}
		\Pi \sim Q\,,\quad g_i\mid \Pi = \pi \sim \pi \,,\ i \in [n] \,.
	\end{equation*}
	Let $\widehat p_n = (n_1/n,\ldots,n_k/n)$, where $n_r = \frac{1}{n} \sum_{i = 1}^n \bm 1 \{g_i = r\}$.
	Conditional on $\Pi = \pi$, we then have that $\widehat p_n\to \pi$ almost surely.
	Unconditionally, $\widehat p_n$ converges in distribution to $Q$. That is,
	\begin{equation*}
		\widehat p_n\ \rightsquigarrow\ Q\quad\text{on }\Delta_{k - 1} \,.
	\end{equation*}
	Equivalently, for any bounded continuous function $f : \Delta_{k - 1} \to \mathbb{R}$, we have
	\begin{equation}
		\mathbb{E}[f(\widehat p_n)]\to \mathbb{E}[f(\Pi)] \,.
	\end{equation}
	\end{theorem}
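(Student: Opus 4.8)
The plan is to derive all three assertions from the strong law of large numbers (SLLN) together with two applications of the bounded convergence theorem; a characteristic-function computation furnishes a self-contained alternative that I would mention as a remark. First I would prove the conditional claim. Working under a regular conditional distribution given $\Pi = \pi$ (which exists because all the spaces involved are Polish), the labels $g_1,g_2,\dots$ are i.i.d.\ with $\mathbb{P}(g_i = r \mid \Pi = \pi) = \pi_r$. Hence, for each fixed $r \in [k]$, the indicators $\mathbb{1}\{g_i = r\}$ are i.i.d.\ $\mathrm{Bernoulli}(\pi_r)$, so the SLLN gives $n_r/n \to \pi_r$ almost surely. Since $k$ is finite, coordinatewise almost-sure convergence coincides with almost-sure convergence of the vector, so $\widehat p_n \to \pi$ almost surely in $\Delta^{k-1}$ (with its subspace topology); this holds for $Q$-almost every $\pi$, which is the first claim.

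Next I would note that the second and third assertions are equivalent and prove the third. By the portmanteau characterization of weak convergence, $\widehat p_n \rightsquigarrow Q$ means precisely that $\mathbb{E}[f(\widehat p_n)] \to \int f\,dQ$ for every bounded continuous $f\colon \Delta^{k-1}\to\mathbb{R}$, and since $\Pi \sim Q$ we have $\int f\,dQ = \mathbb{E}[f(\Pi)]$; thus the displayed convergence in the theorem is exactly $\widehat p_n \rightsquigarrow Q$. To prove it, I would condition on $\Pi$ and write $\mathbb{E}[f(\widehat p_n)] = \mathbb{E}\bigl[\mathbb{E}[f(\widehat p_n)\mid\Pi]\bigr]$. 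For $Q$-almost every $\pi$, the first claim gives $\widehat p_n \to \pi$ almost surely under the conditional law, so by continuity of $f$ we have $f(\widehat p_n)\to f(\pi)$ almost surely there; since $|f|\le\|f\|_\infty$, the bounded convergence theorem yields $\mathbb{E}[f(\widehat p_n)\mid\Pi=\pi]\to f(\pi)$. The random variables $\mathbb{E}[f(\widehat p_n)\mid\Pi]$ are uniformly bounded by $\|f\|_\infty$, so a second application of the bounded convergence theorem, now integrating against $\Pi\sim Q$, gives $\mathbb{E}[f(\widehat p_n)]\to\mathbb{E}[f(\Pi)]$, which completes the proof.

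I do not expect any genuine obstacle here: the argument is a routine SLLN estimate followed by two invocations of bounded convergence. The only delicate point is measure-theoretic bookkeeping --- invoking the regular conditional distribution of $(g_i)_{i\ge 1}$ given $\Pi$ and using that, for $Q$-a.e.\ $\pi$, it makes the $g_i$ i.i.d.\ with law $\pi$ --- which is immediate from the model specification $g_i\mid\Pi=\pi\sim\pi$. As a remark I would give the alternative characteristic-function argument, which sidesteps even that step: for $t\in\mathbb{R}^k$,
\begin{equation*}
	\mathbb{E}\bigl[e^{\mathrm{i}\langle t,\widehat p_n\rangle}\bigr] = \mathbb{E}\Bigl[\Bigl(\sum_{r=1}^k \Pi_r\, e^{\mathrm{i} t_r/n}\Bigr)^{\!n}\Bigr] \;\longrightarrow\; \mathbb{E}\bigl[e^{\mathrm{i}\langle t,\Pi\rangle}\bigr] \,,
\end{equation*}
because $\bigl(\sum_{r} \pi_r e^{\mathrm{i} t_r/n}\bigr)^{n}\to e^{\mathrm{i}\langle t,\pi\rangle}$ pointwise in $\pi$ while the integrand has modulus at most $1$, so dominated convergence applies; L\'evy's continuity theorem then gives $\widehat p_n\rightsquigarrow Q$ directly. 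I would present the conditioning argument as the main proof and this computation only as a remark.
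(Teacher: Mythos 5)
Your proposal is correct and follows essentially the same route as the paper's proof: the conditional SLLN gives $\widehat p_n \to \pi$ almost surely, and two applications of the bounded convergence theorem (first under the conditional law given $\Pi=\pi$, then integrating over $\Pi\sim Q$) yield $\mathbb{E}[f(\widehat p_n)]\to\mathbb{E}[f(\Pi)]$ for bounded continuous $f$. Your added remarks --- the regular-conditional-distribution bookkeeping and the characteristic-function alternative --- are fine but not needed beyond what the paper already does.
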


Theorem \ref{theorem-fin} guarantees
that the equivalence that we described in Proposition~\ref{prop:uniform-compositions} holds in an asymptotic sense for a general distribution $Q$ on the simplex $\Delta^{k - 1}$.
    One can view this result as a mixture form of the
    law of large numbers (LLN).
    Conditionally on $\Pi = \pi$, the labels $g_1,\ldots,g_n$ are IID with distribution
    $\pi$, so the 
    strong LLN
    guarantees that
    $\widehat p_n \to \pi$ almost surely. The key
    point is what happens \emph{unconditionally}:
    the limit
    $\pi$ is 
    random under the hierarchical model $\Pi\sim Q$, so the empirical proportion vector
    $\widehat p_n$ converges in distribution to the mixing distribution
    $Q$. In other words, the prior $Q$ is encoded in the large-$n$
    behavior of an observable statistic.

    The above viewpoint is closely related to the finite-alphabet case of de Finetti's theorem~{\cite[Theorem 1.1]{Kallenberg05}}.
    De Finetti's theorem asserts that an infinite exchangeable sequence with
    values in the alphabet $\{1,\ldots,k\}$ has
    a hierarchical
    representation as an IID sampling from a random probability vector $\Pi\in\Delta^{k - 1}$.
    Theorem \ref{theorem-fin} addresses the complementary direction. Assuming such a hierarchical representation,
    the empirical frequency vector $\widehat p_n$ recovers $\Pi$ in the limit. Additionally, across replicates, the limiting distribution of $\widehat p_n$
    recovers the mixing 
    distribution $Q$.

We view Theorem~\ref{theorem-fin} as conceptually deep because it makes the mixing distribution $Q$ observable in the limit. Although
$\Pi$ is latent for any individual sample, the distribution of the empirical proportions $\widehat p_n$ converges to
$Q$. 
Across many independent data sets that are generated from the
same hierarchical model, each large-$n$ replicate yields $\widehat p_n \approx \Pi$, so the empirical
distribution of $\widehat p_n$ across replicates directly informs $Q$.
This perspective aligns with the empirical-Bayes intuition 
that repeated related problems allow one to learn an ``objective" prior from
data~\cite{efron2024}.

We now prove Theorem \ref{theorem-fin}.

	\begin{proof}
	Conditional on $\Pi = \pi$, it follows from the strong LLN that $\widehat p_n\to \pi$ almost surely.
	 Fix a bounded continuous function $f$. Conditional on $\Pi = \pi$, the continuity of $f$ guarantees that $f(\widehat p_n) \to f(\pi)$ almost surely. By the bounded convergence theorem (applied to conditional expectation), we then have 
	 $h_n(\pi) := \mathbb{E} [f(\widehat p_n) \given \Pi = \pi] \to f(\pi)$. This, in turn, implies that $h_n(\Pi) \to f(\Pi)$ almost surely. Again applying the bounded convergence theorem and invoking the law of iterated expecation gives
	\begin{equation*}	
		\mathbb E [f(\widehat p_n)] = \mathbb E [h_n(\Pi)] \rightarrow \mathbb E[f(\Pi)] \,.
	\end{equation*}
	Because the function $f$ is general, we have established weak convergence of $\widehat p_n$ to $Q$. \qed
	\end{proof}

	\begin{corollary}\label{QCoordinate}
	If $Q = \Dir(1,\ldots,1)$, then $\widehat p_n\rightsquigarrow \Dir(1,\ldots,1)$ and each coordinate
	$\widehat p_{n,1}\rightsquigarrow \mathrm{Beta}(1,k - 1)$.
	For $k = 2$, this reduces to $\widehat p_{n,1}\rightsquigarrow \mathrm{Unif}(0,1)$.
	\end{corollary}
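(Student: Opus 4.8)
The plan is to read this off directly from Theorem~\ref{theorem-fin}, since Corollary~\ref{QCoordinate} is just the special case $Q = \Dir(1,\ldots,1)$. Theorem~\ref{theorem-fin} places no restriction on $Q$ beyond being a probability distribution on $\Delta^{k-1}$, so setting $Q = \Dir(1,\ldots,1)$ immediately yields $\widehat p_n \rightsquigarrow \Dir(1,\ldots,1)$ on $\Delta^{k-1}$, which is the first assertion.

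For the coordinate statement, I would apply the continuous mapping theorem to the coordinate projection $\pi \mapsto \pi_1$ from $\Delta^{k-1}$ to $[0,1]$, which is continuous; this gives $\widehat p_{n,1} \rightsquigarrow \Pi_1$, where $\Pi \sim \Dir(1,\ldots,1)$. It then remains only to identify the law of $\Pi_1$, which is the standard marginalization property of the Dirichlet distribution: if $\Pi \sim \Dir(\alpha_1,\ldots,\alpha_k)$, then $\Pi_1 \sim \mathrm{Beta}\!\big(\alpha_1,\, \sum_{r \ge 2}\alpha_r\big)$. Taking $\alpha_r \equiv 1$ (so $\sum_{r \ge 2}\alpha_r = k-1$) gives $\Pi_1 \sim \mathrm{Beta}(1,k-1)$, and hence $\widehat p_{n,1} \rightsquigarrow \mathrm{Beta}(1,k-1)$. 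For $k = 2$, the density of $\mathrm{Beta}(1,1)$ is constant on $[0,1]$, so $\mathrm{Beta}(1,1) = \mathrm{Unif}(0,1)$, which gives the final claim.

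If one prefers a self-contained argument that does not quote the Dirichlet marginal, one can instead rerun the proof of Theorem~\ref{theorem-fin} at the level of the single coordinate: conditional on $\Pi$, we have $n_1 \mid \Pi \sim \mathrm{Bin}(n,\Pi_1)$, so the strong law gives $n_1/n \to \Pi_1$ almost surely conditional on $\Pi_1$, and the bounded convergence theorem applied to $\mathbb{E}[f(n_1/n) \mid \Pi_1]$ for bounded continuous $f$ then yields weak convergence of $n_1/n$ to the law of $\Pi_1$, which by the argument above is $\mathrm{Beta}(1,k-1)$. In either case there is no genuine obstacle here: the corollary is essentially a restatement of Theorem~\ref{theorem-fin}, and the only substantive input is the elementary fact that the univariate margins of a symmetric Dirichlet distribution are Beta distributions.
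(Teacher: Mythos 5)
Your argument is correct and matches what the paper intends: the corollary is stated without a separate proof precisely because it follows by specializing Theorem~\ref{theorem-fin} to $Q=\Dir(1,\ldots,1)$, applying the continuous mapping theorem to the coordinate projection, and using the standard fact that the univariate margins of $\Dir(1,\ldots,1)$ are $\mathrm{Beta}(1,k-1)$, with $\mathrm{Beta}(1,1)=\mathrm{Unif}(0,1)$ for $k=2$.
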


}

Combining Corollary \ref{QCoordinate} with Proposition~\ref{prop:uniform-compositions} yields the following corollary.

\begin{corollary}[Uniform distributions on
weak compositions]
\label{UnifDistLimitTheory}
For a uniform distribution on weak $k$-compositions of $n$, we have
\begin{equation*}
	\frac{n_1}{n}\ \rightsquigarrow\ \mathrm{Beta}(1,k - 1)
\end{equation*}
and
\begin{equation*}
	\mathrm{Var}\!\Big(\frac{n_1}{n}\Big)\to \frac{k - 1}{k^2(k + 1)} \,.
\end{equation*}
\end{corollary}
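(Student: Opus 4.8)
The plan is to transfer the limit in Corollary~\ref{QCoordinate} to the uniform-composition setting by way of the distributional identity in Proposition~\ref{prop:uniform-compositions}. First I would record that, by Proposition~\ref{prop:uniform-compositions}, sampling $\Pi\sim\Dir(1,\ldots,1)$ and then $g_1,\ldots,g_n\mid\Pi\sim\Pi$ produces a community-size vector $\bm n = (n_1,\ldots,n_k)$ that is exactly uniform on $\mathcal{C}_n^k$. Consequently, the distribution of $n_1/n$ under a uniform distribution on weak $k$-compositions of $n$ coincides with the distribution of the first coordinate $\widehat p_{n,1}$ of $\widehat p_n = (n_1/n,\ldots,n_k/n)$ in the hierarchical model of Theorem~\ref{theorem-fin} with mixing distribution $Q = \Dir(1,\ldots,1)$.

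Next I would invoke Corollary~\ref{QCoordinate} (which is Theorem~\ref{theorem-fin} specialized to $Q = \Dir(1,\ldots,1)$), giving $\widehat p_n \rightsquigarrow \Dir(1,\ldots,1)$ and, in particular, $\widehat p_{n,1} \rightsquigarrow \mathrm{Beta}(1,k - 1)$. Combined with the identification in the previous paragraph, this is precisely the asserted weak convergence $n_1/n \rightsquigarrow \mathrm{Beta}(1,k-1)$.

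For the variance statement, the key observation is that $n_1/n$ takes values in the compact interval $[0,1]$, so weak convergence upgrades to convergence of moments: applying the definition of weak convergence to the bounded continuous test functions $x \mapsto x$ and $x \mapsto x^2$ (or, equivalently, a Skorokhod coupling together with the bounded convergence theorem) gives $\mathbb{E}[n_1/n] \to \mathbb{E}[B]$ and $\mathbb{E}[(n_1/n)^2] \to \mathbb{E}[B^2]$, where $B \sim \mathrm{Beta}(1,k-1)$. Hence $\mathrm{Var}(n_1/n) \to \mathrm{Var}(B)$, and substituting $\alpha = 1$, $\beta = k - 1$ into $\mathrm{Var}(B) = \alpha\beta/[(\alpha+\beta)^2(\alpha+\beta+1)]$ yields $\mathrm{Var}(B) = (k-1)/[k^2(k+1)]$, as claimed. (Alternatively, one can skip the limiting argument and compute directly from the Dirichlet--multinomial variance formula that $\mathrm{Var}(n_1/n) = \tfrac{k-1}{k^2(k+1)}\cdot\tfrac{n+k}{n}$ for every finite $n$, which converges to $\tfrac{k-1}{k^2(k+1)}$.)

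There is no substantive obstacle here, since the result is a direct consequence of Proposition~\ref{prop:uniform-compositions} and Corollary~\ref{QCoordinate}; the only point that warrants a line of justification is the passage from convergence in distribution to convergence of the second moment, which is immediate from uniform boundedness of $n_1/n$ (and, if a self-contained argument is preferred, the sole computation is the elementary Dirichlet--multinomial variance identity).
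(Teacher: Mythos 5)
Your proposal is correct and follows essentially the same route as the paper, which obtains Corollary~\ref{UnifDistLimitTheory} by combining Proposition~\ref{prop:uniform-compositions} (identifying the uniform-composition law of $\bm n$ with the $\Dir(1,\ldots,1)$ hierarchical model) with Corollary~\ref{QCoordinate}. Your added justification of the variance limit --- via boundedness of $n_1/n$ on $[0,1]$ (or the exact Dirichlet--multinomial computation $\mathrm{Var}(n_1/n) = \tfrac{k-1}{k^2(k+1)}\cdot\tfrac{n+k}{n}$) --- correctly fills in a step the paper leaves implicit.
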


The following corollary summarizes the contrasts the localization properties of a uniform distribution on community assignments and a uniform distribution on weak compositions.

\begin{corollary}[Localization contrast via variance]
For uniform assignments, $\mathrm{Var}(n_1/n) = O(1/n)$, which entails a sharp concentration as one increases $n$. For uniform compositions, $\mathrm{Var}(n_1/n) = \Theta(1)$, which entails that the distribution yields a nonzero spread of community sizes even for $n \rightarrow \infty$.
 For $k = 2$, we have
 $\P(n_1 = i) = 1/(n + 1)$ (i.e., a discrete uniform distribution) for finite $n$, so $n_1/n\rightsquigarrow \mathrm{Unif}(0,1)$.
 \end{corollary}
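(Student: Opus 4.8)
The plan is to obtain this corollary as a direct synthesis of Proposition~\ref{UnifDistProp}, Corollary~\ref{UnifDistLimitTheory}, and Proposition~\ref{prop:uniform-compositions}, with essentially no new work. For a uniform distribution on community assignments, Proposition~\ref{UnifDistProp} already gives $n_1 \sim \mathrm{Bin}(n, 1/k)$ and hence $\mathrm{Var}(n_1/n) = \frac{k - 1}{k^2}\cdot\frac{1}{n}$, which is $O(1/n)$; the claimed sharp concentration of $n_1/n$ around $1/k$ then follows from Chebyshev's inequality (or from the central limit theorem stated in that proposition). For a uniform distribution on weak $k$-compositions, Corollary~\ref{UnifDistLimitTheory} gives $\mathrm{Var}(n_1/n) \to \frac{k - 1}{k^2(k + 1)}$, which is a strictly positive constant for every $k \ge 2$; consequently there is an $N$ such that for all $n \ge N$ the variance lies between two positive constants, so $\mathrm{Var}(n_1/n) = \Theta(1)$ and in particular it does not vanish as $n \to \infty$. (One may also note that $n_1/n \in [0,1]$ forces $\mathrm{Var}(n_1/n) \le 1/4$ for every $n$, and the variance is positive for every $n \ge 1$ since $n_1$ is non-degenerate, so the two-sided $\Theta(1)$ bound in fact holds for all $n$, not merely asymptotically.)

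For the $k = 2$ statement, I would invoke Proposition~\ref{prop:uniform-compositions}: when $\bm n = (n_1, n_2)$ is uniform on the set $\mathcal{C}_n^2$ of weak $2$-compositions of $n$, and since $n_1 \mapsto (n_1, n - n_1)$ is a bijection from $\{0, 1, \ldots, n\}$ onto $\mathcal{C}_n^2$ with $|\mathcal{C}_n^2| = \binom{n + 1}{1} = n + 1$, we get $\P(n_1 = i) = \frac{1}{n + 1}$ for each $i \in \{0, \ldots, n\}$, i.e., $n_1$ is discrete uniform on $\{0, \ldots, n\}$. To pass to the limit, observe that the distribution function of $n_1/n$ is the step function $t \mapsto (\lfloor nt\rfloor + 1)/(n + 1)$ on $[0,1)$, which converges to $t$ at every $t \in (0,1)$, i.e., at every continuity point of the $\mathrm{Unif}(0,1)$ distribution function; hence $n_1/n \rightsquigarrow \mathrm{Unif}(0,1)$. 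More economically, one may simply specialize Corollary~\ref{UnifDistLimitTheory} to $k = 2$, where $\mathrm{Beta}(1, k - 1) = \mathrm{Beta}(1,1) = \mathrm{Unif}(0,1)$, so the weak-convergence claim is immediate.

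I do not anticipate a genuine obstacle: this corollary is a bookkeeping summary that contrasts the $O(1/n)$ spread of the uniform-assignment model with the $\Theta(1)$ spread of the uniform-composition model. The only point meriting a word of care is that the $\Theta(1)$ assertion, as written, is an asymptotic statement, so it is the \emph{convergence} of the variance to a strictly positive constant (rather than any single finite-$n$ value) that does the work; should a fully uniform-in-$n$ lower bound be wanted in the text, one can instead supply the exact finite-$n$ variance from the Dirichlet--multinomial formula (for $k = 2$ this is $\mathrm{Var}(n_1/n) = (n + 2)/(12 n)$).
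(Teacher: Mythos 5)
Your proposal is correct and follows essentially the same route the paper intends: the corollary is a direct synthesis of Proposition~\ref{UnifDistProp} (binomial variance $\tfrac{k-1}{k^2 n}$), Corollary~\ref{UnifDistLimitTheory} (variance converging to the positive constant $\tfrac{k-1}{k^2(k+1)}$), and Proposition~\ref{prop:uniform-compositions} (counting $|\mathcal{C}_n^2| = n+1$ to get the discrete uniform law for $k=2$, whence $n_1/n \rightsquigarrow \mathrm{Unif}(0,1)$ either by the CDF computation or by specializing $\mathrm{Beta}(1,1)$). Your supplementary remarks (the finite-$n$ positivity giving a uniform $\Theta(1)$ bound, and the exact value $\mathrm{Var}(n_1/n) = (n+2)/(12n)$ for $k=2$) are accurate and harmless additions.
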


One obtains a similar contrast in localization by calculating the inverse participation ratio (IPR) (see e.g. Table \ref{iprSL}).

\begin{proposition}[Localization contrast via IPR]\label{prop:ipr-contrast}
 Consider the IPR $\mathrm{IPR}_n = \sum_{i = 0}^n [\P(n_1 = i)]^2$ and fix the number $k$ of communities to be $k \ge 2$. 
 We then have the following results:
\begin{enumerate}
\item[(a)] (Uniform assignments) If $n_1\sim\mathrm{Bin}(n,1/k)$, then
\begin{equation*}
	\sqrt n\cdot \mathrm{IPR}_n \;\rightarrow\; \frac{k}{2\sqrt{\pi\,(k - 1)}} \,.
\end{equation*}
\item[(b)] (Uniform compositions) If $(n_1,\ldots,n_k)$ is uniform on weak $k$-compositions of $n$,
then, with $f(x) = (k - 1)(1 - x)^{k - 2}$ on $[0,1]$, we have
\begin{equation*}
	n\cdot \mathrm{IPR}_n \;\rightarrow\; \int_0^1 f(x)^2\,dx \;=\; \frac{(k - 1)^2}{2k - 3} \,.
\end{equation*}
\end{enumerate}
\end{proposition}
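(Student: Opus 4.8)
\emph{Proposal.} The plan is to treat the two parts with different tools: an exact ``collision probability'' identity together with the lattice local central limit theorem for part~(a), and the explicit single-coordinate PMF together with a Riemann-sum argument for part~(b). I would prove them in the order (a) then (b).

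For part~(a), first observe that if $X$ and $Y$ are independent copies of $n_1\sim\mathrm{Bin}(n,1/k)$, then $\mathrm{IPR}_n = \sum_i[\P(n_1=i)]^2 = \P(X=Y) = \P(Z_n=0)$, where $Z_n := X-Y$. Next, write $Z_n = \sum_{j=1}^n W_j$ with $W_j := B_j-B_j'$ for i.i.d.\ $B_j,B_j'\sim\mathrm{Bernoulli}(1/k)$; each $W_j$ is supported on $\{-1,0,1\}$, has mean $0$, variance $\sigma^2 := 2\cdot\tfrac1k\cdot\tfrac{k-1}{k} = \tfrac{2(k-1)}{k^2}$, and is an aperiodic integer-lattice variable of span $1$ (since $\P(W_j=0)$ and $\P(W_j=\pm1)$ are all positive). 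The one-dimensional lattice local CLT then gives $\sup_{m\in\mathbb Z}\bigl|\sqrt n\,\P(Z_n=m) - \tfrac{1}{\sqrt{2\pi\sigma^2}}e^{-m^2/(2\sigma^2 n)}\bigr|\to0$, and evaluating at $m=0$ yields $\sqrt n\,\mathrm{IPR}_n\to\tfrac{1}{\sqrt{2\pi\sigma^2}} = \tfrac{k}{2\sqrt{\pi(k-1)}}$. (Alternatively, one can reach the same limit by applying Stirling's formula directly to $\binom ni^2(1/k)^{2i}((k-1)/k)^{2(n-i)}$ and recognizing the resulting sum as a Gaussian Riemann sum, but the collision-probability route is cleaner.)

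For part~(b), I would start from the exact marginal PMF of $n_1$. Counting the weak $k$-compositions of $n$ with first part $i$ (equivalently, weak $(k-1)$-compositions of $n-i$) and using $|\mathcal C_n^k| = \binom{n+k-1}{k-1}$ from Proposition~\ref{prop:uniform-compositions} gives $\P(n_1=i) = \binom{n-i+k-2}{k-2}\big/\binom{n+k-1}{k-1}$. Set $h_n(i) := n\,\P(n_1=i)$ and $f(x) := (k-1)(1-x)^{k-2}$ (the $\mathrm{Beta}(1,k-1)$ density from Corollary~\ref{QCoordinate}). Writing $h_n(i)$ as a product of $k-1$ factors that each lie in $[0,1]$ and each differ by $O(1/n)$ from the corresponding factor of $f(i/n)$, I would establish $\sup_{0\le i\le n}|h_n(i) - f(i/n)| \to 0$. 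Then
\[
 n\cdot\mathrm{IPR}_n \;=\; \frac1n\sum_{i=0}^n h_n(i)^2 \;=\; \frac1n\sum_{i=0}^n f(i/n)^2 \;+\; \frac1n\sum_{i=0}^n\bigl(h_n(i)^2 - f(i/n)^2\bigr),
\]
where the first term is a Riemann sum for the continuous function $f^2$ and converges to $\int_0^1 f(x)^2\,dx$, while the second term is bounded by $\tfrac{n+1}{n}\cdot 2(k-1)\sup_i|h_n(i)-f(i/n)|\to 0$ (using $0\le h_n,f\le k-1$). Evaluating $\int_0^1(k-1)^2(1-x)^{2(k-2)}\,dx = \tfrac{(k-1)^2}{2k-3}$ then completes the proof.

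I expect the only real care needed to be bookkeeping rather than a conceptual obstacle: in part~(a), one must invoke the \emph{lattice} (as opposed to the non-lattice) local CLT and verify aperiodicity of $W_j$; in part~(b), one must establish the convergence $h_n(i)\to f(i/n)$ \emph{uniformly} in $i$ all the way up to $i=n$, since a merely pointwise statement would not control the Riemann-sum error near the right endpoint. Both reduce to elementary estimates.
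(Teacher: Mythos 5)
Your part~(a) is essentially the paper's own argument: the same collision-probability identity $\mathrm{IPR}_n=\P(X=Y)=\P(Z_n=0)$, the same decomposition into i.i.d.\ increments $B_j-B_j'$ with variance $2p(1-p)$, and the same appeal to the lattice local CLT evaluated at $0$; nothing to add there. Your part~(b) reaches the same limit by a genuinely different mechanism. The paper proves only the \emph{pointwise} local limit $n\,p_{n,\lfloor nx\rfloor}\to f(x)$ for $x\in(0,1)$, then invokes the dominated convergence theorem with the constant dominating bound $n\,p_{n,i}\le n\,p_{n,0}<k-1$, and treats the endpoint contribution $n\,p_{n,n}^2=O(n^{-2k+3})$ separately. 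You instead claim \emph{uniform} convergence $\sup_{0\le i\le n}\lvert n\,p_{n,i}-f(i/n)\rvert\to0$ and then control the Riemann-sum error directly via $\lvert h_n^2-f^2\rvert\le 2(k-1)\lvert h_n-f\rvert$. This uniform statement is in fact true and provable exactly as you sketch: writing $m=n-i$, one has $h_n(i)=(k-1)\,\frac{n}{n+1}\prod_{j=1}^{k-2}\frac{m+j}{n+j+1}$, each factor lies in $[0,1]$ and differs from $m/n$ by at most $(j+1)/n$, so the product differs from $(1-i/n)^{k-2}$ by $O(1/n)$ uniformly in $i$ (including $i=n$, which is why you do not need the paper's separate endpoint term). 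Your route is slightly more quantitative (it yields an $O(1/n)$ rate and absorbs the boundary automatically), at the cost of the extra factorization bookkeeping; the paper's DCT route is softer, needing only pointwise convergence plus the uniform bound $h_n\le k-1$. Both are correct, and both evaluate the same integral $\int_0^1 f(x)^2\,dx=(k-1)^2/(2k-3)$.
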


\begin{proof}[Proof of Proposition \ref{prop:ipr-contrast}]

\textbf{Part~(a).} Let $X$ and $X'$ be IID with $X \sim \mathrm{Bin}(n,p)$ and $p = 1/k$.
It then follows that
\begin{equation*}
	\mathrm{IPR}_n = \sum_{i = 0}^n [\Pr(X = i)]^2 = \Pr(X = X') = \Pr(S_n = 0) \,,
\end{equation*}
where $S_n = \sum_{j = 1}^n (B_j - B'_j)$ and $B_j,B'_j\stackrel{\text{IID}}{\sim}\mathrm{Bern}(p)$.
Each increment $Z := B_1 - B'_1$ takes values $\{-1,0,1\}$ and
\begin{equation*}
	\Pr(\pm1) = q := p(1 - p)\,,\quad \Pr(0) = 1 - 2q \,, \quad \mathbb{E}[Z] = 0\,,\quad \mathrm{Var}(Z) = 2q \,.
\end{equation*}
The support of $Z$ has span $1$, so one can apply the lattice local central limit theorem 
(see, e.g., {Durrett~\cite[Thm.\ 3.5.3]{DurrettPTE5} or Petrov~\cite[Ch.\ VII]{PetrovSums}}) and thereby obtain
\begin{equation*}
	\Pr(S_n = m) = \frac{1}{\sqrt{2\pi n\,\sigma^2}}\exp\!\Big(-\frac{m^2}{2\sigma^2 n}\Big) + o(n^{-1/2}) \quad\text{uniformly in }m\in\mathbb Z \,,
\end{equation*}
with $\sigma^2 = \mathrm{Var}(Z) = 2q$. For $m = 0$, this yields
\begin{equation*}
	\mathrm{IPR}_n = \Pr(S_n = 0) = \frac{1}{\sqrt{2\pi n\,\cdot\,2q}}\,(1 + o(1)) \,.
\end{equation*}
With $p = 1/k$, we have $q = (k - 1)/k^2$. Therefore,
\begin{equation*}
	\sqrt n\cdot \mathrm{IPR}_n \rightarrow \frac{k}{2\sqrt{\pi\,(k - 1)}} \,.
\end{equation*}

\textbf{Part~(b).}
	Our proof 
	has two key steps:
	we (1) establish a local limit theorem, and we then (2) use dominated convergence to justify the limit of a Riemann sum \eqref{RiemannSum} and evaluate the corresponding integral.

	Let $p_{n,i} = \P(n_1 = i)$. The exact PMF is
	\begin{equation*}
		p_{n,i} = \frac{\binom{n - i + k - 2}{k - 2}}{\binom{n + k - 1}{k - 1}} \,. 
	\end{equation*}
	The limiting density function $f(x)$ of the distribution of $n_1/n$ is the probability density function $(k - 1)(1 - x)^{k - 2}$ (where $x \in [0,1]$)
	of a $\mathrm{Beta}(1, k - 1)$ distribution.

	\paragraph{Step 1: Establishing a local limit theorem}
	
	We will show that
	\begin{equation*}
		\lim_{n\to\infty} n \cdot p_{n, \lfloor nx \rfloor} = f(x) 
	\end{equation*}
	for any fixed $x \in (0,1)$.
	For a fixed integer $m$ and large $N$, the binomial coefficient has the asymptotic behavior
	\begin{equation*}
		\binom{N}{m} \sim \frac{N^m}{m!} \sim \frac{(N - m)^m}{m!}
	\end{equation*}
	as $N \rightarrow \infty$.
	Let $i = \lfloor nx \rfloor$. As $n \to \infty$, we have $i \sim nx$ and $n - i \sim n(1 - x)$.
	The $n \rightarrow \infty$ asymptotic behavior of the denominator of $p_{n,i}$ is
	\begin{equation*}
		\binom{n + k - 1}{k - 1} \sim \frac{n^{k - 1}}{(k - 1)!} \,,
	\end{equation*}
	and the $n \rightarrow \infty$ asymptotic behavior of the numerator of $p_{n,i}$ is
	\begin{equation*} 
		\binom{n - i + k - 2}{k - 2} \sim \frac{(n - i)^{k - 2}}{(k - 2)!} \sim \frac{(n(1 - x))^{k - 2}}{(k - 2)!} \,. 
	\end{equation*}
	Taking the ratio of the numerator and the denominator then yields the asymptotic behavior of the PMF:
	\begin{equation} \label{ratio}
		p_{n,\lfloor nx \rfloor} \sim \frac{(k - 1)(1 - x)^{k - 2}}{n} = \frac{f(x)}{n} \,. 
	\end{equation}
	Multiplying \eqref{ratio} by $n$ then gives
	\begin{equation*}
		\lim_{n\to\infty} n \cdot p_{n, \lfloor nx \rfloor} = f(x) \,,
	\end{equation*}
	 which is the desired pointwise convergence.

	\paragraph{Step 2: Convergence of the sum \eqref{RiemannSum} to an integral}
	
	We want to evaluate the $n \rightarrow \infty$ limit of 	
	\begin{equation}\label{RiemannSum}
		n \cdot \mathrm{IPR}_n = n \sum_{i = 0}^n p_{n,i}^2 = \sum_{i = 0}^n (n \cdot p_{n,i})^2 \cdot \frac{1}{n} \,. 
	\end{equation}
	This expression resembles a Riemann sum of the integral of $f(x)^2$. 
	To formalize this observation, we apply the dominated convergence theorem (DCT). Consider the sequence $\left(g_n\right)_{n = 1}^\infty$ of step functions $g_n(x) = \left( n \cdot p_{n, \lfloor nx \rfloor} \right)^2$ on the interval $[0,1]$. With this sequence, we write
	\begin{equation*}
		\sum_{i = 0}^{n - 1} (n \cdot p_{n,i})^2 \cdot \frac{1}{n} = \sum_{i=0}^{n-1} \int_{i/n}^{(i+1)/n} (n \cdot p_{n,i})^2 \,dx = \int_0^1 g_n(x) \,dx \,,
	\end{equation*}
	which gives
	\begin{equation*}
		n \cdot \mathrm{IPR}_n  =  \int_0^1 g_n(x) \,dx + n p_{n,n}^2 \,.
	\end{equation*}
	By step 1, $g_n(x) \to f(x)^2$ $n \to \infty$ pointwise for $x \in (0,1)$. The step function $g_n(x)$ is dominated by a constant function, which is integrable on $[0,1]$. To see this, we note that $i \mapsto p_{n,i}$ is maximized at $i = 0$ and write
	\begin{equation*} 
		n \cdot p_{n,i} \le n \cdot p_{n,0} = n \cdot \frac{k - 1}{n + k - 1} < k - 1 \,. 
	\end{equation*}
	Therefore, for all $n \ge 1$ and all $x \in [0,1]$, the step function $g_n(x) = (n \cdot p_{n, \lfloor nx \rfloor})^2 \le (k - 1)^2$. By the DCT, we then have $\lim_{n\to\infty} \int_0^1 g_n(x) \,dx = \int_0^1 f(x)^2 \,dx$.
				
	The contribution from the endpoint $i = n$ is $n \cdot p_{n,n}^2 = n \cdot O(n^{-2(k - 1)}) = O(n^{-2k + 3})$, which vanishes as $n \to \infty$ for $k \ge 2$. We thus conclude that
	\begin{align}
		\lim_{n\to\infty} n \cdot \mathrm{IPR}_n &= \int_0^1 f(x)^2 \,dx \notag \\	 
		&= (k - 1)^2 \int_0^1 (1 - x)^{2k - 4} dx = \frac{(k - 1)^2}{2k - 3} \,. \label{appendixintegral}
	\end{align} 
	The condition $k \ge 2$ ensures that $2k - 3 \ge 1$, so the integral \eqref{appendixintegral} is well-defined.
\end{proof}

%%%%

\section*{Acknowledgements}

We thank Aaron Clauset and Peter Mucha for helpful comments.

%%%%

\bibliographystyle{siamplain} % We choose the "plain" reference style
\bibliography{CommSizePaper-v22} % Entries are in the refs.bib file

\end{document}